\documentclass{article}
\usepackage{amsmath,amssymb,amsfonts,epsfig,verbatim}
\usepackage{times}
\usepackage{color}
\usepackage{ifthen}
\usepackage{calc}
\newif\ifhyper\IfFileExists{hyperref.sty}{\hypertrue}{\hyperfalse}
%\hyperfalse
\hypertrue
\ifhyper\usepackage{hyperref}\fi
\usepackage{microtype}
\DisableLigatures{encoding = *, family = *}
\oddsidemargin=-0.1in \evensidemargin=-0.1in \topmargin=-.5in
\textheight=9in \textwidth=6.5in
\parindent=18pt

\usepackage{amsthm}
\usepackage{thmtools,thm-restate}
% \setlength{\textwidth}{6.5 in}
% \setlength{\textheight}{9in}
% \setlength{\oddsidemargin}{0in}
% \setlength{\topmargin}{0in}
% \addtolength{\textheight}{.8in}
% \addtolength{\voffset}{-.5in}
%\usemodule[translate]
%
%\translateinput[selfish][self|*|ish]
%\translateinput[halflife][half|*|life]
\newenvironment{proofof}[1]{\begin{trivlist} \item {\bf Proof
#1:~~}}
  {\qed\end{trivlist}}

\declaretheorem[numberwithin=section]{theorem}
\declaretheorem[numberlike=theorem]{lemma}
\declaretheorem[numberlike=theorem]{claim}

\declaretheorem[numberlike=theorem]{definition}
\declaretheorem[numberlike=theorem]{fact}
\declaretheorem[numberlike=theorem]{corollary}

\declaretheorem[numbered=no,name=Theorem]{thma}
\newcommand{\eps}{\epsilon}

\newcommand{\etal}{{\em et al.\ }}

\newcommand{\Stab}{{\mathbb S}}

\newcommand{\Var}{\operatorname{Var}}

\newcommand{\sgn}{\mathrm{sign}}

\newcommand{\ignore}[1]{}

\newcommand{\cref}[1]{Corollary~\ref{cor:#1}}

\newcommand{\bits}{\{-1,1\}}

\newcommand{\R}{{\mathbb{R}}}
\newcommand{\N}{{\bf N}}

\newcommand{\E}{\operatorname{{\bf E}}}

\newcommand{\Maj}{\mathrm{Maj}}
\newcommand{\Inf}{\mathrm{Inf}}

\newcommand{\wh}{\widehat}

\newcommand{\toD}{\stackrel{d}{\to}}
\newcommand{\pdiff}[2]{\frac{\partial #1}{\partial #2}}
\newcommand{\pdiffII}[3]{\ifthenelse{\equal{#2}{#3}}
{\frac{\partial^2 #1}{\partial #2^2}}
{\frac{\partial^2 #1}{\partial #2 \partial #3}}
}

% added by Ilias

\renewcommand{\Pr}{\operatorname{{\bf Pr}}}

\bibliographystyle{alpha}
% added by Rocco

\makeatletter
\renewcommand{\subsection}{\@startsection{subsection}{2}{0pt}{-6pt}{-5pt}{\normalsize\bf}}
\renewcommand{\subsubsection}{\@startsection{subsubsection}{3}{0pt}{-12pt}{-5pt}{\normalsize\bf}}
\makeatother

\date{}

\begin{document}

\title{
Majority is Stablest : Discrete and SoS}

\author{Anindya De\thanks{{\tt anindya@cs.berkeley.edu}.  Research supported by Umesh Vazirani's Templeton Foundation Grant 21674.}\\ 
University of California, Berkeley\\
\and
Elchanan Mossel\thanks{{\tt mossel@stat.berkeley.edu}. Research supported by NSF award DMS-1106999
and DOD ONR grant N000141110140  }\\ 
University of California, Berkeley\\ 
\and
Joe Neeman\thanks{{\tt jneeman@stat.berkeley.edu}. Research supported by NSF award DMS-1106999
and DOD ONR grant N000141110140 }\\ 
University of California, Berkeley\\
}

\maketitle

\setcounter{page}{0}

\thispagestyle{empty}

~
\vskip -.5in
~

\begin{abstract}
The Majority is Stablest Theorem has numerous applications in hardness of approximation and social choice theory. 
We give a new proof of the Majority is Stablest Theorem by induction on the dimension of 
the discrete cube. Unlike the previous proof, it uses neither the "invariance principle"  nor Borell's result in Gaussian space. 
 The new proof is general enough to include all previous variants of majority is stablest such as "it ain't over until it's over" 
 and "Majority is most predictable". Moreover, the new proof allows us to derive a proof of Majority is Stablest in a constant level of the
 Sum of Squares hierarchy. This implies in particular that Khot-Vishnoi instance of Max-Cut does not provide a gap instance for the Lasserre hierarchy.

\end{abstract}

\newpage

\renewcommand{\P}{{\bf P}}
%\subsection{abstract}
%{\bf I was taught not to put references in abstracts}

\section{Introduction}

The proof of the Majority is Stablest Theorem~\cite{MOO:10} affirmed a conjecture in hardness of 
approximation~\cite{KKMO07} and in social choice~\cite{Kalai:02}. The result has been since extensively used in the two areas. 
One of the surprising features of the proof of~\cite{MOO:10} is the crucial use of deep results in Gaussian analysis~\cite{Borell:85} and an ``Invariance Principle" that allows to deduce the discrete result from the Gaussian one. 

Since the statement of the Majority is Stablest Theorem~\cite{MOO:10} deals with functions on the discrete cube, it is natural to ask (as many have) if there is a ``discrete proof" of the statement that Majority is Stablest. In this paper we answer this question affirmatively and provide a short general proof of the Majority is Stablest Theorem. The proof does not rely on Borell's result, nor does it rely on the ``Invariance Principle". 

We also show how the new proof can be transformed into a "Sum of Squares" proof of the Majority is Stablest Theorem, thus showing that Khot-Vishnoi instance of Max-cut~\cite{KV05} does not provide an integrality gap instance for Max-cut in the Lasserre hierarchy. 
\subsection{Functions with low influence variables}
In discrete Fourier analysis, special attention is devoted to functions $f : \bits^n
\to \{0,1\}$ with low influences. The $i$th influence of $f$ is defined by
\begin{equation} \label{eqn:classical-influence}
\Inf_i(f) = \P[f(x_1, \dots, x_n) \neq f(x_1, \dots, x_{i-1},
-x_i, x_{i+1}, \dots, x_n)], 
\end{equation}
where $\P$ denotes the uniform distribution on the discrete cube. 

Functions with low influences have played a crucial role in the development of the theory of discrete Fourier analysis. 
Starting with 
Kahn, Kalai, and Linial~\cite{KKL:88,Talagrand:94,FriedgutKalai:96}, the use of hyper-contractive estimates applied to low influence variables is one of the main techniques in discrete Fourier analysis. 

Of particular interest are functions all of whose influences are low. 
The work of Friedgut and Kalai~\cite{FriedgutKalai:96} shows that low influence functions have sharp thresholds. 
Central work in theoretical computer science~\cite{Bourgain:02,DinurSafra:05,SamorodnitskyTrevisan:06} pointed to the importance of low influence functions, including in the context of the ``Unique Games Conjecture"~\cite{Khot:02,KKMO07}. Such functions have also attracted much interest in the theory of social choice, see e.g.~\cite{FelsenthalMachover:98,Kalai:04}. 

In the context of voting it is natural to exclude voting schemes that give individual voters too much power. 
The same is true in the theory of hardness of approximation where a central concept is to distinguish between functions that really depend on many variables versus those who have a strong dependency on a small number of variables, see e.g.~\cite{Hastad:97,Khot:02,DinurSafra:05}. 

The Majority is Stablest theorem has been crucial in developments in both hardness of approximation and 
the theory of social choice. The theorem considers the correlation between $f(x)$ and $f(y)$ where $x,y \in \bits^n$ are $\rho$-correlated vectors with $\rho>0$. Assuming $\E[f] = 1/2$, the function that maximizes $\E[f(x)f(y)+ (1-f(x)) (1-f(y))]$ is a dictator function. The majority is stablest theorem states that for functions with low influences the value of $\E[f(x) f(y) + (1-f(x)) (1-f(y))]$ cannot be much larger than the corresponding value for the majority function. More formally,

\begin{definition} 
For $\rho \in (-1,1)$, the \emph{noise stability} of $f : \bits^n
\to \R$ at $\rho$ is defined to be
\[
\mathrm{Stab}_{\rho}(f) := \E[f(x)f(y)+ (1-f(x)) (1-f(y)) ], 
\]
when $(x,y) \in \bits^n \times
\bits^n$ is chosen so that $(x_i,y_i) \in \bits^2$ are independent
random variables with $\E[x_i] = \E[y_i] = 0$ and $\E[x_i y_i] =
\rho$.
\end{definition}

\begin{theorem}\emph{\textbf{``Majority Is Stablest''~\cite{MOO:10}}} \label{thm:MIST_intro}
Let $0 \leq \rho \leq 1$ and $\eps > 0$ be given.  Then there
exists $\tau > 0$ such that if $f : \bits^n \to [0,1]$ satisfies
$\E[f] = 1/2$ and $\Inf_i(f) \leq \tau$ for all $i$, then
\[
\mathrm{Stab}_{\rho}(f) \leq 1 - {\textstyle \frac{\arccos \rho}{\pi}}  + \eps.
\]
\end{theorem}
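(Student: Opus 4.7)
The plan is to prove the theorem by induction on the dimension $n$, decomposing $f$ along its last coordinate. Write
\[
f(x_1,\ldots,x_n) \;=\; f_e(x') \,+\, x_n \, f_o(x'),
\]
where $x' = (x_1,\ldots,x_{n-1})$, $f_e(x') = \tfrac{1}{2}[f(x',1)+f(x',-1)]$ and $f_o(x') = \tfrac{1}{2}[f(x',1)-f(x',-1)]$. Then $f_e:\bits^{n-1}\to[0,1]$ has mean $1/2$, and a short Fourier computation (using $\widehat{f}(S) = \widehat{f_e}(S)$ when $n\notin S$ and $\widehat{f}(S) = \widehat{f_o}(S\setminus\{n\})$ when $n\in S$) gives $\Inf_j(f_e),\Inf_j(f_o) \leq \Inf_j(f) \leq \tau$ for every $j \leq n-1$, so the inductive hypothesis is applicable to $f_e$.

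The same Fourier bookkeeping yields the exact identity
\[
\mathrm{Stab}_\rho(f) \;=\; \mathrm{Stab}_\rho(f_e) \,+\, 2\rho\,\langle f_o, T_\rho f_o\rangle,
\]
where $T_\rho$ denotes the Bonami--Beckner noise operator on $\bits^{n-1}$ and the inner product is under the uniform measure. A naive induction that applies the hypothesis to $f_e$ and discards the second term cannot succeed: the quantity $2\rho\langle f_o, T_\rho f_o\rangle$ is nonnegative, and for natural examples such as $f=\Maj_n$ it contributes a constant fraction of the total noise stability. One therefore has to strengthen the inductive invariant so that the $f_o$-contribution is absorbed on both sides of the comparison.

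The natural strengthening is to carry through the induction a pointwise comparison of the form
\[
\E\bigl[F_\rho(f(x), f(y))\bigr] \;\leq\; G_\rho(\E[f]) \;+\; \delta(\tau),
\]
where $F_\rho:[0,1]^2\to\R$ and $G_\rho:[0,1]\to\R$ encode the extremal Gaussian half-space value (in particular $G_\rho(1/2) = 1-(\arccos\rho)/\pi$), and $\delta(\tau)\to 0$ as $\tau\to 0$. The point is to choose $F_\rho$ so that on a single coordinate $f:\bits\to[0,1]$ with $f(\pm 1) = a\pm b$ a sharp two-point inequality holds: the explicit expression $\mathrm{Stab}_\rho(f) = a^2+(1-a)^2+2\rho b^2$ is dominated by an expression of the form $G_\rho(a) + 2\rho\,h(a,b)$ such that, when tensorised through the $(f_e,f_o)$-decomposition, the $h$-term on the right absorbs the extra $2\rho\langle f_o, T_\rho f_o\rangle$ on the left.

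The main obstacle will be to pin down the functional $F_\rho$ (or equivalently the correct convex/concave envelope of $\mathrm{Stab}_\rho$) and verify the sharp two-point inequality with a second-order error that is consistent with the $b$-scale one gets from hypercontractivity for low-influence functions. Once this two-point inequality is in place the induction is routine: apply the inductive bound to $f_e$, use the two-point inequality to absorb the $f_o$-term, and choose $\tau = \tau(\eps,\rho)$ small enough that the accumulated error over all $n$ steps is at most $\eps$, with standard Fourier estimates on low-degree influences controlling the aggregation independently of $n$. The same scheme, with $F_\rho$ and $G_\rho$ replaced by their analogues, should yield the ``it ain't over until it's over'' and ``Majority is most predictable'' variants, and the purely local nature of the two-point step is what should allow the argument to be recast inside a constant level of the Sum-of-Squares hierarchy.
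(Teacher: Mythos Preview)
Your high-level plan is the paper's plan: prove a strengthened functional inequality by induction on $n$, with a two-point base case and hypercontractive control of the accumulated error. You have also correctly located the crux --- the unknown functional $F_\rho$ --- and flagged it as ``the main obstacle.'' But that means what you have is a strategy, not a proof: everything turns on that choice, and you have not made it.

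The paper's choice is $F_\rho(x,y)=J_\rho(x,y):=\Pr[X\le\Phi^{-1}(x),\,Y\le\Phi^{-1}(y)]$ for standard $\rho$-correlated Gaussians $(X,Y)$, with $G_\rho(a)=J_\rho(a,a)$; Sheppard's formula gives $J_\rho(\tfrac12,\tfrac12)=\tfrac12-\tfrac{\arccos\rho}{2\pi}$, and since $J_\rho(x,y)\ge xy$ one recovers $\mathrm{Stab}_\rho$ at the end. The two-point inequality is: for any $[\epsilon,1-\epsilon]$-valued random variables $X,Y$ with correlation in $[0,\rho]$,
\[
\E J_\rho(X,Y)\;\le\;J_\rho(\E X,\E Y)\;+\;C(\rho)\,\epsilon^{-C(\rho)}\bigl(\E|X-\E X|^3+\E|Y-\E Y|^3\bigr).
\]
This follows from Taylor's theorem once one checks that the matrix $\bigl(\begin{smallmatrix}\partial_{xx}J_\rho&\sigma\,\partial_{xy}J_\rho\\\sigma\,\partial_{xy}J_\rho&\partial_{yy}J_\rho\end{smallmatrix}\bigr)$ is negative semidefinite for every $0\le\sigma\le\rho$ --- a short calculus computation from the explicit form of $J_\rho$. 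The cubic remainders accumulate over the induction into a quantity $\Delta_n(f)$, and after a mild smoothing $f\mapsto T_{1-\eta}f$ this is bounded by $(\max_i\Inf_i f)^{c\eta}$ via Bonami--Beckner, exactly as you anticipate.

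One structural point: you propose applying the inductive hypothesis to $f_e$ and then separately absorbing the $f_o$-term. The paper instead conditions on the last coordinate, applies the $(n{-}1)$-dimensional inequality to each restriction $f_{X_n}$, and then applies the one-dimensional base case to the single-variable function $X_n\mapsto\E[f_{X_n}\mid X_n]$. Both steps invoke the \emph{same} $J_\rho$-inequality, and the Hessian condition in the base case is precisely what swallows the ``$2\rho\langle f_o,T_\rho f_o\rangle$'' contribution automatically --- no ad hoc absorption is needed. Your $(f_e,f_o)$ bookkeeping carries equivalent information, but organising the induction around restrictions makes the tensorisation transparent and is also what makes the SoS translation go through.
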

By Sheppard's Formula~\cite{Sheppard:99}, the quantity
$1- \frac{\arccos \rho}{\pi}$ is precisely $\lim_{n \to \infty}
\mathrm{Stab}_{\rho}(\Maj_n)$, where 
\[\Maj_n(x_1,\ldots,x_n) = \sgn(\sum_{i=1}^n x_i),
\]

We also remark here that Theorem~\ref{thm:MIST_intro} readily generalizes 
to the case when $\E[f] \not =1/2$ with the right hand side replaced by the corresponding quantity 
for a shifted majority with the same measure. 
This statement of Majority is Stablest was conjectured in~\cite{KKMO07} in 
the context of  hardness of approximation
for Max-Cut. By assuming that Theorem~\ref{thm:MIST_intro} holds, the authors
showed that it is `Unique Games-hard''  to approximate
the maximum cut in graphs to within a factor greater than
$.87856\!\dots$. This result is optimal, since the efficient algorithm of Goemans and
Williamson~\cite{GoemansWilliamson:95} is guaranteed to find
partitions that cut a $.87856\!\dots$ fraction of the maximum.
A closely related conjecture (for $\rho=-1/3$) was made 
by Kalai in the context of Arrow's Impossibility Theorem~\cite{Kalai:02}. 
The results of~\cite{MOO:10} imply Kalai's conjecture and show that Majority minimizes 
the probability of Arrow's paradox in ranking $3$ alternatives using a balanced ranking function $f$. 
See ~\cite{Kalai:02,MOO:10} for more details.

The {\em statement} of the theorem deals with Boolean functions, yet {\em the proof} of~\cite{MOO:10} crucially relies on Gaussian analysis as (a) it uses a deep result (with a hard proof) of Borell~\cite{Borell:85} on
noise stability in Gaussian space and 
(b) it uses the invariance principle developed in~\cite{MOO:10} that allows to deduce discrete statements from Gaussian statements. This raises the following natural (informal) question:

\paragraph{Question:} Is there a "discrete" proof of Majority is Stablest?\\

In other words, does there exist a proof of Majority is Stablest not using Borell's result? or any other result in Gaussian space? 
We note that almost all prior results in discrete Fourier analysis do not use Gaussian results. In particular, the classical hyper-contractive estimates~\cite{Bonami:70,Bec75} are proved by induction on dimension in the discrete cube. Moreover, most of the results in the area starting from KKL including~\cite{KKL:88,Talagrand:94,FriedgutKalai:96,Bourgain:02} do not require sophisticated results in Gaussian geometry. 

In our main result we provide a positive answer to the question above. Informally we show that

\paragraph{Main Result:} There is a proof of Majority is Stablest by induction on dimension.\\ 

Our proof is short and elegant and involves only elementary calculus and hyper-contractivity. The main difficulty in the proof is finding the right statement to prove by induction. The induction statement involves a certain function $J$, which was recently used in the derivation of a robust version of Borell's result and Majority is Stablest~\cite{MosselNeeman:12b} using Gaussian techniques and the invariance principle. 

In a way, our results here are an analogue of Bobkov's famous inequality in the discrete cube~\cite{Bobkov:97b}. Bobkov proved by induction a discrete functional inequality that at the limit becomes the Gaussian isoperimetric inequality.  Moreover, Bobkov's functional is crucial for the semi-group proof of the Gaussian isoperimetric inequality. In~\cite{MosselNeeman:12b} a functional version of Borell's result is defined and proved using the "semi-group" method in Gaussian space. Here we prove a discrete version of the same functional inequality. 

It is well known that the Majority is Stablest Theorem implies Borell's result. Here we show how this can be done by elementary methods only (our proof of Borell's result does not even require hyper-contractivity!). Our proof of Borell's result joins a number of recent proof of the result including using spherical symmetrization, see e.g.~\cite{IsakssonMossel:12}, sub-additivity~\cite{KindlerODonnell:12} and a semi-group proof~\cite{MosselNeeman:12b}. It is the simplest proof of Borell's result using elementary 
 arguments only (\cite{IsakssonMossel:12} uses sophisticated spherical re-arrangement inequalities, \cite{KindlerODonnell:12} only works for sets of measure $1/2$ and certain noise values and~\cite{MosselNeeman:12b} requires basic facts on the Orenstein-Uhlenbeck process). 

Since it was proved, Theorem~\ref{thm:MIST_intro} was generalized a number of times including in~\cite{DiMoRe:06,Mossel:10}. 
The results and their generalization have been used numerous times in hardness of approximation and social choice including in~\cite{Austrin07, OW:08, Rag:08, Mossel-arrow:12, FKN:08} . Our simple proof extends to cover all of the generalization above. It also enables to prove an SoS version of the statement of Majority is Stablest, thus answering the main open problem of \cite{OZ:12}  as we discuss next.

\subsection{Sum of Squares proof system} We now discuss an application of our new proof of Majority is Stablest to hardness of approximation. To discuss the application, we will first need to introduce the ``Sum of Squares" (SoS) proof system. In a nutshell, the SoS proof system is an algebraic proof system (introduced by Grigoriev and Vorobjov~\cite{GV01}) where constraints are encoded by polynomial (in)equalities and the deduction rules are specified by a restricted class of polynomial operations. Viewing this proof system as a refutation system for polynomial inequalities, the goal is to show that the given system of constraints is infeasible by using the allowed polynomial operations to arrive at a polynomial constraint which is ``obviously" infeasible. 

Without further ado, we introduce the following notation: let $X = (X_1, \ldots, X_n)$ be a sequence of variables and let $\mathbb{R}[X]$ be the ring of polynomials on $X$. Let $A = \{p_1 \ge 0, \ldots, p_m \ge 0\}$ be a set of constraints (on $X=(X_1, \ldots , X_n)$). Also, let $\mathbb{M}[X] \subset \mathbb{R}[X]$ be the set of polynomials which can be expressed as sums-of-squares. In other words, $q \in  \mathbb{M}[X] $ if and only if $q= r_1^2 + \ldots +r_{\ell}^2$ where $r_1, \ldots, r_{\ell} \in \mathbb{R}[X]$. For $S \subseteq [m]$, we use $p_S$ to denote $\prod_{i \in S} p_i$ with $p_{\emptyset} =1$. Now, suppose that for all $S \subseteq [m]$
there exists $q_S \in  \mathbb{M}[X] $ such that
$$
-1 = \sum_{S \subseteq [m]} p_S \cdot q_S
$$
Then, it is clear that the constraint set $A$ is infeasible over $\mathbb{R}^n$. The surprisingly powerful theorem of Stengle~\cite{Ste:74} (and earlier shown by Krivine~\cite{Kri64}) shows that whenever $A$ is infeasible, such a certificate of infeasibility always exists. This theorem is known as Stengle's Positivstellensatz. In fact, provided a certain compactness condition holds, the certificate of infeasibility (i.e. the set $\{q_S : S \subseteq [m]\}$) can always be assumed to have $q_S =0$ for $|S|>1$; this is due to Putinar~\cite{Putinar:93}. 

While these results were well-known in the algebraic geometry community and are intimately tied to Hilbert's seventeenth problem~\cite{Hil:88}, the interest in the theoretical computer science community is relatively new. The first  to view Stengle's positivstellensatz as a proof system for refutation were Grigoriev and Vorobjov~\cite{GV01} (It should be mentioned that an earlier paper \cite{LMR:96} also considered the proof theoretic aspects of Positivstellensatz but no attempt was made to quantify the complexity of such proofs). From the point of view of complexity theory, it is interesting to consider restricted proof systems where one only looks at proofs of refutation where $\max \deg (p_S \cdot q_S) \le d$. We refer to this  as the degree-$d$ SoS hierarchy. This is essentially the dual of $d/2$-level of the Lasserre hierarchy~\cite{Las:01}. 

The reason to consider the degree-$d$ SoS hierarchy is that while one loses completeness (i.e. infeasible constraint sets $A$ may not have a proof of refutation in the degree-$d$ SoS hierarchy for a fixed $d$), the degree-$d$ SoS hierarchy is \emph{effective} in the sense that if the set $A$ has a proof of infeasibility of degree $d$, then it can be found in time $O(m \cdot n^{O(d)})$ using semidefinite programming (see Parrillo~\cite{Par:00} and Lasserre~\cite{Las:01}). It should be mentioned that the so called \emph{Lasserre hierarchy}~\cite{Las:01}  and the SoS hierarchy are essentially duals of each other. So, for the subsequent discussion, whenever we use the term Lasserre hierarchy, we mean the Lasserre / SoS hierarchy.

Given that the degree-$d$ SoS hierarchy is automatizable, several researchers tried to understand the limitations of its power. Grigoriev~\cite{Gri01} showed linear lower bounds for proofs of refutation of Tseitin tautologies and the $mod \ 2$ principle.  The latter result was essentially rediscovered by Schoenebeck in the Lasserre world independently~\cite{Schoe:08}. 

%While the theorem was well-known in the algebraic geometry community,  in the complexity theory community, the first paper to view the  above as a proof system was made by Grigoriev and Vorobjov~\cite{GV01} (It should be mentioned that an earlier paper \cite{LMR:96} also considered the proof theoretic aspects of Positivstellensatz but no attempt was made to quantify the complexity of such proofs).  From the point of view of complexity theory, it becomes interesting to consider systems whose infeasibility can be certified where $\max \deg(p_i \cdot q_i)$ is bounded by $d$. While an infeasible set of constraints $A$ need not necessarily have a proof of infeasibility of the above kind with degree bounded by any fixed $d$, the algorithmic advantage that one gets by considering these restricted proof system is that such proofs (if they exist) can be found in time $O(m \cdot n^{O(d)})$ using semidefinite programming (see Parrillo~\cite{Par:00} and Lasserre~\cite{Las:01}). From now on, we will refer to this as the level-$d$ SoS hierarchy. 

\emph{Applications to  hardness of approximation:} While the results of Parillo~\cite{Par:00} and Lasserre~\cite{Las:01} have been known for more than a decade, there were only a few works  in the theoretical computer science community which harnessed the algorithmic power of \cite{Par:00, Las:01} (see \cite{BRS:11, CS08}). In fact, for the results which did use Lasserre hierarchy, it was not clear if the full power of Lasserre hierarchy was required, or whether weaker hierarchies, like the one of Lovasz and Schrijver, would suffice.

However, in a recent exciting paper, Barak \etal \cite{BBHKSZ:12} used the degree-$8$ SoS hierarchy to refute the known integrality gap instances for Unique Games~\cite{KV05, RS09, KPS10}. In other words, there are degree $8$ SoS proofs which can be used to certify that the true value of the integrality gap instances is $o(1)$. This is interesting for two reasons. The first is that even after a decade of intense investigation, these integrality gaps remained essentially the only evidence towards the truth of the Unique Games Conjecture (UGC). Thus, the SoS hierarchy discredits these instances as evidence towards the truth of the UGC. The second reason is that these integrality gaps were known to survive $\Omega( (\log \log n)^{1/4})$ rounds of weaker hierarchies like ``SDP  + Sherali Adams" \cite{RS09} or ``Approximate Lasserre"~\cite{KS09}. Thus, this showed a big gap between the Lasserre/SoS hierarchy and the weaker hierarchies like ``SDP+Sherali Adams" or ``Approximate Lasserre". 

We now mention the main idea behind showing that degree-$8$ SoS hierarchy refutes the known integrality gap instances for Unique Games~\cite{KV05, RS09, KPS10}.
Analyzing the true optimum of these instances  uses tools from analysis like hypercontractivity~\cite{Bonami:70, Bec75}, the KKL theorem~\cite{KKL:88} etc.  Hence, to show that degree $d$-SoS hierarchy can refute these instances, one essentially needs to prove SoS versions of these statements in the degree-$d$ SoS hierarchy. Note that so far we have only viewed the SoS as a refutation system, but in fact, as we will see a little later, there is an easy extension of the earlier definition, which formalizes the notion of proving a statement in the degree $d$-SoS hierarchy. In particular, \cite{BBHKSZ:12} prove SoS versions of  results like hypercontractivity,  small-set expansion etc. 

Extending the results of  \cite{BBHKSZ:12}, O'Donnell and Zhou~\cite{OZ:12} analyze the  problems ``upward" of unique games like \textsf{MAX-CUT} and
\textsf{BALANCED-SEPARATOR}. In particular, \cite{OZ:12} refutes the integrality gap instances of balanced separator from \cite{DKSV06}. Since the key to analyzing the optimum of the \textsf{BALANCED-SEPARATOR} instances in \cite{DKSV06} is the KKL theorem~\cite{KKL:88}, the authors provide a proof of the KKL theorem in the degree-$4$ SoS hierarchy. For \textsf{MAX-CUT}, their results are somewhat less powerful. Again, here they analyze the instances of \textsf{MAX-CUT} from \cite{KV05}.  More precisely, for any $\rho \in (-1,0)$, \cite{KV05} construct gap-instances of MAX-CUT where the true optimum is $\arccos \rho/ \pi + o(1)$ whereas the basic SDP-optimum is $(1-\rho)/2 + o(1)$. The key to analyzing the true optimum is the Majority is Stablest theorem of \cite{MOO:10}. Thus, to refute these instances completely i.e. show that the true optimum is $\arccos \rho/ \pi + o(1)$, the authors essentially needed to prove the Majority is Stablest theorem in some constant degree-$d$ SoS hierarchy. While the authors could not prove that, they do manage to prove the weaker ``$2/\pi$" theorem from \cite{KKMO07} in (some constant degree of) the SoS hierarchy. This implies that the SoS hierarchy can certify that the true optimum  is at most $(1/2 -\rho/\pi) -(1/2-1/\pi) \rho^3$.  They left open the problem of refuting this gap instances optimally i.e. showing that constant number of rounds of the SoS hierarchy can certify that the true optimum of these gap instances is $\arccos \rho/ \pi + o(1)$. In this paper, as the main application of the new proof of Majority is Stablest, we resolve this problem. 

It should be mentioned here that while the new proof of Majority is Stablest is more suitable for the SoS hierarchy, several powerful theorems and techniques are needed to achieve this adaptation. For example we use results from approximation theory~\cite{Lorentz:86} and a powerful matrix version of Putinar's Positivstellensatz~\cite{Lasserre:2010} to prove that a certain polynomial approximation preserves positiveness. We mention here that unlike the previous two papers \cite{BBHKSZ:12, OZ:12} connecting SoS hierarchy with hardness of approximation, we make essential use of Putinar's Positivstellensatz (i.e. essentially the completeness of the SoS hierarchy). The following is the main theorem concerning the power of SoS hierarchy on \textsf{MAX-CUT} instances.
\begin{thma}\emph{\textbf{SoS-version of MAX-CUT}}
 For every $\delta \in (0,1)$ and $\rho \in (-1,0)$, $\exists d = d(\delta, \rho)$ such that the degree-$d$ SoS hierarchy can certify that the \textsf{MAX-CUT} instances from \cite{KV05} with noise $\rho$ have true optimum less than $\arccos \rho/ \pi + \delta$. 
\end{thma}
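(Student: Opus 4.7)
The plan is to take the new ``discrete'' proof of Majority is Stablest described in the introduction and transport it into the degree-$d$ SoS proof system, then convert the resulting SoS proof of Majority is Stablest into a refutation of the Khot--Vishnoi instances. First, I would formulate the statement of Majority is Stablest in polynomial form: for $f:\bn \to [0,1]$, the noise stability $\Stab_\rho(f)$ is already a polynomial of degree $2$ in the values $\{f(x)\}_{x \in \bn}$, the constraint $\E[f]=1/2$ is linear, and each influence upper bound $\Inf_i(f)\le \tau$ is polynomial. The goal is then to certify, within the degree-$d$ SoS proof system over these polynomial constraints, the upper bound $\Stab_\rho(f) \le 1 - (\arccos\rho)/\pi + \delta$.

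The central obstacle is that the inductive invariant from the new proof involves the auxiliary function $J$, which is transcendental, whereas the SoS proof system only manipulates polynomials. One therefore replaces $J$ by a polynomial surrogate $\tilde J$ of bounded degree (depending on $\delta$ and $\rho$) that (i) approximates $J$ pointwise on the relevant compact interval to within $\delta$, and (ii) preserves the derivative and sign conditions that made $J$ the correct inductive quantity. Producing such a $\tilde J$ is where the approximation-theoretic machinery of Lorentz~\cite{Lorentz:86} enters: classical results on simultaneous polynomial approximation of smooth functions together with several of their derivatives, combined with direct estimates on the modulus of continuity of $J$, yield a polynomial whose derivatives track those of $J$ up to the required order.

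Next comes the conversion of the analytic inequalities from the new proof into SoS certificates. After replacing $J$ by $\tilde J$, the induction step of the new proof becomes a statement that a specific polynomial $P(\rho,\cdot)$ is strictly positive on a semialgebraic compact box coming from the range of conditional expectations of $f$. Here one invokes the matrix version of Putinar's Positivstellensatz~\cite{Lasserre:2010}: since the constraint set is Archimedean and $P$ is strictly positive once the approximation slack from (i) is absorbed into $\delta$, one obtains a sum-of-squares certificate of degree bounded purely in terms of $\delta$ and $\rho$. With the inductive step certified, the outer induction on the dimension $n$ proceeds coordinate by coordinate: each step tensors the previous SoS identity with the one-variable identity just established, and because tensoring sums of squares is still a sum of squares and the noise operator acts polynomially on each coordinate, the entire argument lifts to a degree-$d$ SoS proof with $d = d(\delta,\rho)$ independent of $n$.

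Finally, to refute the Khot--Vishnoi MAX-CUT gap instance at correlation $\rho$, one uses the standard reduction expressing the cut value as a noise-stability-like polynomial of the indicator of one side of the cut. Viewing a degree-$d$ pseudoexpectation $\tilde\E$ as a linear functional obeying the polynomial constraints, and applying the SoS proof of Majority is Stablest to the appropriate long-code restriction, yields $\tilde\E[\Stab_\rho(f)] \le 1 - (\arccos\rho)/\pi + \delta$, whence the certified MAX-CUT value is at most $(\arccos\rho)/\pi + \delta$. I expect the hardest step to be step (ii) above: the inductive invariant requires sharp sign conditions on higher derivatives of $J$ along certain $\rho$-dependent curves, and arranging that these sign conditions survive polynomial truncation while still matching the strictness needed for Putinar's theorem to apply at bounded degree is where the approximation estimates of Lorentz and the matrix Putinar Positivstellensatz have to be combined delicately.
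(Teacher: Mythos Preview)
Your outline of the SoS proof of Majority is Stablest is essentially correct and matches the paper's approach: polynomial approximation of $J_\rho$ via Bernstein-type results, the matrix Putinar Positivstellensatz to certify the Hessian sign condition in the one-step (base-case) inequality, and then tensorization by induction on the coordinates. Two small corrections on that part: (a) for $\rho \in (-1,0)$ the relevant inequality is a \emph{lower} bound on $\Stab_\rho(f)$, not an upper bound, and the constraint $\E[f]=1/2$ is not needed; (b) the paper does \emph{not} impose $\Inf_i(f)\le\tau$ as an axiom---instead the conclusion carries an explicit error term $c(\delta,\rho)\sum_i (\Inf_i^{\le d_1}(f))^2$. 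This distinction matters for the application.

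The genuine gap is in your last paragraph. Applying the SoS Majority is Stablest to the long-code functions $g_u$ arising in the KKMO reduction only yields
\[
\Stab_\rho(g_u) \ge 1 - \frac{\arccos\rho}{\pi} - \delta - c(\delta,\rho)\sum_{i}\big(\Inf_i^{\le d_1}(g_u)\big)^2,
\]
and you still have to certify, \emph{in SoS}, that the influence error term $\E_u\sum_i (\Inf_i^{\le d_1}(g_u))^2$ is $o(1)$. For arbitrary $[0,1]$-valued $g_u$ this is simply false; it holds for the Khot--Vishnoi instances only because the underlying Unique Games instance has tiny true optimum. The paper closes this gap by invoking the degree-$4$ SoS refutation of the KV Unique Games instances due to Barak et al.\ \cite{BBHKSZ:12} (stated here as Theorem~\ref{thm:UG-refutation}), which SoS-certifies $\E_u \sum_i (\E_{(v,\pi)} x_{v,\pi(i)})^2 \le N^{-\Omega(\eta)}$; composed with the KKMO reduction this bounds exactly the influence sum above. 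Without this extra black box, your final step does not go through: you cannot conclude the certified cut value is at most $(\arccos\rho)/\pi + \delta$ from Majority is Stablest alone.
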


As the key intermediate theorem, we establish a SoS version of the well-known version of the Majority is Stablest theorem (for $\rho \in (-1,0)$) which is stated next informally. 
\begin{thma}\emph{\textbf{SoS-version of Majority is Stablest}}
For every $\delta \in (0,1)$ and $\rho \in (-1,0)$, there are constants $c= c(\delta, \rho)$ and $d=d(\delta, \rho)$ such that the following is true : Let $0 \le f(x) \le 1$ for all $x \in \{-1,1\}^n$ and $\mathrm{\max_i  Inf_i(f)} \le \tau $. There is a degree-$d$ SoS proof of the statement $\mathrm{Stab}_{\rho}(f) \ge 1 - \arccos \rho/ \pi - \delta - c \cdot \tau$.   \end{thma}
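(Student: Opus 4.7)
The plan is to lift the new inductive proof of Theorem~\ref{thm:MIST_intro} (in its $\rho \in (-1,0)$ form, where the conclusion becomes a \emph{lower} bound on $\mathrm{Stab}_\rho(f)$) to a constant-degree SoS certificate. Recall that the new proof proceeds by induction on the dimension $n$, with invariant a functional inequality involving the specific nonlinear function $J$ from~\cite{MosselNeeman:12b}. The goal is therefore to encode each inductive step as a polynomial inequality that itself admits a bounded-degree SoS certificate, and then to accumulate these certificates across the $n$ coordinates without blowing up the degree.

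The first obstacle is that $J$ is transcendental, while SoS certificates are polynomial by definition. I would invoke the Bernstein-type approximation results collected in~\cite{Lorentz:86} to produce a polynomial $p$ of degree $k = k(\delta,\rho)$ that uniformly approximates $J$ on the relevant interval to error $\delta/C$, while preserving the sign and convexity facts used in the induction. The key observation making this approximation tolerable is that the inductive inequality holds in~\cite{MosselNeeman:12b}'s formalism with quantitative slack depending on $(\delta, \rho)$, so that a sufficiently fine polynomial approximation inherits the inequality while remaining tractable in SoS.

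The second step is to certify the polynomial analog of the single-coordinate induction step inside the SoS proof system. Each such step is a strict polynomial inequality on the compact semialgebraic set cut out by $x_i^2 = 1$ together with $0 \le f(x) \le 1$, asserting, roughly, that replacing one scalar argument by a $\rho$-noised $\pm 1$ input perturbs a polynomial functional by at most an influence-weighted term. Since this inequality is strictly positive with margin $\Omega(\delta)$, the matrix form of Putinar's Positivstellensatz from~\cite{Lasserre:2010} guarantees an SoS certificate of finite degree $d_0 = d_0(\delta, \rho)$, \emph{independent of $n$}. Composing $n$ copies of this single-coordinate certificate (one per coordinate, combined via nonnegative multipliers that do not inflate SoS degree) yields a degree-$d$ SoS proof of the global functional inequality, and hence of the desired lower bound on $\mathrm{Stab}_\rho(f)$.

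The hardest part will be quantifying the SoS degree: Putinar's Positivstellensatz is nonconstructive in $d$, so the bound $d = d(\delta,\rho)$ flows entirely from the strict positivity margin $\Omega(\delta)$ of the single-step polynomial inequality, which controls the Lasserre level at which matrix Putinar engages. A secondary subtlety is that the polynomial approximation $p$ must be good enough not to erode this margin, so the approximation degree $k$ and the base SoS degree $d_0$ are coupled, but both remain functions of $(\delta,\rho)$ alone. Finally, the additive $c \cdot \tau$ term is absorbed at each single-coordinate step by invoking the SoS proof of hypercontractivity from~\cite{BBHKSZ:12} to convert the max-influence hypothesis into polynomial slack; this slack accumulates additively across the $n$ coordinates, producing the linear-in-$\tau$ error in the final bound.
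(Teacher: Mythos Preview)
Your proposal follows the paper's route closely: Bernstein approximation of $J$ by a polynomial $\tilde J$, the matrix Putinar Positivstellensatz for the Hessian positivity in the one-coordinate base case, inductive tensorization across coordinates, and SoS hypercontractivity for the accumulated error. There is, however, a genuine gap in the last step.

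You write that the $c\cdot\tau$ term ``is absorbed at each single-coordinate step by invoking the SoS proof of hypercontractivity,'' and that the slack ``accumulates additively across the $n$ coordinates.'' That does not work as stated. After tensorization the accumulated error is a sum of fourth moments $\sum_{i}\E_{x\in\{-1,1\}^{i}}[\hat g_x^{4}(n-i)]$, where $\hat g_x(n-i)$, viewed as a polynomial in the restriction variables $x$, has degree up to $i$. SoS hypercontractivity gives $\E[\hat g_x^4]\le 9^{\deg}\,(\E[\hat g_x^2])^2$, so applied directly the constant is $9^{i}$ and the sum blows up exponentially in $n$; no $c(\delta,\rho)$ independent of $n$ comes out.

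The paper fixes this with two preprocessing moves you do not mention. First, it replaces $f$ by $g=T_{1-\eta}\bigl((1-\epsilon)f+\epsilon/2\bigr)$: the affine shift puts the range in $[\epsilon,1-\epsilon]$ (needed so that the matrix-Putinar step has strict positivity margin, since the exact Hessian of $J_\rho$ is only semidefinite), and the noise operator $T_{1-\eta}$ damps high-degree Fourier mass. Second, it splits $g=h+\ell$ into a high-degree part $h$ (degree $>d_\eta=(1/\eta)\log(1/\eta)$) and a low-degree part $\ell$. The restriction coefficients of $\ell$ have degree $\le d_\eta$, so SoS hypercontractivity applies with the dimension-free constant $9^{d_\eta}$ and produces the $\sum_i(\Inf_i^{\le d_\eta}f)^2$ term; the high-degree part $h$ has small $L^2$ mass thanks to $T_{1-\eta}$, and the cross terms are handled by SoS Cauchy--Schwarz (the paper's Claims~\ref{clm:final-1}--\ref{clm:final-3}). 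The paper explicitly flags this decomposition as the delicate point of the argument. You should add the $T_{1-\eta}$ smoothening and the high/low-degree split to your plan; without them, the error-bounding step fails.
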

Our proof can be easily modified to give the analogous statement of Majority is Stablest when $\rho \in (0,1)$. Of course, we have to change the direction of the inequality as well as impose the condition that $\mathbf{E}[f]=1/2$ (this condition is not required when $\rho \in (-1,0)$).

As the reader can see, the theorem is stated very informally. This is because SoS proofs are heavy in notation and its difficult to express the precise statement without having the proper notation. However, we do  remark that SoS version of \textsf{MAX-CUT} follows easily by composing the proof of refutation of UNIQUE-GAMES instances of \cite{KV05} (done in \cite{BBHKSZ:12}) along with the \cite{KKMO07} reduction (the proof of soundness of this reduction is the step where we require the SoS version of Majority is Stablest). 
\section{Sum of Squares hierarchy}
In this section, we formally give an introduction to the   Sum of Squares (hereafter abbreviated as SoS)  hierarchy. To define the SoS hierarchy, 
let $X=(x_1, \ldots, x_n)$ and let $\mathbb{R}[X]$ be the ring of real polynomials over these variables.  We also let $\mathbb{R}_{\le d} [X]$ denote the subset of $\mathbb{R}[X]$ consisting of polynomials of total degree bounded by $d$. As before, let $\mathbb{M}[X] \subset \mathbb{R}[X]$ be the set of polynomials which can be expressed as sums-of-squares. We next define a set of constraints given as :
\begin{itemize}
\item $A_{e}  = \{p_1(X) =0 , p_2(X) = 0 , \ldots, p_m(X) =0 \}$
\item $A_{g} = \{q_1(X) \ge 0, q_2(X) \ge 0, \ldots, q_{\ell}(X) \ge 0 \}$. 
\end{itemize}
Before we go ahead, we define the set $\mathcal{M}_{n,d}[X]$ as the set of monomials over $x_1, \ldots, x_n$ of degree bounded by $d$. Also, let $\mathbb{M}_{\le d}[X]$ denote the subset of $\mathbb{M}[X]$ of polynomials of degree bounded by $d$.  
Further, if $A = A_e \cup A_g$, define $\mathbb{V}(A) = \{X :  A \textrm{ holds on } X\}$. 
We next define the (degree $d$) closure of these constraints.. 
$$
\mathcal{C}_{d}(A_e) = \{ p_s(X) \cdot p(X) :s \in [m], \ p(X) \in \mathcal{M}_{n,d}[X]  \textrm{ and } \deg(p) + \deg(p_s)  \le d \}
$$
$$
\mathcal{C}_{d}(A_g) =  \{  \prod_{i=1}^m q_i^{a_i}(X) :  a_1, \ldots, a_m \in \mathbb{Z}^+ \textrm{ and } \sum_{i=1}^m a_i \cdot \deg(q_i) \le d \}
$$  
Note that $A_g$ includes $1 \in \mathbb{R}$. 
\begin{fact}\label{fact:closure-computation} 
Given $A_e$ and $A_g$ as described above, the sets $\mathcal{C}_{d}(A_e)$ and $\mathcal{C}_d(A_g)$ can be computed in time $n^{O(d)} \cdot m$. 
\end{fact}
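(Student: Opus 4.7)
The plan is a direct enumeration and counting argument in two halves: first bound the number of elements in each closure set, then bound the time needed to build each element, and multiply. No clever idea is required beyond careful bookkeeping.

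For $\mathcal{C}_d(A_e)$, I would enumerate pairs $(s,p)$ with $s \in [m]$ and $p \in \mathcal{M}_{n,d}[X]$. The number of monomials in $n$ variables of degree at most $d$ is $\binom{n+d}{d} = n^{O(d)}$, so there are at most $m \cdot n^{O(d)}$ such pairs; for each I would check the degree condition $\deg(p) + \deg(p_s) \le d$ in constant time (after a one-time pass recording each $\deg(p_s)$) and form the product $p_s \cdot p$ by distributing $p$ over the monomial expansion of $p_s$. The product lies in $\mathbb{R}_{\le d}[X]$ and so has only $n^{O(d)}$ coefficients, and the distributive multiplication fits in $n^{O(d)}$ arithmetic operations. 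Summing over pairs gives $m \cdot n^{O(d)}$ in total.

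For $\mathcal{C}_d(A_g)$, I would first discard any $q_i$ of degree $0$ (a constant inequality is either vacuous or renders $A$ infeasible, and in the latter case $\mathcal{C}_d(A_g)$ is not of interest). With every remaining $q_i$ of degree at least $1$, the admissibility condition $\sum_i a_i \deg(q_i) \le d$ forces $\sum_i a_i \le d$, so the exponent tuples $(a_1,\ldots,a_\ell) \in \mathbb{Z}_{\ge 0}^\ell$ form a set of size $\binom{\ell+d}{d}$, which is $n^{O(d)}$ under the standing assumption that $\ell$ is polynomial in $n$. Enumerating these tuples by lexicographic order over weak compositions of integers $\le d$ into $\ell$ parts costs only constant overhead per tuple, and for each tuple the product $\prod_i q_i^{a_i}$ can be assembled by at most $d$ successive multiplications of polynomials in $\mathbb{R}_{\le d}[X]$, each again costing $n^{O(d)}$ by the same distributive bookkeeping used above.

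The only nontrivial point, and it is really just a bookkeeping point, is to enumerate the exponent tuples in time proportional to their count rather than to the naive $(d+1)^\ell$, and to handle degree-$0$ polynomials explicitly. Once those conventions are in place, the time bound $n^{O(d)} \cdot m$ (with the $\ell$ dependence absorbed into the $n^{O(d)}$ factor) follows immediately from the two counts above.
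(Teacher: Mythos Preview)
The paper does not prove this fact at all; it is simply asserted and left to the reader. Your direct enumeration argument is exactly the kind of routine verification the authors intend, and it is correct in substance. One small remark: your assumption that $\ell$ (the number of inequality constraints) is polynomial in $n$ is needed to absorb the $\binom{\ell+d}{d}$ count into $n^{O(d)}$, and the paper's stated bound $n^{O(d)}\cdot m$ is itself somewhat informal on this point, so you are not missing anything the paper supplies.
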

It is obvious that without loss of generality, we can impose the constraints : $p(X) \in \mathcal{C}_d(A_e)$, $p(X) = 0$ and for $q(X) \in \mathcal{C}_{d}(A_g)$, $q(X) \ge 0$.  Next define, 
$$
\mathcal{C}_d(A) = \{ p(X) =0 : \forall p(X) \in \mathcal{C}_{d}(A_e) \} \cup \{ q(X)  \ge 0 : \forall q(X) \in \mathcal{C}_{d}(A_g) \} 
$$
\begin{definition}\label{def:SoS}
For the constraint set $A = A_e \cup A_g$ defined above and $h(X) \in \mathbb{R}[X]$, we say $A \vdash_d h(X) \ge 0$ if and only if 
$$
h(X) = \sum_{p(X) \in \mathcal{C}_d(A_e)} \alpha_p \cdot p (X) +  \sum_{q(X) \in \mathcal{C}_d(A_g)} r_q(X) \cdot q(X)
$$
where $\alpha_p \in \mathbb{R}$, $r_q \in \mathbb{M}[X]$ and for all $q(X) \in  \mathcal{C}_d(A_g)$, $\deg(r_q) + \deg(q) \le d$. In this case, we say that $A$ degree-$d$ SoS proves $h(X) \ge 0$.

For the constraint set $A$, we say that $A \vdash_d -1 \ge 0$ if and only if there exists 
$$
-1 = \sum_{p(X) \in \mathcal{C}_d(A_e)} \alpha_p \cdot p (X) +  \sum_{q(X) \in \mathcal{C}_d(A_g)} r_q(X) \cdot q(X)
$$
with the same constraints on $\alpha_p$ and $r_q$ as above.  In this case, we say that there is a degree-$d$ SoS refutation of the constraint set $A$. 
\end{definition}
Note that we are adopting the same notation as in \cite{OZ:12}. The reason we are interested in Definition~\ref{def:SoS} is because one can efficiently decide if $A \vdash_d -1 \ge 0$ using semidefinite programming. This is because deciding if $A \vdash_d -1 \ge 0$ is equivalent to refuting  the existence of a map $\widetilde{E}  : \mathbb{R}_{\le d}[X] \rightarrow \mathbb{R} $ satisfying the following conditions (see \cite{Par:00} for more details) 
%It is clear that we can additionally impose the constraints that for every $p(X) \in C\ell(A_e,d)$, $p(X) = 0$. Likewise, we can also impose the constraints that for every $q(X) \in C\ell(A_g,d)$, $q(X) \ge 0$. Since these constraints can be explicitly computed in time $n^{O(d)} \cdot m$, given any set of constraints, we first compute $C\ell(A_e,d)$ and $C\ell(A_g,d)$. We now describe the degree $d$ SoS hierarchy : For the set of constraints $A_e$ and $A_g$, the degree $d$ SoS hierarchy is defined by a map $\tilde{E} : \mathbb{R}_{\le d} [x_1, \ldots, x_n] \rightarrow \mathbb{R}$ with the following properties : 
\begin{itemize}
\item $ \widetilde{E}(1) = 1$. 
\item It is a linear map i.e. for every $g,h \in \mathbb{R}_{\le d} [X]$ and $\alpha, \beta \in \mathbb{R}$,  $\widetilde{E} (\alpha g + \beta h ) = \alpha \widetilde{E}(g) + \beta \widetilde{E}(h)$. 
\item For every $h \in \mathcal{C}_d(A_e)$, $\widetilde{E}(h) =0$. 
\item For every $h \in \mathcal{C}_d(A_g)$ and $g \in \mathbb{M}_{\le d}[X]$, such that $\deg (g \cdot h) \le d$, $\widetilde{E}(g \cdot h) \ge 0$. 
\end{itemize}
A map $\widetilde{E}$ which satisfies all the above constraints is called a degree-$d$ SoS consistent map for the constraint set  $A = A_e \cup A_g$.  Lasserre~\cite{Las:01} and Parillo~\cite{Par:00} have shown that using semidefinite programming, it is possible to decide the feasibility of such a map $\widetilde{E}$ in time $m \cdot n^{O(d)}$. In fact, if there exists such a map $\widetilde{E}$, then the algorithm outputs one in the same time. It is important to mention that since $\widetilde{E}$ has an infinite domain, it is not obvious what one means by outputting the map. To see why this makes sense, note that $\widetilde{E}$ is a linear map and hence it suffices to give to specify $\widetilde{E}$ on the set $\mathcal{M}_{n,d}[X]$.  We also remark here that the notion of  finding a mapping $\tilde{E}$  is close to the viewpoint taken by Barak \etal \cite{BBHKSZ:12}. 

To get started with SoS proof systems, we state a few facts (which are very easy to prove) : 
\begin{fact}\label{fac:SoS-basic0} 
\begin{itemize}
\item If $A \vdash_d p \ge 0$ and $A' \vdash_{d'} q \ge 0$, then $A \cup A' \vdash_{\max \{d,d'\}} p+ q \ge 0$. 
\item If $A \vdash_d p \ge 0$ and $A \vdash_{d'} q \ge 0$, then $A \vdash_{d+d'} p \cdot q \ge 0$
\item If $A \vdash_d \{p_1 \ge 0 , p_2 \ge 0, \ldots , p_m \ge 0 \}$ and $\{p_1 \ge 0 , p_2 \ge 0, \ldots , p_m \ge 0 \} \vdash_{d'} q \ge 0$, $A \vdash_{d \cdot d'} q \ge 0$. 
\end{itemize}
\end{fact}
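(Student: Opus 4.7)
The plan for all three parts is to unpack the SoS certificates guaranteed by Definition~\ref{def:SoS} and recombine them, tracking degrees carefully.

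For part (i), my plan is simply to add the two certificates termwise. If $p = \sum_{e} \alpha_e e + \sum_{g} r_g g$ is the degree-$d$ certificate over $A$ and $q = \sum_{e'} \beta_{e'} e' + \sum_{g'} s_{g'} g'$ is the degree-$d'$ certificate over $A'$, then $p+q$ is the sum of these four expressions. Since $\mathcal{C}_d(A_e) \subseteq \mathcal{C}_{\max(d,d')}((A \cup A')_e)$ and the analogous inclusion holds on the inequality side, the combined expression is a valid degree-$\max(d,d')$ SoS proof of $p+q \ge 0$ over $A \cup A'$, with the SoS multipliers on the inequality terms unchanged.

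For part (ii), I would expand the product $p \cdot q$ into four types of cross-terms. A product $(p_{s_1}m_1)(p_{s_2}m_2)$ of two equality-closure elements I would handle by fixing the generator $p_{s_1}$, expanding the residual factor $p_{s_2} m_1 m_2$ into monomials $\mu$, and observing that each resulting $p_{s_1}\mu$ lies in $\mathcal{C}_{d+d'}(A_e)$ by a routine degree check. A mixed term $(p_s m)(r g)$ is treated analogously, viewing $m r g$ as the polynomial to expand against $p_s$. An inequality-times-inequality product $(r g)(r' g')$ I would regroup as $(rr')(gg')$: the factor $rr'$ is a sum of squares because products of sums of squares are sums of squares, and $gg' \in \mathcal{C}_{d+d'}(A_g)$ since the exponent vectors of the generators add and their weighted sum is at most $d + d'$.

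For part (iii), my plan is substitution and iterated application of (i) and (ii). Writing the hypothesis $\{p_i \ge 0\} \vdash_{d'} q \ge 0$ as $q = \sum_\alpha r_\alpha \prod_i p_i^{a_i}$ with $r_\alpha \in \mathbb{M}[X]$ and $\deg(r_\alpha) + \sum_i a_i \deg(p_i) \le d'$, each factor is separately provable from $A$: each $p_i \ge 0$ is provable in degree $d$, hence by iterated use of (ii) the product $\prod_i p_i^{a_i} \ge 0$ is provable in degree at most $d \sum_i a_i$, while $r_\alpha \ge 0$ follows trivially from $r_\alpha \in \mathbb{M}[X]$ and $1 \in \mathcal{C}_d(A_g)$ in degree $\deg(r_\alpha)$. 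One more application of (ii) gives $r_\alpha \prod_i p_i^{a_i} \ge 0$ in degree $\deg(r_\alpha) + d \sum_i a_i$, and summing via (i) assembles the certificate for $q$. Assuming $\deg(p_i) \ge 1$ (the constant case being trivial), the hypothesis $\sum_i a_i \deg(p_i) \le d' - \deg(r_\alpha)$ gives $\sum_i a_i \le d' - \deg(r_\alpha)$, so the per-term degree is at most $\deg(r_\alpha) + d(d' - \deg(r_\alpha)) = d d' - (d-1)\deg(r_\alpha) \le d d'$. I expect no real obstacle beyond this bookkeeping; the only mildly delicate point is confirming that the exponents line up to give exactly $d \cdot d'$ rather than something larger.
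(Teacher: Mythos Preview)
Your argument is correct. The paper does not actually prove Fact~\ref{fac:SoS-basic0}; it merely states it as ``very easy to prove'' and moves on, so there is no paper proof to compare against. Your bookkeeping in all three parts is sound, including the degree accounting in part~(iii) via $\sum_i a_i \le d' - \deg(r_\alpha)$ under the harmless assumption $\deg(p_i)\ge 1$.
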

Several other SoS facts are proven in the Appendix~\ref{app:SoS}.  We suggest that the non-expert reader look at the Appendix~\ref{app:SoS} to get more comfortable with the notion of SoS proofs. 
For rest of the paper, we set the following convention for indeterminates appearing in SoS proofs :  Capital letters $X$, $Y$ and $Z$ will be used to denote a sequence of indeterminates (i.e. $X =(x_1, \ldots, x_n)$) while small letters $x$, $y$ and $z$ will be used to indicate single indeterminates. This convention is however only for indeterminates in the SoS proofs. For other variables, both capital and small letters will be used.  Also, we will consider polynomials on the indeterminates occurring in the SoS proofs. Whenever we refer to such polynomials without an explicit reference to the underlying indeterminates, the set of indeterminates will be clear from the context.  To get the reader more acquainted with the power of SoS proofs, we state the following powerful result of Putinar which we use repeatedly. 
%t is possible to find a map $\tilde{E}$ satisfying the above constraints provided one exists in time $m \cdot n^{O(d)}$. Note that since $\tilde{E}$ is a map with an uncountable domain, it is natural to ask what it means to say that one can compute the map. To answer this, note that $\tilde{E}$ is a linear map in the vector space $\mathbb{R}_{\le d} [x_1, \ldots, x_n]$. Hence, it suffices to specify the map $\tilde{E}$ only on the set of monomials $\mathcal{M}_{n,d}$. In other words, we have the following theorem. 
%\begin{theorem}\label{thm:Lassere}
%\cite{Las:01, Par:00} There is an algorithm $A_{las}$ which given as input the constraints $A_e$ and $A_g$ described above, in time $m \cdot n^{O(d)}$ finds a map $\tilde{E}$ which is either degree-$d$ SoS consistent or certifies infeasibility of one. 
%\end{theorem}

%In general, given the set of input constraints $A_e$ and $A_g$,  

%\begin{definition}
%For constraints $A_e$ and $A_g$ as described above, we say that $A_e \cup A_g \vdash_d p \ge 0$ to denote that
%$$
%p = \sum_{i=1}^m h_i(X) \cdot p_i(X) + \sum_{Q \in C\ell(A_g,d)}  R_Q(X) \cdot Q (X)
%$$
%where $R_Q \in \Snd$ and $\deg(h_i \cdot p_i(X)) \le d$ and $\deg(R_Q(X) \cdot Q(X)) \le d$. 
%\end{definition}
%As an immediate consequence, we get that if $A_e \cup A_g$ is the set of input constraints and $\tilde{E}$ is any degree-$d$ SoS consistent map, and $p$ is a polynomial of degree bounded by $d$ such that $A_e \cup A_g \vdash_d p \ge 0$, then $\tilde{E}(p) \ge 0$. 
%We will also use the following important theorem due to Putinar~\cite{Putinar:93}. 
\begin{theorem}\label{thm:Putinar}\emph{\cite{Putinar:93}}
Let $A = \{p_1(X) \ge 0, \ldots, p_m(X) \ge 0\}$  and define $\mathbb{M}(A) = \sum_{i=1}^n r_i p_i + r_0$ where $r_0, \ldots, r_m \in \mathbb{M}[X]$. Assume that $\exists q \in \mathbb{M}(A) $ such that the set $\{ X : q(X) \ge 0 \}$ is compact. 
If $p>0$ on the set $\mathbb{V}(A)$, then $p \in \mathbb{M}(A)$. \end{theorem}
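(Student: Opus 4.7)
The plan is to follow the classical functional-analytic route: reduce the theorem to a Hahn--Banach separation combined with a moment-problem style representation of the separating functional. Concretely, I would establish three ingredients: (i) extract from the hypothesis on $q$ the full \emph{Archimedean} property of the quadratic module $\mathbb{M}(A)$; (ii) assuming $p \notin \mathbb{M}(A)$, separate $p$ from $\mathbb{M}(A)$ by a positive linear functional $L$ on $\mathbb{R}[X]$; and (iii) realize $L$ as integration against a probability measure supported inside $\mathbb{V}(A)$, which contradicts $p > 0$ on $\mathbb{V}(A)$.

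The first step promotes the hypothesis to the statement that there exists $N$ with $N - \sum_{i=1}^n x_i^2 \in \mathbb{M}(A)$. Since $\{X : q(X) \ge 0\}$ is compact it sits inside some ball of radius $\sqrt{N}$, so $N - \sum_i x_i^2$ is strictly positive on $\{q \ge 0\}$; a direct algebraic manipulation (writing $q = r_0 + \sum r_i p_i$ and exploiting the positivity of the $r_i$) then places $N - \sum_i x_i^2$ itself in $\mathbb{M}(A)$. The Archimedean property follows: for every $f \in \mathbb{R}[X]$ there is a constant $c_f$ with $c_f \pm f \in \mathbb{M}(A)$, since every monomial is bounded on the ball and the ball certificate is in $\mathbb{M}(A)$.

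Now suppose for contradiction $p \notin \mathbb{M}(A)$. Since $\mathbb{M}(A)$ is a convex cone in $\mathbb{R}[X]$, Hahn--Banach produces a nonzero linear functional $L : \mathbb{R}[X] \to \mathbb{R}$ with $L(p) \le 0$ and $L(h) \ge 0$ for all $h \in \mathbb{M}(A)$. The Archimedean property forces $L(1) > 0$, so we normalize $L(1) = 1$, and the inequality $|L(f)| \le c_f$ bounds $L$ in a sup-norm sense on the compact set $K := \{q \ge 0\}$. This extends $L$ to a positive unital linear functional on $C(K)$, and the Riesz representation theorem gives a probability measure $\mu$ on $K$ with $L(f) = \int_K f \, d\mu$ for every polynomial $f$.

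The main obstacle, and the step that makes essential use of the structure of $\mathbb{M}(A)$, is showing $\mathrm{supp}(\mu) \subseteq \mathbb{V}(A)$. For each generator $p_i$ and every $r \in \mathbb{R}[X]$ we have $p_i \cdot r^2 \in \mathbb{M}(A)$, so $\int p_i \, r^2 \, d\mu \ge 0$; by choosing $r$ to approximate an indicator of a small neighborhood of any point where $p_i < 0$, this forces $\mu(\{p_i < 0\}) = 0$. Applying this to all $i$ yields $\mathrm{supp}(\mu) \subseteq \mathbb{V}(A)$. Then $L(p) = \int_{\mathbb{V}(A)} p \, d\mu > 0$ because $p$ is strictly positive there and $\mu$ is a probability measure, contradicting $L(p) \le 0$. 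The delicate points are the Archimedean extension and the localization argument for $\mathrm{supp}(\mu)$; both hinge on having full sums of squares (not merely products of the $p_i$'s) available as multipliers in $\mathbb{M}(A)$.
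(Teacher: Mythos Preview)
The paper does not prove this theorem at all; it is stated as a black-box citation to Putinar's original paper and then used (together with its matrix version, Theorem~\ref{thm:matrix-Putinar}) to derive Corollaries in Appendix~\ref{app:SoS}. So there is no ``paper's own proof'' to compare against. That said, your outline is the standard functional-analytic proof of Putinar's Positivstellensatz, and steps (ii)--(iii) are essentially correct modulo routine care (e.g.\ specifying the topology in which Hahn--Banach is applied, and passing from the Archimedean bound $|L(f)|\le c_f$ to a genuine sup-norm bound on $K$ before invoking Riesz; the clean way is via a GNS-type construction or Haviland's theorem rather than a direct extension to $C(K)$).

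There is, however, a real gap in step (i). You assert that from ``$N-\sum_i x_i^2>0$ on $\{q\ge 0\}$'' one gets $N-\sum_i x_i^2\in\mathbb{M}(A)$ by ``a direct algebraic manipulation (writing $q=r_0+\sum r_i p_i$ and exploiting the positivity of the $r_i$).'' This is circular: positivity of a polynomial on a semialgebraic set does not by itself yield membership in the quadratic module --- that is precisely the content of the theorem you are trying to prove. Deducing the Archimedean property of $\mathbb{M}(A)$ from the mere existence of some $q\in\mathbb{M}(A)$ with $\{q\ge 0\}$ compact is a genuinely nontrivial lemma (due in various forms to Schm\"udgen, W\"ormann, and Jacobi). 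One correct route: apply Schm\"udgen's Positivstellensatz to the single constraint $q\ge 0$ to write $N-\sum_i x_i^2=\sigma_0+\sigma_1 q$ with $\sigma_0,\sigma_1$ sums of squares, and then observe that $\sigma_1 q\in\mathbb{M}(A)$ because quadratic modules are closed under multiplication by sums of squares. Alternatively, prove directly that the set $H=\{a:\exists c,\ c\pm a\in\mathbb{M}(A)\}$ is a subring containing $1$ and each $x_i$, hence all of $\mathbb{R}[X]$. Either way, this step needs an actual argument, not a hand-wave.
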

As a key step in one of our proofs, we will also require a matrix version of Putinar's Positivstellensatz (see \cite{Lasserre:2010} for details). A matrix $\Gamma \in (\mathbb{R}[X])^{p \times p}$ is said to be a sum-of-squares if there exists $B \in  (\mathbb{R}[X])^{p \times q}$ (for some $q \in \mathbb{N}$) such that $B \cdot B^T = \Gamma$.

%We will also need a matrix version of Putinar's Positivstellensatz (see \cite{Lasserre:2010}). Let $M= \mathbb{R}[x_1, \ldots, x_n]$. Then, a matrix $A \in M^{p \times p}$ is said to be a sum-of-squares if there is a matrix $B \in M^{p \times q}$ (for some $q \in \mathbb{N}$) such that $B \cdot B^T = A$.  We have the following theorem. 
\begin{theorem}\label{thm:matrix-Putinar}Let $A = \{p_1(X) \ge 0, \ldots, p_m(X) \ge 0\}$ be satisfying the conditions in the hypothesis of  Theorem~\ref{thm:Putinar}. Let $\Gamma \in (\mathbb{R}[X])^{p \times p}$ be a symmetric matrix and $\delta>0$ be such that $\Gamma \succeq \delta I$ on the set $\mathbb{V}(A)$. Then, $\Gamma =  \Gamma_0(X) + \sum_{i=1}^m \Gamma_i(X) \cdot p_i(X)$ where $\Gamma_0 ,  \ldots, \Gamma_m$ are sum-of-squares. 
\end{theorem}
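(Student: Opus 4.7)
The plan is to reduce matrix Putinar (Theorem~\ref{thm:matrix-Putinar}) to the scalar Putinar theorem (Theorem~\ref{thm:Putinar}) via the classical auxiliary-variable trick (see also \cite{Lasserre:2010}). Introduce fresh indeterminates $Y = (y_1,\ldots,y_p)$ and consider the scalar polynomial
\[
P(X,Y) \;:=\; Y^T \Gamma(X)\, Y \;\in\; \mathbb{R}[X,Y].
\]
Augment the constraint set $A$ by the two inequalities $1 - \|Y\|^2 \ge 0$ and $\|Y\|^2 - 1 \ge 0$, which jointly encode $\|Y\|^2 = 1$; call the resulting system $\tilde A$. Since the Archimedean condition for $A$ gives a bound $c - \|X\|^2 \in \mathbb{M}(A)$ for some constant $c$, adding $1 - \|Y\|^2$ yields $(c+1) - \|X\|^2 - \|Y\|^2 \in \mathbb{M}(\tilde A)$, so $\tilde A$ is itself Archimedean. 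The assumption $\Gamma \succeq \delta I$ on $\mathbb{V}(A)$ gives $P(X,Y) \ge \delta > 0$ strictly on the compact set $\mathbb{V}(\tilde A)$, so Theorem~\ref{thm:Putinar} applied to $P$ yields a scalar SoS certificate
\[
Y^T \Gamma(X)\, Y \;=\; \sigma_0(X,Y) + \sum_{i=1}^m \sigma_i(X,Y)\, p_i(X) + \lambda(X,Y)\bigl(\|Y\|^2 - 1\bigr),
\]
with each $\sigma_i \in \mathbb{M}[X,Y]$ and $\lambda$ an arbitrary polynomial (absorbing the difference of the two SoS multipliers attached to $\pm(\|Y\|^2-1)$).

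To transfer this scalar identity into a matrix decomposition, I will use the Gram-matrix representation of SoS polynomials. Writing $\sigma_i = \sum_j q_{ij}^2$ and expanding each $q_{ij}$ in the $Y$-monomial basis $V(Y) = (1,\, V_1(Y),\, V_2(Y),\, \ldots)^T$ (where $V_1(Y) = Y$ and $V_k(Y)$ collects the degree-$k$ monomials in $Y$) gives
\[
\sigma_i(X,Y) \;=\; V(Y)^T Q_i(X)\, V(Y), \qquad Q_i(X) \;=\; \sum_j c_{ij}(X)\, c_{ij}(X)^T,
\]
where $c_{ij}(X)$ is the coefficient vector of $q_{ij}$ regarded as a polynomial in $Y$ with $\mathbb{R}[X]$ coefficients. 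Hence each $Q_i(X)$ is itself an SoS matrix in $X$, and every principal subblock of it remains SoS. In particular the $(1,1)$-block $(Q_i)_{11}(X)$ (corresponding to the $V_1(Y) = Y$ sub-basis) is an SoS matrix, contributing $Y^T (Q_i)_{11}(X)\, Y$ to $\sigma_i$. The candidate matrix decomposition is therefore
\[
\Gamma(X) \;=\; (Q_0)_{11}(X) + \sum_{i=1}^m p_i(X)\, (Q_i)_{11}(X).
\]

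The main obstacle, and the technical heart of the argument, is to verify that this candidate is an actual identity; matching the degree-$2$-in-$Y$ parts of the scalar certificate produces extra cross terms arising from the off-diagonal $(0,2)$- and $(2,0)$-blocks of each $Q_i$, together with the degree-$0$ and degree-$2$ components of $\lambda(\|Y\|^2-1)$. I plan to address these by (i) pre-symmetrizing the certificate under $Y \mapsto -Y$, which preserves SoS-ness and kills all odd-in-$Y$ contributions; (ii) exploiting the equality $\|Y\|^2 = 1$ to promote any scalar-in-$Y$ SoS piece $f(X)$ to quadratic-in-$Y$ form as $Y^T\bigl(f(X)\, I\bigr) Y$, noting that $f \cdot I$ remains an SoS matrix because $f = \sum_j g_j^2$ implies $f\, I = \sum_j (g_j I)(g_j I)^T$; and (iii) iteratively modifying the multiplier $\lambda$ to absorb the remaining cross terms. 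The delicate combinatorial bookkeeping --- showing that these manipulations can be carried out consistently while preserving SoS-matrix structure at every step, and that the eventual $\Gamma_i(X)$ obtained are genuine SoS matrices --- is the subtle part and is worked out in detail in \cite{Lasserre:2010}.
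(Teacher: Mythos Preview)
The paper does not supply its own proof of Theorem~\ref{thm:matrix-Putinar}; it is quoted as a known result with a pointer to \cite{Lasserre:2010} (``see \cite{Lasserre:2010} for details''). There is therefore nothing in the paper to compare your proposal against.

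As a standalone sketch, your auxiliary-variable reduction to scalar Putinar is the standard route and is correctly laid out through the point where you obtain the scalar certificate for $Y^T\Gamma(X)Y$ over $\mathbb{V}(A)\times\{\|Y\|^2=1\}$. You also correctly identify the one genuinely non-trivial step: passing from a scalar SoS decomposition in $(X,Y)$ to a matrix SoS decomposition in $X$ alone. Your observation that principal subblocks of an SoS matrix remain SoS is right, and the symmetrization under $Y\mapsto -Y$ is the correct first move. The part you label (iii), however, is doing essentially all of the work: the degree-$2$-in-$Y$ component of each $\sigma_i$ picks up contributions not just from the $(1,1)$ block of $Q_i$ but also from the $(0,2)$ and $(2,0)$ blocks, and these are \emph{not} a priori of the form $Y^T M(X)Y$ with $M$ an SoS matrix. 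Showing that these extra pieces can be systematically traded against the multiplier of $\|Y\|^2-1$ while preserving the SoS-matrix structure of what remains is exactly the content of the argument you defer to \cite{Lasserre:2010}. So your outline is faithful, but a self-contained proof would need that bookkeeping spelled out rather than cited.
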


%Let $A = \{p_1 \ge0, \ldots, p_m \ge 0 \}$ be a set of constraints on . Also, let $\mathbb{M}[X] \subset \mathbb{R}[X]$ which can be expressed as sums-of-squares. In other words, $q(X) \in  \mathbb{M}[X] $ if and only if $q= r_1^2 + \ldots +r_{\ell}^2$ where $r_1, \ldots, r_{\ell} \in \mathbb{R}[X]$. Then, if 
%$$
%-1 = \sum_{i=1}^{m} q_i(X) \cdot p_i(X) + q_0
%$$
%where $q_0 (X), \ldots, q_m(X) \in 

%\subsection{Acknowledgments}

\section{Our tensorization theorem}

In this section, we will prove our main tensorization inequality on the cube.
In subsequent sections, we will use it to give new proofs
of the ``Majority is Stablest'' theorem of Mossel, O'Donnell and Oleszkiewicz~\cite{MOO:10}
and the Gaussian stability inequality of Borell~\cite{Borell:85}.
We begin by defining the following function from~\cite{MosselNeeman:12b} for every $\rho \in [-1,1]$ : $J_{\rho} : (0,1)^2 \rightarrow [0,1]$ as $$ J_{\rho}(x,y) = \Pr_{X,Y} [ X \le \Phi^{-1}(x), Y \le \Phi^{-1}(y)]$$ 
Here $X,Y$ are jointly normally distributed random variables with the covariance matrix 
$$
\mathop{Cov}(X,Y) = \left(\begin{array}{cc} 1 & \rho  \\\rho  & 1 \end{array}\right).
$$

\begin{definition}
 Let $\Omega_1$ and $\Omega_2$ be probability spaces and $\mu$ be a probability
 measure on $\Omega_1 \times \Omega_2$. We say that $\mu$ has R\'enyi correlation at most $\rho$
 if for every measurable $f: \Omega_1 \to \R$
 and $g: \Omega_2 \to \R$ with $\E_\mu f = \E_\mu g = 0$,
 \[
  \E_\mu[fg] \le \rho \sqrt{\E_\mu[f^2] \E_\mu[g^2]}.
 \]
\end{definition}

For example, suppose that $\Omega_1 = \Omega_2$ and suppose $(X, Y)$ are
generated by the following procedure: first choose $X$ according to some
distribution $\nu$. Then, with probability
$\rho$ set $Y = X$, and with probability $1-\rho$, choose
$Y$ independently from $\nu$. If $\mu$ is the distribution of $(X, Y)$, then
it is easy to check that $\mu$ has R\'enyi correlation $\rho$.

\begin{definition}\label{def:one-restriction}
 If $\Omega$ is a probability space and $f$ is a function $\Omega^n \to \R$,
 then for $X \in \Omega$, we define $f_X: \Omega^{n-1} \to \R$ by
 \[
  f_X(X_1, \dots, X_{n-1}) = f(X_1, \dots, X_{n-1}, X).
 \]
\end{definition}

\begin{definition}
For a function $f: \Omega \to \R$, define
\[
 \Delta_1(f) = \E |f - \E f|^3.
\]
For a function $f: \Omega^n \to \R$, define $\Delta_n(f)$
recursively by
\[
 \Delta_n(f) = \E_{X_n}[\Delta_{n-1}(f_{X_n})] + \Delta_1(\E[f_{X_n}|X_n]),
\]
noting that $\E[f_{X_n}|X_n]$ is a function $\Omega \to \R$.
\end{definition}

We prove the following general theorem, which we will later
use
to derive both Borell's inequality and the ``Majority is Stablest'' theorem.
\begin{theorem}\label{thm:tensorization}
For any $\epsilon > 0$ and $0 < \rho < 1$, there is $C(\rho) > 0$ such
that the following holds.
Let $\mu$ be a $\rho$-correlated measure on $\Omega_1 \times \Omega_2$ and
let $(X_i, Y_i)_{i=1}^n$ be i.i.d.\ variables with distribution $\mu$.
Then for any measurable functions
$f: \Omega_1^n \to [\epsilon, 1-\epsilon]$ and
$g: \Omega_2^n \to [\epsilon, 1-\epsilon]$,
\[
 \E J_\rho(f(X), g(Y)) \le J_\rho(\E f, \E g) + C(\rho) \epsilon^{-C(\rho)} (\Delta_n(f) + \Delta_n(g)).
\]
\end{theorem}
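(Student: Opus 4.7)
The plan is to prove the theorem by induction on $n$, tensorizing a one-dimensional functional inequality. The base case $n=1$ is a Taylor-expansion argument around the means that exploits a Hessian inequality for $J_\rho$ inherited from~\cite{MosselNeeman:12b}; the inductive step is a conditioning argument designed to match the recursive definition of $\Delta_n$ term-for-term.

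\textbf{Base case $n=1$.} Fix $f : \Omega_1 \to [\epsilon, 1-\epsilon]$ and $g : \Omega_2 \to [\epsilon, 1-\epsilon]$, set $\bar f = \E_\mu f$, $\bar g = \E_\mu g$, and Taylor-expand $J_\rho$ to second order around $(\bar f, \bar g)$:
\begin{align*}
J_\rho(f, g) &= J_\rho(\bar f, \bar g) + J_x(f-\bar f) + J_y(g-\bar g) \\
&\quad + \tfrac12 J_{xx}(f-\bar f)^2 + J_{xy}(f-\bar f)(g-\bar g) + \tfrac12 J_{yy}(g-\bar g)^2 + R,
\end{align*}
with the derivatives evaluated at $(\bar f, \bar g)$ and $R$ the third-order Lagrange remainder. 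Taking $\E_\mu$ kills the linear terms. Using $|\E_\mu[(f-\bar f)(g-\bar g)]| \le \rho\sqrt{\Var f \cdot \Var g}$ (R\'enyi correlation), the quadratic contribution is at most
\[
 \tfrac12 J_{xx}\Var f + \rho|J_{xy}|\sqrt{\Var f \cdot \Var g} + \tfrac12 J_{yy}\Var g.
\]
The crucial analytic input from~\cite{MosselNeeman:12b} is the pair of pointwise inequalities $J_{xx}, J_{yy}\le 0$ and $J_{xx}J_{yy}\ge\rho^2 J_{xy}^2$ on $(0,1)^2$ --- precisely what is needed for the above quadratic form (in $\sqrt{\Var f}, \sqrt{\Var g}$) to be nonpositive. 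To control $R$, one bounds the third partials of $J_\rho$ on $[\epsilon,1-\epsilon]^2$: differentiation pulls in factors of $1/\phi(\Phi^{-1}(\cdot))$, and the tail asymptotics of $\Phi^{-1}$ near $0,1$ give a bound of the form $C(\rho)\epsilon^{-C(\rho)}$. Hence $\E|R| \le C(\rho)\epsilon^{-C(\rho)}(\Delta_1(f) + \Delta_1(g))$.

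\textbf{Inductive step.} Assume the statement for $n-1$. Condition on the last pair $(X_n, Y_n)$: the slices $f_{X_n}, g_{Y_n}$ are $[\epsilon,1-\epsilon]$-valued functions on $\Omega_1^{n-1}, \Omega_2^{n-1}$, and the first $n-1$ pairs are i.i.d.\ from $\mu$ and independent of $(X_n, Y_n)$. The induction hypothesis gives, for each fixed $(X_n, Y_n)$,
\[
 \E_{X_{<n}, Y_{<n}}\bigl[J_\rho(f_{X_n}, g_{Y_n})\bigr] \le J_\rho(\E f_{X_n}, \E g_{Y_n}) + C(\rho)\epsilon^{-C(\rho)}\bigl(\Delta_{n-1}(f_{X_n}) + \Delta_{n-1}(g_{Y_n})\bigr).
\]
Define $F(x) := \E f_x$ and $G(y) := \E g_y$; these are $[\epsilon,1-\epsilon]$-valued one-variable functions with $\E_\mu F = \E f$ and $\E_\mu G = \E g$. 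Taking $\E_{(X_n, Y_n)}$ above and applying the base case to $F, G$ yields
\[
 \E J_\rho(f(X), g(Y)) \le J_\rho(\E f, \E g) + C(\rho)\epsilon^{-C(\rho)}\bigl(\E_{X_n}\Delta_{n-1}(f_{X_n}) + \E_{Y_n}\Delta_{n-1}(g_{Y_n}) + \Delta_1(F) + \Delta_1(G)\bigr).
\]
By the defining recursion $\Delta_n(f) = \E_{X_n}[\Delta_{n-1}(f_{X_n})] + \Delta_1(F)$ (and similarly for $g$), the four error terms collapse to $\Delta_n(f) + \Delta_n(g)$, completing the induction.

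\textbf{Main obstacle.} The only deep input is the two-variable Hessian inequality for $J_\rho$, i.e. $J_{xx}, J_{yy} \le 0$ together with $J_{xx}J_{yy}\ge\rho^2 J_{xy}^2$ --- this is the infinitesimal functional form of Borell's inequality from~\cite{MosselNeeman:12b}, and is the place where all the Gaussian geometry is hidden. Granted that, the rest is calculus plus a clean conditioning argument. The remaining technical nuisance is tracking the explicit polynomial dependence $\epsilon^{-C(\rho)}$ of the third-derivative bound; this follows from standard tail asymptotics for $\Phi^{-1}$ but must be done carefully so that the constant in the inductive step does not blow up with $n$.
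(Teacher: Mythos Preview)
Your proposal is correct and essentially identical to the paper's proof: the same Taylor-expansion base case using the negative semidefiniteness of the modified Hessian of $J_\rho$ (the paper's Claim~\ref{clm:negative-semidefinite}) together with the third-derivative bound (Claim~\ref{clm:third-derivative}), followed by the same conditioning-on-$(X_n,Y_n)$ inductive step that matches the recursion for $\Delta_n$. The only cosmetic difference is that you upper-bound the cross term by $\rho|J_{xy}|\sqrt{\Var f\,\Var g}$, whereas the paper writes the quadratic form using the actual correlation $\sigma\in[0,\rho]$ and invokes negative semidefiniteness of $M_{\rho\sigma}$; both routes use the same determinantal identity $J_{xx}J_{yy}=\rho^2 J_{xy}^2$.
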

We note that~\cite{MosselNeeman:12b} proved that in the {\em Gaussian setup} where $f,g : \R^n \to [0,1]$ it was shown 
that $ \E J_\rho(f(X), g(Y)) \le J_\rho(\E f, \E g)$.  

\subsection{The base case}
We prove Theorem~\ref{thm:tensorization} by induction on $n$.
In this section, we will prove the base case $n=1$:
\begin{claim}\label{clm:base-case}
  For any $\epsilon > 0$ and $0 < \rho < 1$, there is a $C(\rho)$
  such that for any two random variables
  $X, Y \in [\epsilon, 1-\epsilon]$ with correlation in $[0, \rho]$,
  \[
   \E J_\rho(X, Y) \le J_\rho(\E X, \E Y) + C(\rho)\epsilon^{-C(\rho)}(\E |X - \E X|^3 + \E |Y - \E Y|^3).
  \]
\end{claim}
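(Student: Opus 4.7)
The plan is a second-order Taylor expansion of $J_\rho$ around $(\E X, \E Y)$, followed by two tasks: (i) show that the expected quadratic term is nonpositive using the correlation assumption, and (ii) bound the third-order remainder by $C(\rho)\epsilon^{-C(\rho)}(\E|X-\E X|^3+\E|Y-\E Y|^3)$. Writing $a=\E X$, $b=\E Y$, $U=X-a$, $V=Y-b$, Taylor's theorem with Lagrange remainder gives
\[
J_\rho(X,Y)-J_\rho(a,b) \;=\; J_\rho^{(1,0)}U + J_\rho^{(0,1)}V + \tfrac12\bigl(J_\rho^{(2,0)}U^2 + 2J_\rho^{(1,1)}UV + J_\rho^{(0,2)}V^2\bigr) + R_3,
\]
with the second-order partials evaluated at $(a,b)$ and $|R_3|\le \tfrac{M}{6}(|U|+|V|)^3$, where $M$ bounds all third partials of $J_\rho$ on $[\epsilon,1-\epsilon]^2$. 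Taking expectations kills the linear terms and reduces the claim to controlling the quadratic and remainder pieces.

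For the quadratic piece, I would first differentiate $J_\rho$ explicitly. With $u=\Phi^{-1}(x)$ and $v=\Phi^{-1}(y)$, conditioning the bivariate normal density on its first coordinate yields $J_\rho^{(1,0)}(x,y)=\Phi\!\bigl(\tfrac{v-\rho u}{\sqrt{1-\rho^2}}\bigr)$ and a symmetric formula for $J_\rho^{(0,1)}$. One further differentiation gives closed forms for $J^{(2,0)},J^{(1,1)},J^{(0,2)}$, from which I would read off: (a) since $\rho>0$, one has $J^{(2,0)},J^{(0,2)}\le 0$ and $J^{(1,1)}\ge 0$; and (b) the pointwise identity
\[
J^{(2,0)}(a,b)\,J^{(0,2)}(a,b) \;=\; \rho^2\bigl(J^{(1,1)}(a,b)\bigr)^2, \qquad (\star)
\]
which falls out of the explicit expressions together with the symmetry of mixed partials $\partial_x\partial_y J=\partial_y\partial_x J$. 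Given $(\star)$, apply the assumed correlation bound $\E[UV]\le\rho\sqrt{\E U^2\,\E V^2}$ together with $J^{(1,1)}\ge 0$ to upper bound the expected quadratic form by $J^{(2,0)}s^2 + 2\rho J^{(1,1)}st + J^{(0,2)}t^2$, where $s=\sqrt{\E U^2}$ and $t=\sqrt{\E V^2}$. By $(\star)$ this rewrites as $-\bigl(\sqrt{-J^{(2,0)}}\,s - \sqrt{-J^{(0,2)}}\,t\bigr)^2\le 0$, which is exactly what we need.

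For the remainder, a direct differentiation shows that each third partial of $J_\rho(x,y)$ is a product of (i) a bounded Gaussian factor, (ii) a polynomial in $\Phi^{-1}(x),\Phi^{-1}(y)$ of bounded degree, and (iii) factors $1/\phi(\Phi^{-1}(x))$ and $1/\phi(\Phi^{-1}(y))$ of total degree at most three. Standard Mills-ratio estimates give $|\Phi^{-1}(x)|=O(\sqrt{\log(1/\epsilon)})$ and $1/\phi(\Phi^{-1}(x))\le C\epsilon^{-1-o(1)}$ uniformly on $[\epsilon,1-\epsilon]$, so $M\le C(\rho)\epsilon^{-C(\rho)}$. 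Combined with the elementary inequality $(|U|+|V|)^3\le 4(|U|^3+|V|^3)$ and the Lagrange bound, this yields $\E|R_3|\le C(\rho)\epsilon^{-C(\rho)}(\E|X-\E X|^3+\E|Y-\E Y|^3)$, completing the proof.

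The conceptually delicate step is isolating $(\star)$: this identity is exactly what makes the Hessian of $J_\rho$, when combined with the correlation bound, collapse into a perfect square. It is the pointwise, one-step analogue of the generator inequality for $J_\rho$ that drives the Mossel--Neeman semi-group proof of Borell's theorem, and ultimately reflects the fact that $J_\rho$ is built from a two-dimensional Gaussian density. Once $(\star)$ is in hand, the remaining work (Mills ratios for $M$, Taylor expansion, Young's inequality on $(|U|+|V|)^3$) is routine.
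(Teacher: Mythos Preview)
Your proposal is correct and follows essentially the same approach as the paper: a second-order Taylor expansion of $J_\rho$ around the means, with the quadratic term controlled via the Hessian identity $(\star)$ (which is exactly the content of the paper's Claim~\ref{clm:negative-semidefinite}, there phrased as ``$M_{\rho\sigma}$ is negative semidefinite'') and the cubic remainder bounded using the third-derivative estimate of Claim~\ref{clm:third-derivative}. The only cosmetic difference is that the paper writes the quadratic step in matrix form $(\sigma_X\ \sigma_Y)\,M_{\rho\sigma}(a,b)\,(\sigma_X\ \sigma_Y)^T\le 0$, whereas you unwind it into the perfect-square factorization directly.
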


 The proof of Claim~\ref{clm:base-case} essentially
 follows from Taylor's theorem applied to the
 function $J$; the crucial point is that $J$ satisfies a certain differential equation.
Define the matrix $M_{\rho \sigma}(x,y)$ by
$$
M_{\rho \sigma}(x,y) =  \left(\begin{array}{cc} \frac{\partial^2 J_{\rho}(x,y)}{\partial^2 x} & \sigma \frac{\partial^2 J_{\rho}(x,y)}{\partial x \partial y} \\\sigma \frac{\partial^2 J_{\rho}(x,y)}{\partial x \partial y}  & \frac{\partial^2 J_{\rho}(x,y)}{\partial^2 x}\end{array}\right).
$$
\begin{restatable}{claim}{negsemi}
\label{clm:negative-semidefinite}
For any  $(x,y) \in (0,1)^2$ and $0 \le \sigma \le \rho$, $M_{\rho \sigma}(x,y)$ is a negative semidefinite matrix.  Likewise, if $\rho \le \sigma \le 0$, then $M_{\rho \sigma}(x,y)$ is a positive semidefinite matrix. 
\end{restatable}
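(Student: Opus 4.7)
The plan is to reduce the claim to an elementary check on a $2\times 2$ symmetric matrix by computing the second partials of $J_\rho$ explicitly. First I would change variables: write $a = \Phi^{-1}(x)$, $b = \Phi^{-1}(y)$, so that $J_\rho(x,y) = \Phi_\rho(a,b)$ is the bivariate normal CDF with correlation $\rho$. Since $\tfrac{da}{dx} = 1/\phi(a)$ and conditioning on $X=a$ gives $Y \sim N(\rho a, 1-\rho^2)$, the chain rule yields the clean first derivative
\[
\frac{\partial J_\rho}{\partial x}(x,y) \;=\; \Phi\!\left(\frac{b-\rho a}{\sqrt{1-\rho^2}}\right),
\]
and a symmetric expression in $y$.

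Differentiating once more, I would obtain
\[
\frac{\partial^2 J_\rho}{\partial x\,\partial y} \;=\; \frac{1}{\sqrt{1-\rho^2}\,\phi(b)}\;\phi\!\left(\frac{b-\rho a}{\sqrt{1-\rho^2}}\right)\;>\;0,
\]
\[
\frac{\partial^2 J_\rho}{\partial x^2} \;=\; -\,\frac{\rho}{\sqrt{1-\rho^2}\,\phi(a)}\;\phi\!\left(\frac{b-\rho a}{\sqrt{1-\rho^2}}\right),
\]
and the analogous formula for $\partial^2 J_\rho/\partial y^2$. The key algebraic observation is that all three second partials share the same positive factor, so if we set $t := \partial^2 J_\rho/\partial x\,\partial y > 0$ then
\[
M_{\rho\sigma}(x,y) \;=\; t \cdot
\begin{pmatrix} -\rho\,\phi(b)/\phi(a) & \sigma \\[2pt] \sigma & -\rho\,\phi(a)/\phi(b) \end{pmatrix}.
\]

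The rest is a direct $2\times 2$ sign check. For $0\le\sigma\le\rho$, the inner matrix has trace $-\rho\bigl(\phi(b)/\phi(a)+\phi(a)/\phi(b)\bigr)\le 0$ and determinant $\rho^{2}-\sigma^{2}\ge 0$, so it is negative semidefinite; multiplying by $t>0$ preserves this, giving $M_{\rho\sigma}\preceq 0$. For $\rho\le\sigma\le 0$, the diagonal becomes nonnegative, the trace is nonnegative, and the determinant is again $\rho^{2}-\sigma^{2}\ge 0$, so the inner matrix is positive semidefinite and hence so is $M_{\rho\sigma}$.

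I do not anticipate any real obstacle beyond bookkeeping; the only subtlety is correctly applying the chain rule through $\Phi^{-1}$ and noticing the common factor that lets the entries of $M_{\rho\sigma}$ be compared without absolute-value acrobatics. In particular, no differential equation for $J_\rho$ is needed here — the identity $\partial^2 J_\rho/\partial x^2 = -\rho\,(\phi(b)/\phi(a))\,\partial^2 J_\rho/\partial x\partial y$ falls out of the explicit formulas and makes the determinant condition collapse to $\sigma^2\le\rho^2$.
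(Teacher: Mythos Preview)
Your proposal is correct and follows essentially the same route as the paper: compute the second partials of $J_\rho$ explicitly via the change of variables $a=\Phi^{-1}(x)$, $b=\Phi^{-1}(y)$, observe that $\partial_{xx}J_\rho\cdot\partial_{yy}J_\rho=\rho^2(\partial_{xy}J_\rho)^2$ (equivalently, that the inner $2\times 2$ matrix has determinant $\rho^2-\sigma^2$), and then read off semidefiniteness from the sign of the diagonal. Your factoring out of the common positive scalar $t=\partial_{xy}J_\rho$ is a slightly cleaner repackaging of the same computation the paper does by multiplying the two symmetric expressions for the mixed partial.
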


\begin{restatable}{claim}{thirddiff}
\label{clm:third-derivative}
For any $-1 < \rho < 1$, there exists $C(\rho) > 0$ such that
for any $i, j \ge 0$,  $i+j=3$,
$$\left|\frac{\partial^3 J_{\rho}(x,y)}{\partial^i x \partial^j y} \right| \le C(\rho) (xy(1-x) (1-y))^{-C(\rho)} $$  
\end{restatable}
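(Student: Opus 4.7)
The plan is to compute the third derivatives explicitly via the substitution $u=\Phi^{-1}(x)$, $v=\Phi^{-1}(y)$ that turns $J_\rho(x,y)$ into the standard bivariate normal CDF $\Pr[X\le u,Y\le v]$ with correlation $\rho$. Writing $\phi,\Phi$ for the standard Gaussian density and distribution functions and using $Y\mid X=u\sim \mathcal{N}(\rho u,1-\rho^2)$,
\[
\partial_x J_\rho(x,y) \;=\; \Phi(A), \qquad \partial_y J_\rho(x,y) \;=\; \Phi(B),
\]
where $A:=(v-\rho u)/\sqrt{1-\rho^2}$ and $B:=(u-\rho v)/\sqrt{1-\rho^2}$. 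Differentiating $\Phi(u)=x$ gives $\partial_x u = 1/\phi(u)$, and a short calculation shows that every second derivative is a single term of the shape $\pm(\rho/\sqrt{1-\rho^2})\,\phi(A)/\phi(u)$ (with $u,v$ or $A,B$ swapped as appropriate).

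Differentiating once more, using $\partial_x\phi(A)=-A\phi(A)\,\partial_x A$ and $\partial_x(1/\phi(u)^b)=bu/\phi(u)^{b+1}$ (and the analogues in $y$), every third derivative of $J_\rho$ reduces to a single expression of the form
\[
C_\rho\cdot \phi(A)\,P(u,v,A,B)\,\phi(u)^{-b}\phi(v)^{-c}
\quad\text{(or with $A$ replaced by $B$)},
\]
with $b+c=2$ and $P$ a degree-one polynomial whose coefficients depend only on $\rho$. Carrying out this bookkeeping for each of the four cases $(i,j)\in\{(3,0),(2,1),(1,2),(0,3)\}$ is the main tedious step, but each case is a one-line chain-rule calculation.

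It then remains to bound each factor, for which three elementary estimates suffice. First, the classical Gaussian tail bound gives $|\Phi^{-1}(x)|\le C\sqrt{1+\log(1/(x(1-x)))}$, so $|u|,|v|,|A|,|B|$ are polylogarithmic in $1/(xy(1-x)(1-y))$ and the polynomial factor $P$ contributes at most an arbitrarily small power of that quantity. Second, Mills' ratio together with a continuity argument on the compact interior gives $\phi(\Phi^{-1}(x))\ge c\cdot x(1-x)$ for a universal $c>0$. Third, and at the heart of the bound, write
\[
\frac{\phi(A)}{\phi(u)^b\phi(v)^c} \;=\; (2\pi)^{(b+c-1)/2}\exp\!\left(\tfrac12\bigl(bu^2+cv^2-A^2\bigr)\right),
\]
and observe that after substituting $A=(v-\rho u)/\sqrt{1-\rho^2}$ and applying $2|\rho uv|\le u^2+v^2$, the exponent is bounded above by $C(\rho)(u^2+v^2)$. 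Using the second estimate to convert $e^{C u^2},e^{C v^2}$ into polynomial factors of $(x(1-x))^{-1},(y(1-y))^{-1}$ then yields the claimed bound $C(\rho)(xy(1-x)(1-y))^{-C(\rho)}$, and taking the maximum of the constants over the four cases gives a single $C(\rho)$.

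The only real obstacle is the bookkeeping in the second step; the key structural fact one must verify is that each resulting term has exactly one Gaussian density $\phi(A)$ or $\phi(B)$ in the numerator, which is just enough to dominate the singular $1/\phi(u)^b\phi(v)^c$ in the denominator up to polynomial loss. The constants $C(\rho)$ blow up as $\rho\to\pm 1$ because of the $1/\sqrt{1-\rho^2}$ factors and the worsening of the AM--GM bound on the Gaussian exponent, which is exactly why the statement allows $C$ to depend on $\rho$.
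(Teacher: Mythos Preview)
Your proposal is correct and follows essentially the same approach as the paper: both compute the third derivatives via the substitution $u=\Phi^{-1}(x)$, $v=\Phi^{-1}(y)$, recognize that each derivative has the shape (Gaussian density in $A$ or $B$) times (linear polynomial in $u,v$) times inverse powers of $\phi(u),\phi(v)$, and then bound the resulting Gaussian exponent via $2|uv|\le u^2+v^2$ together with $\exp(u^2)\le (x(1-x))^{-C}$. The only cosmetic difference is that the paper writes out the explicit formulas for $\partial^3 J_\rho/\partial x^3$ and $\partial^3 J_\rho/\partial x^2\partial y$ rather than describing their common structural form.
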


Claims~\ref{clm:negative-semidefinite} and~\ref{clm:third-derivative} follow from elementary
calculus, and we defer their proofs to the appendix (Claim~\ref{clm:negative-semidefinite} is first proved in~\cite{MosselNeeman:12b} and we include the proof here for the sake of completeness). 
  Now we will use them with Taylor's theorem to prove Claim~\ref{clm:base-case}.
 \begin{proof}[Proof of Claim~\ref{clm:base-case}]
 Fix $\epsilon > 0$ and $\rho \in (0, 1)$,
 and let $C(\epsilon)$ be large enough so that all third derivatives of $J$
 are uniformly bounded by $C(\epsilon)$ on the square $[\epsilon, 1-\epsilon]^2$
 (such a $C(\epsilon)$ exists by Claim~\ref{clm:third-derivative}).
 Taylor's theorem then implies that for any $a, b, a+x, b+y \in [\epsilon, 1-\epsilon]$,
 \begin{multline}\label{eq:taylor}
  J_\rho(a + x, b + y) \le J_\rho(a, b) + x \pdiff Jx (a, b) + y\pdiff Jy (a, b) \\ +
  \frac{1}{2} (x \ y)
  \begin{pmatrix} \pdiffII {J_\rho}xx (a, b) & \pdiffII {J_\rho}xy (a, b) \\ \pdiffII {J_\rho}xy(a, b) & \pdiffII {J_\rho}yy(a, b)\end{pmatrix}
  \begin{pmatrix}x \\ y\end{pmatrix} + C(\rho)\epsilon^{-C(\rho)} (x^3 + y^3).
 \end{multline}
 
 Now suppose that $X$ and $Y$ are random variables taking values in
 $[\epsilon, 1-\epsilon]$. If we apply~\eqref{eq:taylor} with $a = \E X$, $b = \E Y$,
 $x = X - \E X$, and $y = Y - \E Y$, and then take expectations of both sides, we obtain
 \begin{multline}\label{eq:random-taylor}
  \E J_\rho(X, Y) \le J_\rho(\E X, \E Y) + \frac{1}{2}
  \E\left[
  (\tilde X\ \tilde Y)
  \begin{pmatrix}\pdiffII {J_\rho}xx(a, b) & \pdiffII {J_\rho}xy(a, b) \\ \pdiffII {J_\rho}xy(a, b) & \pdiffII {J_\rho}yy(a, b)\end{pmatrix}
  \begin{pmatrix}\tilde X \\ \tilde Y \end{pmatrix} \right] \\
  + C(\rho)\epsilon^{-C(\rho)} (\E |\tilde X|^3 + \E |\tilde Y|^3)
 \end{multline}
 where $\tilde X = X - \E X$ and $\tilde Y = Y - \E Y$.
 Now, if $X$ and $Y$ have correlation $\sigma \in [0, \rho]$ then
 $\E \tilde X \tilde Y = \sigma \sqrt{\E \tilde X^2 \E \tilde Y^2}$, and so
 \[
  \E\left[
  (\tilde X\ \tilde Y)
  \begin{pmatrix}\pdiffII {J_\rho}xx(a, b) & \pdiffII {J_\rho}xy(a, b) \\ \pdiffII {J_\rho}xy(a, b) & \pdiffII {J_\rho}yy(a, b)\end{pmatrix}
  \begin{pmatrix}\tilde X \\ \tilde Y \end{pmatrix} \right]
  = (\sigma_X\ \sigma_Y)
  \begin{pmatrix}\pdiffII {J_\rho}xx(a, b) & \sigma \pdiffII {J_\rho}xy(a, b) \\ \sigma \pdiffII {J_\rho}xy(a, b) & \pdiffII {J_\rho}yy(a, b)\end{pmatrix}
  \begin{pmatrix}\sigma_X \\ \sigma_Y \end{pmatrix}
\]
where $\sigma_X = \sqrt{\E \tilde X^2}$ and $\sigma_Y = \sqrt{\E \tilde Y^2}$.
By Claim~\ref{clm:negative-semidefinite}.
\[
  (\sigma_X\ \sigma_Y)
  \begin{pmatrix}\pdiffII {J_\rho}xx(a, b) & \sigma \pdiffII {J_\rho}xy(a, b) \\ \sigma \pdiffII {J_\rho}xy(a, b) & \pdiffII {J_\rho}yy(a, b)\end{pmatrix}
  \begin{pmatrix}\sigma_X \\ \sigma_Y \end{pmatrix}
  = 
  (\sigma_X\ \sigma_Y)
  M_{\rho\sigma}(a, b)
  \begin{pmatrix}\sigma_X \\ \sigma_Y \end{pmatrix} \le 0.
\]
Applying this to~\eqref{eq:random-taylor}, we obtain
\[
 \E J_\rho(X, Y) \le J_\rho(\E X, \E Y) + C(\rho)\epsilon^{-C(\rho)} (\E |\tilde X|^3 + \E |\tilde Y|^3)
 %\qedhere
\]
\end{proof}

\subsection{The inductive step}
Next, we prove Theorem~\ref{thm:tensorization} by induction.
\begin{proof}[Proof of Theorem~\ref{thm:tensorization}]
 Suppose that the Theorem holds with $n$ replaced by $n-1$.
 Consider $f: \Omega_1^n \to [\epsilon, 1-\epsilon]$ and
 $g: \Omega_2^n \to [\epsilon, 1-\epsilon]$.
 
 Conditioning on $(X_n, Y_n)$
 and writing $\tilde X = (X_1, \dots, X_{n-1})$, $\tilde Y = (Y_1, \dots, Y_{n-1})$, we have
 \[
  \E J_\rho(f(X), g(Y)) = \E_{X_n,Y_n} \E_{\tilde X,\tilde Y} J_\rho(f_{X_n}(\tilde X), g_{Y_n}(\tilde Y)).
 \]
 Applying the inductive hypothesis for $n-1$ conditionally on $X_n$ and $Y_n$,
 \begin{multline}\label{eq:induction-1}
  \E_{\tilde X, \tilde Y} J_\rho(f_{X_n}(\tilde X), g_{Y_n}(\tilde Y)) \\
  \le J_\rho(\E[f_{X_n}| X_n], \E[g_{Y_n}|Y_n]) + C(\rho)\epsilon^{-C(\rho)} (\Delta_{n-1}(f_{X_n}) + \Delta_{n-1}(f_{Y_n})).
 \end{multline}
 On the other hand, the base case for $n=1$ implies that
 \begin{multline}\label{eq:induction-2}
  \E_{X_n,Y_n} J_\rho(\E[f_{X_n}| X_n], \E[g_{Y_n}|Y_n]) \\
  \le J_\rho(\E f, \E g)
  + C(\rho)\epsilon^{-C(\rho)} (\Delta_1 (\E[f_{X_n}| X_n]) + \Delta_1 (\E[g_{Y_n}|Y_n])).
 \end{multline}
 Taking the expectation of~\eqref{eq:induction-1} and combining it with~\eqref{eq:induction-2},
 we obtain
 \begin{multline*}
  \E J_\rho(f(X), g(Y)) \le J_\rho(\E f, \E g) \\
  + C(\rho)\epsilon^{-C(\rho)} \big(\E_{X_n} \Delta_{n-1}(f_{X_n}) + \E_{Y_n} \Delta_{n-1}(f_{Y_n}) \\
  + \Delta_1 (\E[f_{X_n}| X_n]) + \Delta_1 (\E[g_{Y_n}|Y_n])\big).
 \end{multline*}
 Finally, note that the definition of $\Delta_n$ implies that the right-hand side above is just
 \[
  J_\rho(\E f, \E g) + C(\rho)\epsilon^{-C(\rho)}\big(
  \Delta_n (f) + \Delta_n(g)
  \big).
  %\qedhere
 \]
\end{proof}

\section{Borell's inequality}
The most interesting special case of Theorem~\ref{thm:tensorization}
is when $\Omega_1 = \Omega_2 = \{-1, 1\}$ and the distributions of $X_i$, $Y_i$
satisfy $\E X_i = \E Y_i = 0$, $\E X_i Y_i = \rho$.
In this section and the next, we will focus on this special case.
First, let us recall the functional version of Borell's inequality
that was given in~\cite{MosselNeeman:12b}.

\begin{theorem}\label{thm:borell}
 Suppose that $G_1$ and $G_2$ are Gaussian vectors with joint distribution
 \[
  (G_1, G_2) \sim \mathcal{N}\left(0, \begin{pmatrix} I_d & \rho I_d \\ \rho I_d & I_d\end{pmatrix}\right).
 \]
 For any measurable $f_1, f_2: \R^d \to [0, 1]$,
 \[
  \E J(f_1(G_1), f_2(G_2)) \le J(\E f_1, \E f_2).
 \]
\end{theorem}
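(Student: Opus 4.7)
The plan is to deduce Theorem~\ref{thm:borell} from the discrete Theorem~\ref{thm:tensorization} by a central-limit approximation: I would replace each Gaussian coordinate by a normalized sum of $N$ $\rho$-correlated Bernoulli pairs, apply the tensorization inequality on $\bits^{Nd}\times\bits^{Nd}$, and let $N \to \infty$ so that the cubic error $\Delta_{Nd}$ vanishes while the two main terms converge by the CLT.

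First I would reduce to smooth $f_1, f_2$ of bounded gradient by convolving with a narrow Gaussian kernel (equivalently, applying Ornstein--Uhlenbeck smoothing for a small time); this produces $C^\infty$ approximants $\tilde f_i$ with $\|\nabla \tilde f_i\|_\infty < \infty$ that converge to $f_i$ in $L^2$ with respect to Gaussian measure, and since $J$ is bounded and continuous on $[0,1]^2$, both sides of the desired inequality are stable under this mollification by dominated convergence. I would then replace $\tilde f_i$ by $(1 - 2\epsilon)\tilde f_i + \epsilon$ to enforce the range $[\epsilon, 1-\epsilon]$ required by Theorem~\ref{thm:tensorization}, at a cost of $O(\epsilon)$ on each side, to be sent to zero at the end.

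Next, with smooth $f_i$ fixed and $N$ large, I would take i.i.d.\ pairs $(X^{(k)}_j, Y^{(k)}_j) \in \bits^2$ for $k \in [N]$, $j \in [d]$, with uniform marginals and $\E X^{(k)}_j Y^{(k)}_j = \rho$, and define $F_1, F_2 : \bits^{Nd} \to [\epsilon, 1-\epsilon]$ by
\[
 F_1(X) = f_1\!\left(\tfrac{1}{\sqrt N}\textstyle\sum_{k=1}^N X^{(k)}_1,\dots,\tfrac{1}{\sqrt N}\textstyle\sum_{k=1}^N X^{(k)}_d\right),
\]
and analogously for $F_2$. Theorem~\ref{thm:tensorization} then yields
\[
 \E J_\rho(F_1, F_2) \le J_\rho(\E F_1, \E F_2) + C(\rho)\epsilon^{-C(\rho)}\bigl(\Delta_{Nd}(F_1) + \Delta_{Nd}(F_2)\bigr).
\]
The joint law of the partial-sum vectors converges weakly to that of $(G_1, G_2)$ by the multivariate Lindeberg CLT (the covariance structure matches $\rho I_d$ across the two blocks), so continuity and boundedness of $f_i$ and $J_\rho$ yield $\E J_\rho(F_1, F_2) \to \E J_\rho(f_1(G_1), f_2(G_2))$ and $J_\rho(\E F_1, \E F_2) \to J_\rho(\E f_1, \E f_2)$.

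The main obstacle is controlling the cubic error $\Delta_{Nd}(F_i)$. Unrolling the recursive definition of $\Delta_n$, ordering the $Nd$ coordinates arbitrarily as $Z_1, \dots, Z_{Nd}$ and setting $h_\ell = \E[F_1 \mid Z_\ell, \dots, Z_{Nd}]$ (so $h_1 = F_1$ and $h_{Nd+1} = \E F_1$), one obtains by induction
\[
 \Delta_{Nd}(F_1) = \sum_{\ell=1}^{Nd} \E\,|h_\ell - h_{\ell+1}|^3.
\]
Flipping a single $\bits$-coordinate shifts the corresponding partial sum $S_j/\sqrt N$ by $2/\sqrt N$, so the Lipschitz bound $|F_1(X^+)-F_1(X^-)| \le 2\|\nabla f_1\|_\infty/\sqrt N$ passes through the averaging that defines $h_\ell$, giving $|h_\ell - h_{\ell+1}| \le \|\nabla f_1\|_\infty/\sqrt N$ pointwise. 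Hence each summand is $O(N^{-3/2})$ and $\Delta_{Nd}(F_1) = O(d\,\|\nabla f_1\|_\infty^3 / \sqrt N) \to 0$. Combined with the CLT step this proves the inequality for smooth $f_i$ with range in $[\epsilon, 1-\epsilon]$; sending $\epsilon \to 0$ and undoing the mollification recovers Theorem~\ref{thm:borell} in full generality.
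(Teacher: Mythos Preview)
Your proposal is correct and follows essentially the same route as the paper: reduce to Lipschitz functions with range in $[\epsilon,1-\epsilon]$, apply Theorem~\ref{thm:tensorization} to $f_i$ composed with normalized Bernoulli partial sums, bound $\Delta_{Nd}$ by $O(d\,L^3/\sqrt{N})$ via the Lipschitz constant, and pass to the limit with the multivariate CLT. The only cosmetic difference is that you unroll $\Delta_n$ into the exact martingale-increment identity $\sum_\ell \E|h_\ell-h_{\ell+1}|^3$ (the paper's Claim~\ref{clm:delta-fourier} in disguise), whereas the paper instead invokes the Jensen-type bound of Claim~\ref{clm:easy-delta-bound}; both yield the same $O(N^{-1/2})$ error.
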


We will prove Theorem~\ref{thm:borell} using Theorem~\ref{thm:tensorization}
and a crude bound on $\Delta_n(f)$ (in the next section, we will need
a much better bound on $\Delta_n(f)$ to prove that ``Majority is Stablest'').

\begin{claim}\label{clm:easy-delta-bound}
 For $X \in \{-1, 1\}^n$, define
 \[
  X^{-i} = (X_1, \dots, X_{i-1}, -X_i, X_{i+1}, \dots, X_n).
 \]
 Then
 \[
  \Delta_n(f) \le \sum_{i=1}^n \E |f(X) - f(X^{-i})|^3.
 \]
\end{claim}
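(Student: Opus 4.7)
The plan is to prove the claim by induction on $n$, exploiting the recursive definition of $\Delta_n$ and Jensen's inequality to collapse averages of the form $\E[f_{X_n}\mid X_n]$ into differences along single coordinate flips.

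First I would handle the base case $n=1$. For a function $f \colon \{-1,1\} \to \R$ with $X$ uniform, direct computation gives $\E f = \tfrac12(f(1)+f(-1))$, so $|f(X) - \E f| = \tfrac12|f(1)-f(-1)|$ while $|f(X) - f(X^{-1})| = |f(1)-f(-1)|$. Hence $\Delta_1(f) = \tfrac18 |f(1)-f(-1)|^3 \le \E |f(X) - f(X^{-1})|^3$, with room to spare.

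For the inductive step, assume the bound holds for $n-1$. The recursive definition gives
\[
\Delta_n(f) \;=\; \E_{X_n}\bigl[\Delta_{n-1}(f_{X_n})\bigr] \;+\; \Delta_1(h),
\]
where $h(X_n) := \E[f_{X_n}\mid X_n]$ is a function on $\{-1,1\}$. For the first summand I apply the inductive hypothesis to each restriction $f_{X_n}\colon \{-1,1\}^{n-1}\to\R$: since flipping coordinate $i<n$ in $f_{X_n}$ is the same as flipping coordinate $i$ in $f$, this gives
\[
\E_{X_n}[\Delta_{n-1}(f_{X_n})] \;\le\; \sum_{i=1}^{n-1}\E_X |f(X) - f(X^{-i})|^3.
\]
For the second summand I apply the base case to the one-variable function $h$, obtaining $\Delta_1(h) \le \tfrac18\cdot 2\,|h(1)-h(-1)|^3$; then, writing $h(1)-h(-1) = \E_{\tilde X}[f(\tilde X,1) - f(\tilde X,-1)]$ with $\tilde X = (X_1,\dots,X_{n-1})$, Jensen's inequality $|\E Z|^3 \le \E |Z|^3$ gives
\[
|h(1)-h(-1)|^3 \;\le\; \E_{\tilde X}|f(\tilde X,1) - f(\tilde X,-1)|^3 \;=\; \E_X |f(X)-f(X^{-n})|^3.
\]
Summing the two contributions yields exactly the claimed bound.

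There isn't really a hard step here; the only point that needs care is keeping track of what the expectation in $\E[f_{X_n}\mid X_n]$ is over (namely the first $n-1$ coordinates) so that Jensen's inequality can be correctly deployed to transform the cube of an average into an average of cubes. Once that observation is made, the induction is mechanical.
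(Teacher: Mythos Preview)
Your proposal is correct and follows essentially the same route as the paper's proof: induction on $n$, applying the inductive hypothesis to $\E_{X_n}[\Delta_{n-1}(f_{X_n})]$ and handling $\Delta_1(\E[f_{X_n}\mid X_n])$ via Jensen's inequality $|\E Z|^3 \le \E|Z|^3$ to turn the averaged $n$th-coordinate difference into $\E|f(X)-f(X^{-n})|^3$. (The intermediate constant ``$\tfrac18\cdot 2$'' is a harmless slip---in fact $\Delta_1(h)=\tfrac18|h(1)-h(-1)|^3$ exactly---but since you only need an upper bound it does not affect the argument.)
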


\begin{proof}
 The proof is by induction: the base case is trivial, while the inductive step
 follows by Jensen's inequality:
 \begin{align*}
  \Delta_n(f)
  &= \E_{X_n}[\Delta_{n-1}(f_{X_n})] + \E_{X_n} |\E[f_{X_n} | X_n] - \E f|^3 \\
  &\le  \sum_{i=1}^{n-1} \E |f(X) - f(X^{-i})|^3 + \E_{X_n} |\E[f_{X_n} | X_n] - \E [f_{-X_n} | X_n]|^3 \\
  &\le  \sum_{i=1}^{n-1} \E |f(X) - f(X^{-i})|^3 + \E |f(X) - f(X^{-n})|^3.
  %\qedhere
 \end{align*}
\end{proof}

\begin{proof}[Proof of Theorem~\ref{thm:borell}]
 Let $n = md$ and, for each $i = 1, \dots, d$, define
 \begin{align*}
  G_{1,n} &= \frac{1}{\sqrt m} \left(
  \sum_{i=1}^m X_i,
  \sum_{i=m+1}^{2m} X_i,
  \dots,
  \sum_{i=(m-1)d + 1}^{md} X_i
  \right).
 \end{align*}
 Define $G_{2,n}$ similarly by with $Y$ instead of $X$. By the multivariate
 central limit theorem, $(G_{1,n}, G_{2,n}) \toD (G_1, G_2)$ as $m \to \infty$.
 
 Suppose first that $f_1$ and $f_1$ are $L$-Lipschitz functions taking values
 in $[\epsilon, 1-\epsilon]$, and define $g_1, g_2: \{-1, 1\}^n$ by
 $g_i(X) = f_i(G_{1,n})$. By Theorem~\ref{thm:tensorization},
 \begin{equation}\label{eq:borell-proof-1}
  \E J(g_1(X), g_2(Y)) \le J(\E g_1, \E g_2) + C(\epsilon) (\Delta_n(g_1) + \Delta_n(g_2)).
 \end{equation}
 Since $f_i$ is $L$-Lipschitz,
 \[
  |g_i(X) - g_i(X^{-j})| \le \frac{2L}{\sqrt m}
 \]
 for every $j$, and so Claim~\ref{clm:easy-delta-bound} implies that
 \[
  \Delta_n(g_i) \le \frac{8 L^3 n}{m^{3/2}} = \frac{8 L^3 d}{\sqrt m}.
 \]
 Applying this to~\eqref{eq:borell-proof-1},
 \[
  \E J(g_1(X), g_2(Y)) \le J(\E g_1, \E g_2) + C(\epsilon)\frac{16 L^3 d C(\epsilon)}{\sqrt m}
 \]
 and so the definition of $g_i$ implies
 \[
  \E J(f_1(G_{1,n}), f_2(G_{2,n})) \le J(\E f_1(G_{1,n}), \E f_2(G_{2,n})) + C(\epsilon)\frac{16 L^3 d C(\epsilon)}{\sqrt m}
 \]
 Taking $m \to \infty$, the central limit theorem implies that
 \begin{equation}\label{eq:borell-lipschitz}
  \E J(f_1(G_{1}), f_2(G_{2})) \le J(\E f_1(G_{1}), \E f_2(G_{2})).
 \end{equation}

 This establishes the theorem for functions $f_1$ and $f_2$ which are Lipschitz and take values
 in $[\epsilon, 1-\epsilon]$. But any measurable $f_1, f_2: \R^d \to [0, 1]$ can be approximated
 (say in $L^p(\R^d, \gamma_d)$) by Lipschitz functions with values in $[\epsilon, 1-\epsilon]$.
 Since neither the Lipschitz constant nor $\epsilon$ appears in~\eqref{eq:borell-lipschitz},
 the general statement of the theorem follows from the dominated convergence theorem.
\end{proof}

\section{Majority is stablest}

By giving a bound on $\Delta_n(f)$ that is better than Claim~\ref{clm:easy-delta-bound},
we can derive the ``Majority is Stablest'' theorem from Theorem~\ref{thm:tensorization}.
Indeed, we can express $\Delta_n(f)$ in terms of the Fourier coefficients
of $f$, and we can bound $\Delta_n(f)$ in terms of the max influence of $f$.
For this, we will introduce some very basic Fourier analytic preliminaries below. 
\subsection*{Fourier analysis:} We start by defining the ``character" functions i.e. for every $S \subseteq [n]$, define $\chi_S(x) : \{-1,1\}^n \rightarrow \mathbb{R}$ as
$\chi_S(x) = \prod_{i \in S} x_i$. Now, every function $f : \{-1,1\}^n \rightarrow \mathbb{R}$ can be expressed as 
$$
f(x) = \sum_{S \subseteq [n]} \wh{f}(S) \chi_S(x) \quad \quad  \wh{f}(S) = \mathop{\mathbf{E}}_{x \in \{-1,1\}^n} [f(x) \cdot \chi_S(x)] 
$$
The coefficients $\wh{f}(S)$ are called the Fourier coefficients of $f$ and the expansion of $f$ in terms of $\wh{f}(S)$ is called the Fourier expansion of $f$. It is easy to show that
$\sum_{S \subseteq[n]} \wh{f}^2(S) =  \mathop{\mathbf{E}}_{x \in \{-1,1\}^n} [f^2(x)]$. This is known in literature as Parseval's identity. Similarly, for any $\rho \in [-1,1]$ and $x \in \{-1,1\}^n$, we define $y \sim_{\rho} x$ as the distribution over $\{-1,1\}^n$ where every bit of $y$ is independent and $\mathbf{E}[x_i y_i ] =\rho$. This immediately  lets us define the noise operator $T_{\rho}$ as follows : For any function $f : \{-1,1\}^n \rightarrow \mathbb{R}$, $T_{\rho} f(x) = \mathbf{E}_{y \sim_{\rho} x} [f(y)]$. The effect of the noise operator $T_{\rho}$ is particularly simple to describe on the fourier spectrum. $\wh{T_{\rho} f}(S) = \rho^{|S|} \wh{f}(S)$. The reader is referred to the excellent set of lecture notes by Ryan O'Donnell \cite{ODonnell:BFA} for an extensive reference on this topic. 

It is also important to remark here that while we prove the ``Majority is Stablest'' theorem for the hypercube with the uniform measure, one can easily derive analogues of this theorem for more general product spaces by extending our machinery. Instead of using the fourier expansion of the function, one has to use the Efron-Stein decomposition (see the lecture notes by Mossel~\cite{Mossel:05a} for an extensive reference on the Efron-Stein decomposition). All the statements that we prove here have analogues in the Efron-Stein world. We leave it to the expert reader to fill in the details. 

We start by extending  the notation of Definition~\ref{def:one-restriction}:

\begin{definition}
For disjoint sets $S, T \subset [n]$, and elements $x \in \{-1, 1\}^S, y \in \{-1, 1\}^T$,
we write $x\cdot y$ for their concatenation in
$\{-1, 1\}^{S \cup T}$.

For a function $f: \{-1, 1\}^n \to \R$, a set $S \subset [n]$,
and an element $x \in \{-1, 1\}^S$, we define
$f_x: \{-1, 1\}^{[n] \setminus S} \to \R$ by
$f_x(y) = f(x \cdot y)$.
\end{definition}

Our first observation is that $\Delta_n$ of $f$ can be written in terms of
Fourier coefficients of random restrictions of $f$.

\begin{claim}\label{clm:delta-fourier}
If $S_i = \{i+1, \dots, n\}$, then
 \[
  \Delta_n(f) = \sum_{i=1}^{n} \E_{X \in \{-1, 1\}^{S_i}} |\wh{f_X}(i)|^3.
 \]
\end{claim}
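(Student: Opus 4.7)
The plan is to prove this by induction on $n$, mirroring the recursive definition of $\Delta_n$ term by term, and using the fact that for a one-variable function $g:\{-1,1\}\to\R$ with Fourier expansion $g(x) = \wh{g}(\emptyset) + \wh{g}(1)\,x$, one has $\Delta_1(g) = \E|g-\E g|^3 = |\wh{g}(1)|^3$. This already gives the base case $n=1$: since $S_1 = \emptyset$, the restriction $X$ is trivial, $f_X = f$, and the sum on the right reduces to $|\wh{f}(1)|^3 = \Delta_1(f)$.

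For the inductive step, the first key observation is that the one-variable function $h(X_n) := \E[f_{X_n}\mid X_n]$ has linear Fourier coefficient $\E_{X_n}[X_n\, h(X_n)] = \E[f(X)X_n] = \wh{f}(n)$. Hence by the base case, $\Delta_1(\E[f_{X_n}\mid X_n]) = |\wh{f}(n)|^3$, and this matches precisely the $i=n$ summand on the right-hand side of the claim, where $S_n = \emptyset$ and $f_X = f$.

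The second step is to handle $i < n$ by splitting any $X \in \{-1,1\}^{S_i}$ as $X = (Y, X_n)$ with $Y \in \{-1,1\}^{\{i+1,\dots,n-1\}}$. Concatenation gives $f_X = (f_{X_n})_Y$, so
\[
\sum_{i=1}^{n-1} \E_{X \in \{-1,1\}^{S_i}} |\wh{f_X}(i)|^3
= \E_{X_n}\!\left[\sum_{i=1}^{n-1} \E_{Y} |\wh{(f_{X_n})_Y}(i)|^3\right],
\]
and applying the inductive hypothesis to each restriction $f_{X_n}:\{-1,1\}^{n-1}\to\R$ identifies the inner sum as $\Delta_{n-1}(f_{X_n})$. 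Combining with the $i=n$ term produces $\E_{X_n}[\Delta_{n-1}(f_{X_n})] + |\wh{f}(n)|^3$, which is exactly $\Delta_n(f)$ by its recursive definition.

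There is no substantive obstacle here; the argument is pure bookkeeping. The only care required is to correctly align the ``peel off the last coordinate'' operation in the recursion defining $\Delta_n$ with the corresponding operation on the Fourier side (noting that the $i=n$ boundary contribution is exactly what the $\Delta_1(\E[f_{X_n}|X_n])$ piece of the recursion contributes).
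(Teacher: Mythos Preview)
Your proof is correct and follows essentially the same approach as the paper's: induction on $n$, with the base case $\Delta_1(g)=|\wh g(1)|^3$, and the inductive step obtained by unfolding the recursion $\Delta_n(f)=\E_{X_n}[\Delta_{n-1}(f_{X_n})]+\Delta_1(\E[f_{X_n}\mid X_n])$, identifying the second term as $|\wh f(n)|^3$ and the first (via the inductive hypothesis) as the sum over $i<n$. The paper's write-up is terser, but the argument is the same.
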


\begin{proof}
 The proof is by induction. The base case is just the fact that
 for a function $f: \{-1, 1\} \to \R$,
 \[
  |\wh{f}(1)|^3 = \Big|\frac{f(1) - f(-1)}{2}\Big|^3 = \E |f - \E f|^3.
 \]
 For the inductive step, we have
 \begin{align*}
  \Delta_n(f)
  &= \E_{X_n}[\Delta_{n-1}(f_{X_n})] + \Delta_1(\E [f_{X_n} | X_n]) \\
  &= \E_{X_n} \left[\sum_{i=1}^{n-1} \E_{X_{i+1}, \dots, X_{n-1}} |\wh{f_X} (i)|^3\right] + |\wh f(n)|^3 \\
  &= \sum_{i=1}^n \E_{X \in \{-1, 1\}^{S_i}} |\wh{f_X}(i)|^3.
  %\qedhere
 \end{align*}
\end{proof}

In order to control the Fourier coefficients of restrictions of $f$, we can
write them in terms of the Fourier coefficients of $f$:
\begin{claim}\label{clm:expansion-of-restriction}
For any disjoint $S$ and $U$ and any $x \in \{-1, 1\}^S$,
\[
 \wh{f_{x}}(U) =
 \sum_{T \subset S} \chi_T(x) \wh{f}(T \cup U).
\]
\end{claim}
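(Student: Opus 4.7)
The plan is to expand $f$ in its Fourier series, substitute into the definition of $f_x$, and read off the coefficient of $\chi_U$. The only real fact one needs is that the characters are multiplicative across disjoint coordinate blocks: for any $V \subset [n]$, writing $T = V \cap S$ and $U' = V \cap ([n] \setminus S)$,
\[
 \chi_V(x \cdot y) = \prod_{i \in V} (x\cdot y)_i = \prod_{i \in T} x_i \cdot \prod_{i \in U'} y_i = \chi_T(x)\,\chi_{U'}(y),
\]
where the product splits because $S$ and $[n] \setminus S$ are disjoint.

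Given that, the main step is immediate. Starting from $f = \sum_{V \subset [n]} \widehat f(V)\,\chi_V$ and grouping the sum over $V$ according to the partition $V = T \sqcup U'$ with $T \subset S$ and $U' \subset [n] \setminus S$, I would write
\[
 f_x(y) \;=\; f(x \cdot y) \;=\; \sum_{U' \subset [n] \setminus S}\, \sum_{T \subset S} \widehat f(T \cup U')\, \chi_T(x)\, \chi_{U'}(y).
\]
This is a valid Fourier expansion of $f_x$ on $\{-1,1\}^{[n] \setminus S}$, because the bracketed quantity $\sum_{T \subset S} \chi_T(x)\,\widehat f(T \cup U')$ is a constant in $y$ (it depends only on the fixed restriction $x$). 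By uniqueness of the Fourier expansion, reading off the coefficient of $\chi_{U}(y)$ gives
\[
 \widehat{f_x}(U) \;=\; \sum_{T \subset S} \chi_T(x)\, \widehat f(T \cup U),
\]
which is exactly the claimed identity.

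There is no real obstacle here; the statement is essentially bookkeeping, and all that needs care is the notational convention that the Fourier coefficients of $f_x$ are indexed by subsets of $[n] \setminus S$ (which is consistent with $f_x$ being a function on $\{-1,1\}^{[n] \setminus S}$), so that the set $T \cup U$ appearing on the right-hand side is an unambiguous subset of $[n]$ with the pieces in the correct coordinate blocks. The whole proof is one display plus a sentence invoking uniqueness of the Fourier representation.
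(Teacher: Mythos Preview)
Your proof is correct. It takes a slightly different route from the paper's: the paper introduces the indicator function $g(y) = 2^{-|S|}\sum_{T \subset S}\chi_T(x)\chi_T(y)$ of the event $\{y_i = x_i \text{ for all } i \in S\}$, rewrites $\widehat{f_x}(U)$ as $2^{|S|}\,\E_X[f(X)\,g(X)\,\chi_U(X)]$ over the full cube, and then expands $g$ in characters. Your approach instead substitutes the Fourier expansion of $f$ directly into $f_x(y) = f(x\cdot y)$, splits each character across the two coordinate blocks, and reads off the coefficient of $\chi_U$ by uniqueness. Both arguments are elementary and of comparable length; yours is perhaps the more direct of the two, while the paper's indicator-function trick generalizes nicely to other averaging identities. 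Either way the claim is just bookkeeping, as you say.
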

\begin{proof}
Fix $S$ and $x$.
 Let $g: \{-1, 1\}^n \to \R$ be the function such that $g(y) = 1$ when $y_i = x_i$ for all $i \in S$, and $g(y) = 0$ otherwise.
 It is easy to check that the Fourier expansion of $g$ is
 \[
  g(y) = 2^{-|S|} \sum_{T \subset S} \chi_T(x) \chi_T(y).
 \]
 Then
 \begin{align*}
  \wh{f_{x}}(U)
  &= \E_{X_{[n] \setminus S}} f_{x}(X_{[n] \setminus S}) \chi_U(X_{[n] \setminus S}) \\
  &= 2^{|S|} \E_{X} f(X) g(X) \chi_U(X) \\
  &= \E_{X} f(X) \sum_{T \subset S} \chi_T(x) \chi_T(X) \chi_U(X) \\
  &= \sum_{T \subset S} \chi_T(x) \wh{f}(T \cup U).
  %\qedhere
 \end{align*}
\end{proof}

In particular, the identity of Claim~\ref{clm:expansion-of-restriction} allows
us to compute second moments of $\wh f_X$:

\begin{claim}\label{clm:squared-restrictions}
 For any function $f: \{-1, 1\}^n \to \R$, any $x \in \{-1, 1\}^S$ and any $i \in U \subset [n]$,
 \[
  \E_{X \in \{-1, 1\}^S} |\wh{f_{X}}(U)|^2 \le \Inf_i(f).
 \]
 Moreover, if $S_i = \{i+1, \dots, n\}$ then
 \[
  \sum_{i=1}^n \E_{X \in \{-1, 1\}^{S_i}} |\wh{f_X}(i)|^2 = \Var(f).
 \]
\end{claim}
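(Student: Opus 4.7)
The plan is to deduce both parts from Claim~\ref{clm:expansion-of-restriction} together with Parseval's identity applied on the sub-cube $\{-1,1\}^S$. Since $\wh{f_X}(U) = \sum_{T \subset S} \chi_T(X)\, \wh{f}(T \cup U)$ is, as a function of $X \in \{-1,1\}^S$, a Fourier expansion in the orthonormal basis $\{\chi_T : T \subset S\}$, averaging its square over $X$ simply collapses to the sum of squared coefficients. This gives the identity
\[
\E_{X \in \{-1,1\}^S} |\wh{f_X}(U)|^2 \;=\; \sum_{T \subset S} \wh{f}(T \cup U)^2,
\]
from which the whole claim will follow by bookkeeping on which subsets of $[n]$ appear on the right-hand side.

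For the first (inequality) part, I would use that $S$ and $U$ are disjoint (as in Claim~\ref{clm:expansion-of-restriction}) and that $i \in U$. Under these hypotheses, every set $V := T \cup U$ appearing in the sum contains the coordinate $i$, and distinct choices of $T \subset S$ yield distinct $V$'s (since $T = V \cap S$). Thus the sum $\sum_{T \subset S}\wh{f}(T \cup U)^2$ is a subsum of $\sum_{V \ni i} \wh{f}(V)^2$, which by the standard Fourier formula is exactly $\Inf_i(f)$. The inequality follows.

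For the second (identity) part, I specialize the above computation to $U = \{i\}$ and $S = S_i = \{i+1, \dots, n\}$. Then $T \cup \{i\}$ ranges, as $T$ varies over subsets of $\{i+1, \dots, n\}$, precisely over those $V \subset [n]$ whose minimum element equals $i$. Summing over $i = 1, \dots, n$ therefore enumerates every nonempty $V \subset [n]$ exactly once, so the total is $\sum_{V \neq \emptyset} \wh{f}(V)^2$, which is $\Var(f)$ by Parseval.

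There is no real obstacle here: the only subtlety is keeping track of the indexing to verify that the partition of nonempty subsets of $[n]$ by their minimum element matches the double sum coming from the restrictions. Once that combinatorial identification is made, the proof is just one invocation of Claim~\ref{clm:expansion-of-restriction} and one invocation of Parseval on $\{-1,1\}^S$.
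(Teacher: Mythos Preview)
Your proposal is correct and essentially identical to the paper's proof: both use Claim~\ref{clm:expansion-of-restriction} and orthogonality of characters on $\{-1,1\}^S$ (i.e., Parseval) to obtain $\E_{X}|\wh{f_X}(U)|^2 = \sum_{T \subset S}\wh f(T\cup U)^2$, then bound by $\Inf_i(f)$ for the first part and partition nonempty subsets by their minimum element for the second. There is nothing to add.
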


\begin{proof}
 In view of Claim~\ref{clm:expansion-of-restriction}, we can write
 \[
  |\wh{f_{X_S}}(U)|^2
  = \sum_{T,T'\subset S} \chi_T(X_S) \chi_{T'}(X_S) \wh f(T \cup U) \wh f(T' \cup U).
 \]
 When we take the expectation with respect to $X_S$,
 $\E \chi_T(X_S) \chi_{T'}(X_S) = \delta_{T,T'}$ and so the cross-terms vanish:
 \begin{equation}\label{eq:one-squared-restriction}
  \E_{X_S} |\wh{f_{X_S}}(U)|^2
  = \sum_{T\subset S} \wh f^2(T \cup U).
 \end{equation}
 Since $\Inf_i(f) = \sum_{T \ni i} \wh f^2(T)$, the first part of the claim follows.
 
 For the second part,
 \begin{equation}\label{eq:squared-restrictions-1}
  \sum_{i=1}^n \E_{X \in \{-1, 1\}^{S_i}} |\wh{f_X}(i)|^2
  = \sum_{i=1}^n \sum_{T \subset S_i} \wh f^2(T \cup \{i\})
  = \sum_{U \subset [n], U \ne \emptyset} \wh f^2(U),
 \end{equation}
 where the last equality used the fact that every non-empty $U \subset [n]$
 can be written uniquely in the form $T \cup \{i\}$ for some $T \subset \{i+1, \dots, n\}$.
 But of course the right-hand side of~\eqref{eq:squared-restrictions-1} is just
 $\Var(f)$.
\end{proof}

Next, we will consider $\wh f_x(n-i)$ as a polynomial in $x$ and apply hypercontractivity to the
right hand side of Claim~\ref{clm:delta-fourier}. First, note that $T_\sigma$ commutes (up to
a multiplicative factor) with restriction:
\begin{claim}\label{clm:T_rho-commutation}
For any $0 < \sigma < 1$, if $S, U \subset [n]$ are disjoint then,
as polynomials in $x = (x_i)_{i \in S}$,
 \[
  \wh{(T_\sigma f)_x}(U) = \sigma^{|U|} T_\sigma (\wh{f_x}(U)).
 \]
\end{claim}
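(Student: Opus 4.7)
The plan is to expand both sides using the formula from Claim~\ref{clm:expansion-of-restriction} and then use the fact that the noise operator acts diagonally on the Fourier basis, multiplying $\wh{g}(T)$ by $\sigma^{|T|}$. Since both sides of the identity are polynomials in $x = (x_i)_{i \in S}$, it suffices to compare their Fourier expansions in the variables $x_i$.

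For the left-hand side, I would apply Claim~\ref{clm:expansion-of-restriction} to the function $T_\sigma f$ in place of $f$, which gives
\[
\wh{(T_\sigma f)_x}(U) = \sum_{T \subset S} \chi_T(x)\, \wh{T_\sigma f}(T \cup U).
\]
Since $\wh{T_\sigma f}(V) = \sigma^{|V|} \wh{f}(V)$, and since $T$ and $U$ are disjoint so that $|T \cup U| = |T| + |U|$, this becomes
\[
\wh{(T_\sigma f)_x}(U) = \sigma^{|U|} \sum_{T \subset S} \sigma^{|T|}\, \chi_T(x)\, \wh{f}(T \cup U).
\]

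For the right-hand side, I would again invoke Claim~\ref{clm:expansion-of-restriction}, this time to view $\wh{f_x}(U)$ as a polynomial in $x \in \{-1,1\}^S$ whose Fourier expansion (in the variables $x_i$, $i \in S$) is
\[
\wh{f_x}(U) = \sum_{T \subset S} \wh{f}(T \cup U)\, \chi_T(x).
\]
Applying the noise operator $T_\sigma$ to this polynomial in $x$ multiplies the coefficient of $\chi_T$ by $\sigma^{|T|}$, yielding
\[
T_\sigma\bigl(\wh{f_x}(U)\bigr) = \sum_{T \subset S} \sigma^{|T|}\, \wh{f}(T \cup U)\, \chi_T(x).
\]
Multiplying by $\sigma^{|U|}$ matches the expression computed for the left-hand side, completing the proof.

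There is no real obstacle here: the statement is essentially a bookkeeping consequence of the fact that restricting to $x \in \{-1,1\}^S$ corresponds to picking out Fourier coefficients indexed by sets of the form $T \cup U$ with $T \subset S$, and that $T_\sigma$ multiplies such a coefficient by $\sigma^{|T|+|U|}$, which separates cleanly into the factor $\sigma^{|U|}$ outside and the factor $\sigma^{|T|}$ that coincides with the action of $T_\sigma$ on the variables $x_i$.
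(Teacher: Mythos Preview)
Your proof is correct and follows essentially the same approach as the paper: both expand via Claim~\ref{clm:expansion-of-restriction}, use that $\wh{T_\sigma f}(T\cup U)=\sigma^{|T|+|U|}\wh f(T\cup U)$ and $T_\sigma\chi_T=\sigma^{|T|}\chi_T$, and match terms.
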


\begin{proof}
By Claim~\ref{clm:expansion-of-restriction},
 \[
  \wh{f_x}(U) = \sum_{T \subset S} \chi_T(x) \wh f (T \cup U).
 \]
 Since $\wh{T_\sigma f}(T \cup U) = \sigma^{|T| + |U|} \wh f(S)$ and $T_\sigma \chi_T(x) = \sigma^{|T|}\chi_T$,
 it follows that
 \[
  \wh{(T_\sigma f)_x}(U) = \sum_{T \subset S} \sigma^{|T|+|U|} \chi_T(x) \wh f(T \cup U)
   = \sigma^{|U|} T_\sigma (\wh{f_x}(U)).
 \]
\end{proof}

Essentially, Claim~\ref{clm:T_rho-commutation} allows us to apply the Bonami-Beckner inequality
to $\wh{f_X}$: for any $\sigma < 1$, if $p = 1 + \sigma^{-2}$ then
\[
 \E_{X \in \{-1, 1\}^S} |\wh{(T_\sigma f)_X}(U)|^p \le (\E_X |\wh f_X(U)|^2)^{p/2}.
\]
By Claim~\ref{clm:squared-restrictions}, if $i \in U$ then
\[
 \E_{X \in \{-1, 1\}^S} |\wh{(T_\sigma f)_X}(U)|^p \le (\Inf_i(f))^{\frac{p-2}{2}} (\E_X |\wh f_X(U)|^2).
\]
Applying this to $S_i = \{i+1, \dots, n\}$ and $U_i = \{i\}$ and summing the result over $i = 1, \dots, n$,
we obtain
\[
 \sum_i \E_{X \in \{-1, 1\}^{S_i}} |\wh{(T_\sigma f)_X}(i)|^p \le (\max_i \Inf_i(f))^{\frac{p-2}{2}} \Var(f).
\]
Now, if $f$ takes values in $[-1, 1]$ then $\Var(f) \le 1$ and all Fourier coefficients of $f$
(and its restriction) are bounded by 1. Hence, Claim~\ref{clm:delta-fourier} implies the following:
\begin{claim}\label{clm:delta-bound}
If $1 + \sigma^{-2} \le 3$ then 
\[
 \Delta_n(T_\sigma f) \le (\max_i \Inf_i(f))^{\frac{1-\sigma^2}{2\sigma^2}}.
\]
\end{claim}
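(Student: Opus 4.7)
The claim is essentially a one-line consequence of what precedes it. The plan is to combine Claim~\ref{clm:delta-fourier}, which gives
\[
\Delta_n(T_\sigma f) = \sum_{i=1}^n \E_{X \in \{-1,1\}^{S_i}} |\wh{(T_\sigma f)_X}(i)|^3,
\]
with the hypercontractive bound displayed immediately before the claim, namely
\[
\sum_{i=1}^n \E_{X \in \{-1,1\}^{S_i}} |\wh{(T_\sigma f)_X}(i)|^p \le \bigl(\max_i \Inf_i(f)\bigr)^{(p-2)/2} \Var(f)
\]
for $p = 1 + \sigma^{-2}$. The only task left is to pass from the $p$-th power sum to the cube sum.

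The hypothesis $1 + \sigma^{-2} \le 3$ is precisely $p \le 3$, and the implicit assumption (carried forward from the paragraph preceding the claim) that $f$ takes values in $[-1,1]$ ensures $|(T_\sigma f)_X| \le 1$ pointwise, and hence $|\wh{(T_\sigma f)_X}(i)| \le 1$ via the trivial estimate $|\wh g(U)| = |\E g\chi_U| \le \E|g|$. Under these two conditions the termwise inequality $|a|^3 \le |a|^p$ holds, so
\[
\Delta_n(T_\sigma f) \;\le\; \sum_i \E_X |\wh{(T_\sigma f)_X}(i)|^p \;\le\; \bigl(\max_i \Inf_i(f)\bigr)^{(p-2)/2}\Var(f).
\]

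To finish, observe that $\Var(f) \le 1$ (again using $|f|\le 1$) and $(p-2)/2 = (\sigma^{-2} - 1)/2 = (1-\sigma^2)/(2\sigma^2)$, yielding the claimed bound. There is no real obstacle: all the substantive work has been done in Claims~\ref{clm:delta-fourier}--\ref{clm:T_rho-commutation}, in the application of the Bonami--Beckner inequality to the polynomials $\wh{f_X}(U)$ in the variables $X$, and in the summation using Claim~\ref{clm:squared-restrictions} that produced the displayed $p$-norm bound. The role of the hypothesis $p \le 3$ is exactly to allow the monotone comparison $|a|^3 \le |a|^p$ on $[-1,1]$, which is the single missing step.
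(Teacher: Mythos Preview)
Your proposal is correct and matches the paper's own argument essentially line for line: the paper also derives the displayed $p$-th moment bound via Bonami--Beckner and Claim~\ref{clm:squared-restrictions}, then invokes $|f|\le 1$ to get $\Var(f)\le 1$ and $|\wh{(T_\sigma f)_X}(i)|\le 1$, so that the hypothesis $p = 1+\sigma^{-2}\le 3$ permits the comparison $|a|^3 \le |a|^p$. There is nothing to add.
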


Now we are ready to prove the ``Majority is Stablest'' theorem. For this, we define $\Stab_\rho(f)$ as $$ \Stab_\rho(f)= \mathbf{E}_{x \in \{-1,1\}^n, y \sim_{\rho} x} [f(x) f(y)]$$
\begin{theorem}\label{thm:MIST}
For any $0 < \rho < 1$, there are constants $0 < c(\rho), C(\rho) < \infty$ such that for any function
$f: \{-1, 1\}^n \to [0, 1]$ with $\max_i \Inf_i(f) \le \tau$,
\[
 \Stab_\rho(f) \le J_\rho(\E f, \E f) + C(\rho) \frac{\log \log (1/\tau)}{\log (1/\tau)}.
\]
\end{theorem}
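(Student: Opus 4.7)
The plan is to combine the tensorization theorem (Theorem~\ref{thm:tensorization}) with the FKG inequality for bivariate Gaussians and the hypercontractive bound on $\Delta_n$ from Claim~\ref{clm:delta-bound}, all applied to a smoothed, truncated version of $f$. Fix parameters $\eta \in (0, 1/2]$ and $\epsilon \in (0, 1/2)$ to be chosen at the end, set $\sigma = \sqrt{1-\eta}$, and define
\[
 g = (1-2\epsilon) T_\sigma f + \epsilon,
\]
so that $g : \{-1,1\}^n \to [\epsilon, 1-\epsilon]$ with $\E g = (1-2\epsilon) \E f + \epsilon$. Applying Theorem~\ref{thm:tensorization} to $g$ (as both arguments) at correlation $\rho$ gives
\[
 \E J_\rho(g(x), g(y)) \le J_\rho(\E g, \E g) + 2\, C(\rho)\, \epsilon^{-C(\rho)}\, \Delta_n(g).
\]
The FKG inequality for the bivariate Gaussian with $\rho \ge 0$ says $J_\rho(a,b) \ge ab$ on $(0,1)^2$, so the left side is at least $\E[g(x) g(y)] = \Stab_\rho(g)$. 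Since $\Delta_n$ is homogeneous of degree three under affine rescaling, $\Delta_n(g) = (1-2\epsilon)^3 \Delta_n(T_\sigma f) \le \Delta_n(T_\sigma f)$, and Claim~\ref{clm:delta-bound} gives $\Delta_n(T_\sigma f) \le \tau^{(1-\sigma^2)/(2\sigma^2)} \le \tau^{\eta/2}$ (applicable since $\sigma^2 \ge 1/2$).

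Next I would unwind $g$. Using $\Stab_\rho(T_\sigma f) = \sum_S \rho^{|S|} \sigma^{2|S|} \wh{f}(S)^2 = \Stab_{\rho(1-\eta)}(f)$, one obtains
\[
 \Stab_\rho(g) = (1-2\epsilon)^2\, \Stab_{\rho(1-\eta)}(f) + 2\epsilon(1-2\epsilon)\, \E f + \epsilon^2.
\]
The partial derivatives $\partial J_\rho/\partial x$ and $\partial J_\rho/\partial y$ equal conditional Gaussian probabilities, so they are bounded by $1$ on $(0,1)^2$; since $|\E g - \E f| \le \epsilon$, this yields $|J_\rho(\E g, \E g) - J_\rho(\E f, \E f)| \le 2\epsilon$. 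Dividing the resulting combined inequality by $(1-2\epsilon)^2 = 1 + O(\epsilon)$ gives
\[
 \Stab_{\rho(1-\eta)}(f) \le J_\rho(\E f, \E f) + O(\epsilon) + O(\epsilon^{-C(\rho)})\, \tau^{\eta/2}.
\]
Finally, $\rho' \mapsto \Stab_{\rho'}(f) = \sum_S (\rho')^{|S|} \wh{f}(S)^2$ has nonnegative coefficients, so it is a convex polynomial bounded above by $\E f^2 \le 1$; convexity gives $(d/d\rho')\,\Stab_{\rho'}(f)|_{\rho'=\rho} \le 1/(1-\rho)$, and hence $\Stab_\rho(f) - \Stab_{\rho(1-\eta)}(f) \le \eta \rho/(1-\rho)$. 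Collecting everything,
\[
 \Stab_\rho(f) \le J_\rho(\E f, \E f) + O_\rho(\epsilon) + O_\rho(\eta) + O_\rho(\epsilon^{-C(\rho)})\, \tau^{\eta/2}.
\]

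To conclude, take $\epsilon = 1/\log(1/\tau)$ and $\eta = K(\rho) \log\log(1/\tau)/\log(1/\tau)$ with $K(\rho)$ large enough that $\epsilon^{-C(\rho)} \tau^{\eta/2} = O(1/\log(1/\tau))$; then all four error terms are $O_\rho(\log\log(1/\tau)/\log(1/\tau))$, which gives the claimed bound. The main obstacle is the two-sided balancing of $\eta$: it must be small enough that the loss $O(\eta)$ incurred when passing from $\Stab_{\rho(1-\eta)}(f)$ back to $\Stab_\rho(f)$ is controlled, yet large enough that the hypercontractive gain $\tau^{\eta/2}$ defeats the $\epsilon^{-C(\rho)}$ blowup from the third-derivative bound on $J_\rho$ near the boundary of $(0,1)^2$. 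This tension is precisely what produces the $\log\log/\log$ rate.
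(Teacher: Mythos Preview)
Your proof is correct and follows the same overall strategy as the paper: smooth by $T_\sigma$, apply Theorem~\ref{thm:tensorization} together with Claim~\ref{clm:delta-bound} and the inequality $J_\rho(a,b)\ge ab$, then undo the smoothing and optimize $\epsilon,\eta$ to get the $\log\log(1/\tau)/\log(1/\tau)$ rate.

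Two tactical differences are worth noting. First, to force the range into $[\epsilon,1-\epsilon]$ the paper truncates $f$ pointwise (relying implicitly on the fact that truncation is $1$-Lipschitz, so influences do not increase), whereas you use the affine map $(1-2\epsilon)T_\sigma f+\epsilon$; your version makes the influence bookkeeping immediate since $\Delta_n$ is degree-three homogeneous under such maps. Second, to pass from correlation $\rho(1-\eta)$ back to $\rho$, the paper shifts the parameter inside $J$ and invokes the calculus bound $|\partial J_\rho/\partial\rho|\le(1-\rho^2)^{-3/2}$ from Claim~\ref{clm:diff-J-rho}, while you bound $\Stab_\rho(f)-\Stab_{\rho(1-\eta)}(f)$ directly using the convexity of $\rho'\mapsto\sum_S(\rho')^{|S|}\wh f(S)^2$ on $[0,1]$. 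Your route is slightly more elementary in that it sidesteps Claim~\ref{clm:diff-J-rho} entirely; the paper's route has the advantage of yielding the intermediate inequality $\Stab_\rho(f)\le J_{\rho(1-\eta)^{-2}}(\E f,\E f)+\cdots$, which keeps the comparison against $J$ at every stage. Both lead to the same final bound.
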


As remarked earlier,  our proof extends to the generalizations of Theorem~\ref{thm:MIST} such as those presented by \cite{DiMoRe:06, Mossel:10}. 
The extension of the proof uses the Efron-Stein decomposition instead of the Fourier decomposition. The only difference is that the hyper-contractivity parameter 
will now depend on the underlying space. See~\cite{Mossel:10} for more details.

\begin{proof}
 Suppose $f: \{-1, 1\}^n \to [\epsilon, 1-\epsilon]$ satisfies $\max_i \Inf_i(f) \le \tau$,
 and let $X, Y$ be uniformly random elements of $\{-1, 1\}^n$ with $\E X_i Y_i = \rho$.
 For sufficiently small $\eta > 0$, Claim~\ref{clm:delta-bound} implies that
 \[
  \Delta_n(T_{1-\eta} f) \le \tau^{c\eta}.
 \]
 Note that the range of $T_{1-\eta} f$ belongs to $[\epsilon, 1-\epsilon]$ because
 the range of $f$ does. Hence, Theorem~\ref{thm:tensorization} applied to
 $T_{1-\eta} f$ implies that
\[
 \E J_\rho(T_{1-\eta} f(X), T_{1-\eta} f(Y)) \le J_\rho(\E T_{1-\eta} f, \E T_{1-\eta} f) + \Delta_n(f)
 \le J_\rho(\E f, \E f) + C \epsilon^{-C(\rho)} \tau^{c\eta}.
\]
Since $J_\rho(x, y) \ge xy$, it follows that
\[
 \Stab_{\rho(1-\eta)^2}(f) = \E T_{1-\eta} f(X) T_{1-\eta} f(Y) \le J_\rho(\E f, \E f) + C \epsilon^{-C(\rho)} \tau^{c\eta}.
\]
This inequality holds for any $0 < \rho < 1$; hence we can replace $\rho(1-\eta)^2$ by $\rho$ to obtain
\begin{equation}\label{eq:maj-is-stablest-1}
 \Stab_\rho(f) = \E T_{1-\eta} f(X) T_{1-\eta} f(Y) \le J_{\rho(1-\eta)^{-2}}(\E f, \E f) + C \epsilon^{-C(\rho)} \tau^{c\eta}
\end{equation}
for any $\rho \le (1-\eta)^2$.

Now,~\eqref{eq:maj-is-stablest-1} holds for any $f: \{-1, 1\}^n \to [\epsilon, 1-\epsilon]$.
For a function $f$ taking values in $[-1, 1]$, let $f^\epsilon$ be $f$ truncated to $[\epsilon, 1-\epsilon]$.
Since $|\E f^\epsilon - \E f| \le \epsilon$ and (by the proof of Claim~\ref{clm:negative-semidefinite})
$\pdiff{J_\rho(x,y)}{x} \le 1$ for any $\rho$,
\[
 J_{\rho(1-\eta)^{-2}}(\E f^\epsilon, \E f^\epsilon) \le 
 J_{\rho(1-\eta)^{-2}}(\E f, \E f) + 2 \epsilon.
\]
On the other hand, $|f - f^\epsilon| \le \epsilon$ and so
\[
 \Stab_\rho(f) = \E f(X) f(Y) \ge \Stab_\rho(f^\epsilon) - 2 \epsilon.
\]
Thus,~\eqref{eq:maj-is-stablest-1} applied to $f^\epsilon$ implies that
for any $\rho \le (1-\eta)^2$ and any $\epsilon > 0$,
\[
 \Stab_\rho(f) \le J_{\rho(1-\eta)^{-2}}(\E f, \E f) + 2 \epsilon + C \epsilon^{-C(\rho)} \tau^{c\eta}.
\]
If we set $\epsilon = \tau^{c\eta/(2C(\rho))}$ then
\[
 \Stab_\rho(f) \le J_{\rho(1-\eta)^{-2}}(\E f, \E f) + C \tau^{c(\rho)\eta}.
\]

Finally,
some calculus on $J_\rho$ (see Claim~\ref{clm:diff-J-rho}) shows that $|\pdiff{J_\rho(x,y)}{\rho}| \le (\sqrt{1-\rho^2})^{-3/2}$
for any $x, y$; hence
\[
 \Stab_\rho(f) \le J_{\rho(1-\eta)^{-2}}(\E f, \E f)
 + \frac{(1-\eta)^{-2} - 1}{(1-\rho^2)^{3/2}} + C \tau^{c(\rho)\eta}
 \le J_\rho(\E f, \E f) + C(\rho) (\eta + \tau^{c(\rho) \eta}).
\]
Choosing $\eta = C(\rho) \frac{\log \log(1/\tau)}{\log(1/\tau)}$ completes the proof
as long as $\rho \le (1-\eta)^2$. However, we can trivially make the theorem true for
$(1-\eta)^2 \le \rho$ by choosing $C(\rho)$ and $c(\rho)$ appropriately.
\end{proof}

%\section{Putinar's Positivstellensatz}
%
%By a small modification of Putinar's Positivstellensatz,
%we have the following:
%
%\begin{claim}
%Suppose $D \subset \R^2$ is a compact semi-algebraic set
%and $P, R, Q$ are polynomials
%such that
%\[
% \begin{pmatrix} P(x, y) & R(x, y) \\ R(x, y) & Q(x, y)\end{pmatrix} > 0
%\]
%for all $(x, y) \in D$. Then there is an SoS proof of the statement
%\[
% x^2 P(x, y) + y^2 Q(x, y) + 2xy R(x, y) \ge 0.
%\]
%\end{claim}

%For a specific application, we will also need a stronger form of Theorem~\ref{thm:Putinar} due to Nie and Schweighofer~\cite{NieSch07}.
%\begin{theorem}\label{thm:Nie}\cite{NieSch07}
%Let $X = (x_1, \ldots, x_n)$ and $S = \{p_1(X) \ge 0, \ldots, p_m(X) \ge 0\}$ and define $\mathop{pre}(S) = \sum_{i=1}^n r_i p_i + r_0$ where $r_0, \ldots, r_m \in \textrm{SoS}(n)$. Assume that $\exists q \in \mathop{pre}(S)$ such that the set $\{ X : q(X) \ge 0 \}$ is compact. Let $\min_{x \in S} p(x) = \alpha^{\ast}$ and the maximum coefficient of $p$ be $\beta^{\ast}$. There is a constant $c$ dependent just on $S$ and an integer $d= d(n, \alpha^{\ast}, \beta^{\ast}, \deg(p),c)$ such that $p  =\sum_{i=1}^n r_i p_i + r_0$ where $r_0, \ldots, r_m \in \textrm{SoS}(n)$ with $\deg(r_i) \le d$. 
%\end{theorem}
%\end{fact}

\section{SoS proof of Majority is Stablest}\label{sec:majstable}

The principal theorem of this section is the  SoS version of ``Majority is Stablest" theorem of \cite{MOO:10}. Before we state the theorem, we will need a few definitions. We will consider the indeterminates $f(x)$ (for $x \in \{-1,1\}^n$). The  constraints on these indeterminates is given by $$A_ p = \{ 0 \le f(x) \le  1 : \textrm{ for all } x \in \{-1,1\}^n\}$$. 
As is the case with the usual setting, its helpful to define the fourier coefficients of $f$. 
$$
\textrm{For } S \subseteq [n] \quad  \widehat{f}(S) = \mathop{\mathbf{E}}_{x \in \{-1,1\}^n} f(x) \cdot \chi_S(x)  \quad \quad \textrm{ and hence } f(x)  = \sum_{S \subset[n]} \widehat{f}(S) \chi_S(x)
$$
Note that $\widehat{f}(S)$ are nothing but linear forms in terms of the original indeterminates. It is also helpful to recall the notion of influences and low-degree influences in this context.
$$
\Inf_i(f)  = \sum_{S \ni i} \widehat{f}^2(S)  \quad \quad \Inf_i^{\le d}(f)  = \sum_{S \ni i : |S| \le d} \widehat{f}^2(S)
$$
With this, we state the main theorem of this section. 
\begin{theorem}\label{thm:SoS-maj}
For any $\kappa>0$ and $\rho \in (-1,0)$, $\exists d_0= d_0(\kappa,\rho)$, $d_1=d_1(\kappa,\rho)$ and $c = c(\kappa,\rho)$ such that 
$$
A_p \vdash_{d_0} \mathop{\mathbf{E}}_{\substack{x \in \{-1,1\}^n \\ y \sim_{\rho} x}}  [f(x) \cdot f(y) + (1-f(x)) \cdot (1- f(y))]   \ge 1- \frac{1}{\pi} \arccos \rho - \kappa - c \cdot (\sum_{i=1}^{n} (\Inf_{i}^{\le d_1}f)^2)
$$
\end{theorem}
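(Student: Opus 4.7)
The plan is to SoS-ify the analytic proof of Theorem~\ref{thm:MIST}, specialized to the negative-correlation regime $\rho\in(-1,0)$ where the Borell-type inequality reverses direction. The main difficulty is that $J_\rho$ is not a polynomial and so cannot appear literally in an SoS proof; the strategy is to replace $J_\rho$ by a polynomial approximation $\wt J_\rho$ and invoke the matrix Putinar Positivstellensatz (Theorem~\ref{thm:matrix-Putinar}) to convert the semidefiniteness of $M_{\rho\sigma}$ (Claim~\ref{clm:negative-semidefinite}) into an explicit SoS certificate.

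\textbf{Polynomial approximation and base case.} Using Lorentz-type approximation~\cite{Lorentz:86}, I would build a polynomial $\wt J_\rho$ of degree $D=D(\kappa,\rho)$ that approximates $J_\rho$ together with its first and second partial derivatives uniformly on $[\epsilon,1-\epsilon]^2$, for a small $\epsilon=\epsilon(\kappa,\rho)$. Since $M_{\rho\sigma}(J_\rho)\succeq 0$ strictly in the interior for $\rho\in(-1,0)$ and $\rho\le\sigma\le 0$, this $C^2$-approximation yields $M_{\rho\sigma}(\wt J_\rho)\succeq \delta I$ on the box for some $\delta=\delta(\kappa,\rho)>0$. The matrix Positivstellensatz (Theorem~\ref{thm:matrix-Putinar}) then produces an SoS matrix decomposition of $M_{\rho\sigma}(\wt J_\rho)-\delta I$ against the constraints $B_\epsilon=\{x-\epsilon\ge0,\ 1-\epsilon-x\ge0,\ y-\epsilon\ge0,\ 1-\epsilon-y\ge0\}$. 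Plugging this into the Taylor-expansion identity of Claim~\ref{clm:base-case} converts the base-case inequality into an SoS-certified statement: for any $X,Y\in[\epsilon,1-\epsilon]$ with correlation $\sigma\in[\rho,0]$,
\[
  \E \wt J_\rho(X,Y)\ \ge\ \wt J_\rho(\E X,\E Y) - c(\kappa,\rho)\bigl(\E|X-\E X|^3 + \E|Y-\E Y|^3\bigr).
\]

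\textbf{Tensorization and reduction to the threshold value.} The inductive step of Theorem~\ref{thm:tensorization} is purely algebraic: restrictions $f_X$ and conditional expectations $\E[f_X\mid X_n]$ are linear in the indeterminates $f(x)$, so iterating the SoS base case $n$ times produces a degree-$O(D)$ SoS proof of $\E\wt J_\rho(f(x),f(y))\ge \wt J_\rho(\E f,\E f) - c\cdot\Delta_n(f)$, with the analogous statement for $1-f$. Combining the two bounds via the identity $J_\rho(x,y)+J_\rho(1-x,1-y)=1-x-y+2J_\rho(x,y)$, and the inequality $J_\rho(x,y)\le xy$ (which holds for $\rho\le 0$ on $[0,1]^2$ and is scalar-Putinar SoS provable), gives an SoS lower bound on $\E[f(x)f(y)+(1-f(x))(1-f(y))]$ in terms of $1-2\E f+2\wt J_\rho(\E f,\E f)$. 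The univariate inequality $1-2p+2\wt J_\rho(p,p)\ge 1-\arccos\rho/\pi - \kappa/2$ for $p\in[0,1]$ is a single-variable polynomial inequality whose true minimum over $J_\rho$ is exactly $1-\arccos\rho/\pi$ (attained at $p=1/2$, by Sheppard's formula); Putinar (Theorem~\ref{thm:Putinar}) produces the SoS certificate.

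\textbf{Hypercontractive control and truncation.} To reshape $\Delta_n(f)$ into a quantity dominated by $\sum_i(\Inf_i^{\le d_1}f)^2$, I replace $f$ by $T_{1-\eta}f$ with $\eta=\eta(\kappa,\rho)$ small and invoke the SoS version of Bonami--Beckner hypercontractivity established in~\cite{BBHKSZ:12,OZ:12}. Claim~\ref{clm:delta-fourier} expresses $\Delta_n(T_{1-\eta}f)$ as a sum of cubes of Fourier coefficients of random restrictions; hypercontractivity (in SoS form) bounds each cube by the second moment times a max-influence factor, and choosing $d_1$ large enough that $(1-\eta)^{d_1}$ is tiny isolates the low-degree contribution. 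A standard $\epsilon$-truncation $f\mapsto f^\epsilon=\max(\epsilon,\min(1-\epsilon,f))$, expressible in SoS as a convex combination, reduces the constraint set from $B_\epsilon$ to the original $A_p$ at an $O(\epsilon)$ cost, absorbed into $\kappa$ by choosing $\epsilon$ appropriately.

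\textbf{Main obstacle.} The hardest step is the combination of polynomial approximation with the matrix Putinar Positivstellensatz: $\wt J_\rho$ must preserve the semidefiniteness of $M_{\rho\sigma}$ \emph{strictly} and quantitatively on the whole box, and Theorem~\ref{thm:matrix-Putinar} must produce an SoS matrix certificate with degree polynomial in $1/\delta$ and $D$. Unlike \cite{BBHKSZ:12,OZ:12}, where SoS proofs could be written essentially by hand, here the completeness of the SoS hierarchy (as captured by Putinar) is genuinely used, because $J_\rho$ itself is not polynomial and its defining analytic property must be transported wholesale into the SoS world.
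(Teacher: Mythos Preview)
Your overall architecture matches the paper's: polynomial approximation $\wt J_\rho$ of $J_\rho$ via Bernstein/Lorentz, matrix Putinar to certify the Hessian condition, an SoS tensorization of the base case, the comparison $xy\ge J_\rho(x,y)$ for $\rho<0$ via scalar Putinar, the sum with $1-f$, and the final univariate Putinar step. Two small slips are easy to fix: $M_{\rho\rho}$ is only semidefinite (its determinant vanishes), so one must add a slack $\beta I$ and absorb the resulting $\beta(\wh{h_0}(1)^2+\wh{h_1}(1)^2)$ term into the error, exactly as the paper does; and the truncation must be the affine map $f\mapsto (1-\epsilon)f+\epsilon/2$, not $\max/\min$, which is not a polynomial operation on the indeterminates.

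The substantive gap is the error-term analysis. In SoS the base case cannot produce a cubic remainder: odd powers have no sign, so the paper's Lemma~\ref{lem:taylor} yields a \emph{quartic} error $c_\gamma(\wh{h_0}(1)^4+\wh{h_1}(1)^4)$ together with a quadratic term. After tensorization the error becomes $\sum_i \E_z[\wh{g}_z(n-i)^4]$, and here your one-line ``hypercontractivity bounds each cube by the second moment times a max-influence factor'' does not go through: there is no SoS notion of a max, and the naive split $g=\ell+h$ into low/high degree leaves the term $\E_z[\wh{h}_z^4]$, for which neither hypercontractivity (degree too high) nor the pointwise bound $|\wh{h}_z|\le 1$ (not an SoS axiom for $h$) is available. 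The paper circumvents this by writing $\wh{g}^4=\wh{g}^3\wh{h}+\wh{g}^2\wh{\ell}^2+\wh{g}^2\wh{\ell}\,\wh{h}$ and applying weighted AM--GM so that $\wh{h}$ only ever appears squared, where $\sum_i\E_z[\wh{h}_z^2]\le\eta$ follows from the $T_{1-\eta}$ smoothing; the $\wh{\ell}^4$ piece is then handled by SoS hypercontractivity (Fact~\ref{fac:hyper}) and yields precisely $\sum_i(\Inf_i^{\le d_1}f)^2$. Your sketch needs this decomposition, or something equivalent, to close.
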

This is easily seen to be equal to the statement of the Majority is Stablest theorem of \cite{MOO:10}. Before, we delve further into the SoS proofs, we feel its good to  familiarize ourselves with the fourier machinery in the SoS world. The upshot of the discussion ahead is going to be that the basic fourier identities and operations hold without any changes in the SoS world. In particular, it is easy to verify that Parseval's identity holds i.e. for $\{f(x)\}$ and $\{\widehat{f}(S)\}$ defined as above 
$
\mathbf{E}[f^2(x)] = \sum_{S \subseteq [n]} \widehat{f}^2(S)
$.

Similarly, we can define the noise operator $T_{\rho}$ here as follows : Given the sequence of indeterminates $\{f(x) \}_{x \in \{-1,1\}^n}$, we define the sequence of indeterminates $\{g(x)\}_{x \in \{-1,1\}^n}$ as 
$
g(x) = \mathbf{E}_{y \sim_{\rho} x} [f(x)] 
$
and for every $x$, use $T_{\rho} f(x)$ to refer to $g(x)$. It is also easy to check that if we define $\widehat{g}(S) = \mathbf{E}_{x} [g(x) \cdot \chi_S(x)]$, then 
$\widehat{g}(S) = \rho^{|S|} \widehat{f}(S)$. 

\subsection{Smoothening the function}
For our purposes, it is necessary to do a certain smoothening of the function $f$. In particular, we start by considering a new function $f_2$ i.e. we create a new sequence of indeterminates defined by $f_2(x) = (1-\epsilon) f(x) + \epsilon/2$ for some $\epsilon>0$. The value of $\epsilon$ shall be fixed later.   We observe that 
\begin{equation}\label{eq:rand2}
A_p \vdash_1 \cup_{x \in \{-1,1\}^n} \{ \epsilon \le f_1(x) \le 1-\epsilon \} \end{equation} \begin{equation*}
\widehat{f_1}(S) = (1-\epsilon) \hat{f}(S) + (\epsilon/2) \cdot \mathbf{1}_{S = \Phi} 
\end{equation*}
We also make the following claim (the proof is deferred to Appendix~\ref{app:missing-SoS-proof}).  
\begin{claim}\label{clm:smooth1}
$$
A_p \vdash_2 f(x) f(y) -2 \epsilon \le f_1(x) f_1(y) \le f(x) f(y) + 2\epsilon 
$$
\end{claim}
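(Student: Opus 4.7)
The plan is to give an explicit degree-$2$ SoS identity for each of the two inequalities. The starting point is the direct expansion of $f_1(x) = (1-\epsilon)f(x)+\epsilon/2$, which yields
$$f_1(x)f_1(y) - f(x)f(y) = -\epsilon(2-\epsilon)\,f(x)f(y) + \frac{\epsilon(1-\epsilon)}{2}\bigl(f(x)+f(y)\bigr) + \frac{\epsilon^2}{4}.$$
Every coefficient here is $O(\epsilon)$, so there is comfortable slack against the $\pm 2\epsilon$ on the right-hand side of the claim, provided each term can be realised as a non-negative combination of constraint products of degree at most two.

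The three building blocks I would set up once and reuse all live in $\mathcal{C}_2(A_p)$: (i) $f(x)f(y)\ge 0$, the product of the raw constraints $f(x)\ge 0$ and $f(y)\ge 0$, which sits in $\mathcal{C}_2(A_g)$; (ii) $(1-f(x))+(1-f(y))\ge 0$, a sum of two raw constraints of degree $1$, acting as the SoS stand-in for $f(x)+f(y)\le 2$; and (iii) $1-f(x)f(y) = (1-f(x)) + f(x)(1-f(y))\ge 0$, where the second summand is again a product of two constraints. With these in hand, the upper bound $f(x)f(y)+2\epsilon-f_1(x)f_1(y)\ge 0$ would be written as a positive multiple of (i) plus a positive multiple of (ii) plus a residual non-negative scalar; the lower bound $f_1(x)f_1(y)+2\epsilon-f(x)f(y)\ge 0$ uses (iii) to absorb the troublesome $-\epsilon(2-\epsilon)\,f(x)f(y)$ term, plus a positive multiple of $f(x)+f(y)$ (itself a sum of raw $A_g$ constraints), plus a residual scalar. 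Checking that the two residuals are non-negative is a one-line arithmetic verification (they come out to $\epsilon+3\epsilon^2/4$ and $5\epsilon^2/4$ respectively for $\epsilon\in(0,1)$).

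There is no conceptual obstacle; the only care required is bookkeeping, to make sure that when the $\pm 2\epsilon$ slack is distributed among the building blocks the leftover scalar stays non-negative, and that every piece is genuinely in $\mathcal{C}_2(A_p)$ in the sense of Definition~\ref{def:SoS}. The one place degree $2$ (as opposed to degree $1$) is genuinely needed is ingredient (iii), which uses a product of two constraints from $A_g$; without the ability to multiply pairs of $A_g$-constraints one could not cancel the negative coefficient of $f(x)f(y)$ arising on the lower-bound side, and the claim would fail at degree $1$.
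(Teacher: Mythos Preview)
Your proposal is correct and follows essentially the same approach as the paper: both expand $f_1(x)f_1(y)-f(x)f(y)$, handle the upper bound using $f(x)f(y)\ge 0$ and $1-f(x),1-f(y)\ge 0$, and handle the lower bound via the degree-$2$ identity $1-f(x)f(y)=(1-f(x))+f(x)(1-f(y))$ (the paper uses the symmetric variant $(1-f(x))f(y)+(1-f(y))$). You simply supply more of the bookkeeping---the explicit residual constants and the verification that each piece lies in $\mathcal{C}_2(A_p)$---than the paper's terse version does.
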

The next stage of smoothening is done by defining $g = T_{1-\eta} f_2$  for some $\eta>0$. Again, the value of $\eta$ will be fixed later. 
\begin{equation}\label{eq:rand1}
 \cup_{x \in \{-1,1\}^n} \{ \epsilon \le f_1(x) \le 1-\epsilon \}  \vdash_1 \cup_{x \in \{-1,1\}^n} \{ \epsilon \le g(x) \le 1-\epsilon \} \end{equation} \begin{equation*}
 \widehat{g}(S) = (1-\epsilon) (1-\eta)^{|S|} \hat{f}(S) + (\epsilon/2) \cdot \mathbf{1}_{S = \Phi} \end{equation*}
Also, observe that $\mathbf{E}_{x, y \sim_{\rho} x} [f_2(x) \cdot f_2(y)] = \mathbf{E}_{x, y \sim_{\rho'} x} [g(x) \cdot g(y)]$ where $\rho' = \rho/(1-\eta)^2$. Of course, this imposes the condition $|\rho| < |1-\eta|^2$. So, we have to choose $\eta$ to be small enough. Now, define the constraint set $A'_p =\cup_{x \in \{-1,1\}^n} \{  \epsilon \le g(x) \le 1- \epsilon \}$. So, we summarize the discussion of this subsection in the following two claims. 
\begin{claim}\label{clm:subsec}
For any $q$ and $d \in \mathbb{N}$, if $A'_p \vdash_d q \ge 0$, then $A_p \vdash_d q \ge 0$. 
\end{claim}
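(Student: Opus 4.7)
The plan is to observe that $g(x)$ is, by its very definition $g = T_{1-\eta} f_2$ with $f_2$ an affine image of $f$, a \emph{linear} polynomial in the original indeterminates $\{f(y)\}_{y \in \bits^n}$, namely
\[
 g(x) \;=\; \tfrac{\epsilon}{2} + (1-\epsilon)\, \E_{y \sim_{1-\eta} x}[f(y)].
\]
Consequently, every constraint $\epsilon \le g(x) \le 1-\epsilon$ in $A'_p$ is literally a linear inequality in the variables $\{f(y)\}$, and equations~(\ref{eq:rand2}) and~(\ref{eq:rand1}) already assert that each of these inequalities is derivable from $A_p$ at degree~$1$, i.e.\ $A_p \vdash_1 A'_p$.

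To make the latter fully concrete, one checks that $g(x) - \epsilon$ is a linear polynomial in $\{f(y)\}$ that is nonnegative at every vertex of $[0,1]^{2^n}$, and hence admits a Positivstellensatz-style decomposition
\[
 g(x) - \epsilon \;=\; c_x \;+\; \sum_{y:\,\alpha_y^{(x)}\ge 0} \alpha_y^{(x)}\, f(y) \;+\; \sum_{y:\,\alpha_y^{(x)}<0} |\alpha_y^{(x)}|\,\bigl(1 - f(y)\bigr),
\]
with $c_x \ge 0$, where $\alpha_y^{(x)}$ is the coefficient of $f(y)$ in $g(x)$; the analogous identity for $1-\epsilon - g(x)$ is obtained the same way. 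Both identities serve as valid degree-$1$ SoS derivations of the $A'_p$ constraints from $A_p$, the SoS multipliers being nonnegative scalars (trivially in $\mathbb{M}[X]$).

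Given $A_p \vdash_1 A'_p$, the claim follows by applying the third bullet of Fact~\ref{fac:SoS-basic0} (the composition/transitivity rule for SoS proofs): any SoS proof $A'_p \vdash_d q \ge 0$, written as $q = \sum_{r \in \mathcal{C}_d(A'_p)} s_r \cdot r$, can be converted into an SoS proof $A_p \vdash_d q \ge 0$ by substituting each $A'_p$-constraint $r$ with its degree-$1$ certificate from $A_p$ and collecting terms. The total degree remains $\le d \cdot 1 = d$ because the certificates use only constant (degree~$0$) SoS multipliers on the $A_p$-constraints. I do not anticipate any real obstacle here; the only thing requiring care is the degree bookkeeping, which goes through cleanly precisely because the substitution raises the degree by at most $0$.
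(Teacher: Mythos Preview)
Your approach is exactly the paper's: combine (\ref{eq:rand2}) and (\ref{eq:rand1}) to get $A_p \vdash_1 A'_p$, then invoke the third bullet of Fact~\ref{fac:SoS-basic0} so that $A'_p \vdash_d q \ge 0$ yields $A_p \vdash_{1\cdot d} q \ge 0$. Your extra paragraph spelling out the explicit degree-$1$ certificate is a nice elaboration but not needed beyond what (\ref{eq:rand2}) and (\ref{eq:rand1}) already assert.
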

The proof of the above is obtained by combining (\ref{eq:rand2}) and (\ref{eq:rand1}) with the third bullet of Fact~\ref{fac:SoS-basic0}. The second claim is
\begin{claim}\label{clm:subsec1}
$$
A_p \vdash_2 \mathbf{E}_{x, y \sim_{\rho} x} [f(x) \cdot f(y)] \ge \mathbf{E}_{x, y \sim_{\rho'} x} [g(x) \cdot g(y)] - 2\epsilon
$$
\end{claim}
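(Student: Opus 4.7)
The plan is to combine Claim~\ref{clm:smooth1} (already proven in the appendix) with a polynomial identity that relates the noise stability of $f_1$ at correlation $\rho$ to the noise stability of $g = T_{1-\eta} f_1$ at correlation $\rho' = \rho/(1-\eta)^2$. Both ingredients are degree $2$ in the original indeterminates $\{f(x)\}$, so by the sum rule in Fact~\ref{fac:SoS-basic0}, the conclusion will be degree $2$.

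First, I would apply Claim~\ref{clm:smooth1} pointwise for each pair $(x,y) \in \{-1,1\}^n \times \{-1,1\}^n$ to get
$$ A_p \vdash_2 f(x)\, f(y) - f_1(x)\, f_1(y) \ge -2\epsilon. $$
Averaging over $x$ uniform and $y \sim_{\rho} x$ is just taking a convex combination with nonnegative coefficients (which are constants, not indeterminates), so the averaged inequality is still a degree-$2$ SoS proof from $A_p$:
$$ A_p \vdash_2 \mathbf{E}_{x,\, y \sim_{\rho} x}[f(x)\, f(y)] \ge \mathbf{E}_{x,\, y \sim_{\rho} x}[f_1(x)\, f_1(y)] - 2\epsilon. $$

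Next, I would verify the key polynomial identity
$$ \mathbf{E}_{x,\, y \sim_{\rho} x}[f_1(x)\, f_1(y)] \;=\; \mathbf{E}_{x,\, y \sim_{\rho'} x}[g(x)\, g(y)] $$
as an equality of polynomials in the indeterminates $\{f(x)\}$. Using the Fourier expansion, both sides expand in terms of the linear forms $\widehat{f_1}(S)$: the left-hand side is $\sum_S \rho^{|S|}\, \widehat{f_1}(S)^2$, while the right-hand side equals $\sum_S (\rho')^{|S|}\, \widehat{g}(S)^2 = \sum_S (\rho')^{|S|} (1-\eta)^{2|S|} \widehat{f_1}(S)^2 = \sum_S \rho^{|S|}\, \widehat{f_1}(S)^2$ by the definition of $\widehat g(S) = (1-\eta)^{|S|}\widehat{f_1}(S)$ and the choice $\rho' = \rho/(1-\eta)^2$. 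This is an exact polynomial equality of degree $2$ in $\{f(x)\}$, so it holds in any SoS proof system (it can be written as both $p \ge 0$ and $-p \ge 0$ trivially, with $p$ the difference of the two expectations).

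Combining the two steps yields
$$ A_p \vdash_2 \mathbf{E}_{x,\, y \sim_{\rho} x}[f(x)\, f(y)] \ge \mathbf{E}_{x,\, y \sim_{\rho'} x}[g(x)\, g(y)] - 2\epsilon, $$
which is the statement of Claim~\ref{clm:subsec1}. There is no real obstacle here: the first step is a pointwise consequence of Claim~\ref{clm:smooth1}, and the second step is the well-known Fourier-level identity $\mathrm{Stab}_{\rho}(f_1) = \mathrm{Stab}_{\rho'}(T_{1-\eta} f_1)$ when $\rho = (1-\eta)^2 \rho'$, which lifts to SoS trivially because it is an equality of degree-$2$ polynomials in the indeterminates. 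The only subtle point is to make sure that the averaging step does not inflate the degree, which it does not since the coefficients $2^{-n}\Pr[y \sim_\rho x]$ are scalars.
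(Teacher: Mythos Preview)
Your proof is correct and follows exactly the paper's approach: the paper presents Claim~\ref{clm:subsec1} as a summary of the preceding discussion, namely Claim~\ref{clm:smooth1} combined with the observation $\mathbf{E}_{x, y \sim_{\rho} x}[f_1(x) f_1(y)] = \mathbf{E}_{x, y \sim_{\rho'} x}[g(x) g(y)]$. Your Fourier expansion argument for the identity is precisely the justification the paper leaves implicit in the text ``Also, observe that\ldots''.
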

Thus, the above two claims mean that from now on, we will work with $A'_p$ and aim to prove a lower bound on  $\mathbf{E}_{x, y \sim_{\rho'} x} [g(x) \cdot g(y)]$. At this stage, let $\tilde{J}_{\rho'}$ be the approximation obtained from Claim~\ref{clm:bernstein} with parameter $\epsilon>0$ and $\delta= \epsilon$. For the sake of brevity, we indicate this by $\tilde{J}$ itself. The following claim allows us to compare the terms $x\cdot y$ and $\tilde{J}(x,y)$. 
\begin{claim}\label{lem:comparison}
For any  $\epsilon>0$,  such that $\rho' \in (-1,0)$ and $\tilde{J}$ is as described above, there is a $d_{\alpha} = d_{\alpha}(\epsilon,  \rho')$ such that, 
$$
\{ \epsilon \le x\le 1- \epsilon, \epsilon \le y \le 1 - \epsilon \} \vdash_{d_{\alpha}}  x \cdot y \ge \tilde{J}(x,y) - 2\epsilon
$$
\end{claim}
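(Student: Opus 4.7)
The plan is to recognize that this is exactly the kind of statement Putinar's Positivstellensatz (Theorem~\ref{thm:Putinar}) was built to certify: the polynomial $H(x,y) := xy - \tilde J(x,y) + 2\epsilon$ is strictly positive on the compact box $[\epsilon, 1-\epsilon]^2$, and one just needs to write down the pointwise inequality and then invoke the Positivstellensatz. The degree $d_\alpha$ is the degree produced by that invocation.

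The first step is to check that $H > 0$ on $\mathbb V(B) := [\epsilon, 1-\epsilon]^2$, where $B$ denotes the box constraints. Two facts combine. First, for $\rho' \in (-1, 0)$, the Gaussian copula satisfies $J_{\rho'}(x,y) \le xy$ on all of $(0,1)^2$: standard normals with negative correlation are negatively associated, so $\Pr[X \le \Phi^{-1}(x),\ Y \le \Phi^{-1}(y)] \le \Pr[X \le \Phi^{-1}(x)]\Pr[Y \le \Phi^{-1}(y)] = xy$ (equivalently, integrate Claim~\ref{clm:diff-J-rho}, which gives $\partial_\rho J_\rho \ge 0$, from $0$ down to $\rho'$). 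Second, $\tilde J$ is the output of the Bernstein-type approximation of Claim~\ref{clm:bernstein} with $\delta = \epsilon$, so $\|\tilde J - J_{\rho'}\|_\infty \le \epsilon$ on $[\epsilon, 1-\epsilon]^2$. Hence
\begin{equation*}
H(x,y) \;=\; xy - \tilde J(x,y) + 2\epsilon \;\ge\; (xy - J_{\rho'}(x,y)) - \epsilon + 2\epsilon \;\ge\; \epsilon \quad \text{on } [\epsilon, 1-\epsilon]^2.
\end{equation*}

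The second step is to invoke Theorem~\ref{thm:Putinar} on the compact semialgebraic set $\mathbb V(B)$. The Archimedean condition is immediate: the box constraints yield a degree-$2$ SoS identity for $C - x^2 - y^2 \in \mathbb M(B)$ for a suitable constant $C$, so some $q \in \mathbb M(B)$ cuts out a compact set. Since $H > \epsilon/2$ on $\mathbb V(B)$, Putinar's theorem gives sums-of-squares $r_0, r_1, r_2, r_3, r_4 \in \mathbb M[x,y]$ with
\begin{equation*}
H(x,y) \;=\; r_0(x,y) + r_1(x,y)(x-\epsilon) + r_2(x,y)(1-\epsilon-x) + r_3(x,y)(y-\epsilon) + r_4(x,y)(1-\epsilon-y),
\end{equation*}
and taking $d_\alpha$ to be the maximum degree among the five terms on the right gives precisely the required degree-$d_\alpha$ SoS proof that $xy \ge \tilde J(x,y) - 2\epsilon$ from $\{\epsilon \le x \le 1-\epsilon,\ \epsilon \le y \le 1-\epsilon\}$.

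The main obstacle is that Putinar's theorem is purely existential: it guarantees some finite $d_\alpha = d_\alpha(\epsilon, \rho')$ but supplies no explicit bound. This is not a problem for the statement of the claim, which only asserts the existence of such a $d_\alpha$ with dependence on $\epsilon$ and $\rho'$, but it does explain why the proof cannot be made more constructive. A secondary, minor point is ensuring that $\tilde J$ is genuinely a polynomial (so that Putinar applies to $H$); this is exactly the output format of the Bernstein approximation in Claim~\ref{clm:bernstein}, and the degree of $\tilde J$ is subsumed into the eventual $d_\alpha$.
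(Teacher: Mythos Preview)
Your proposal is correct and follows essentially the same approach as the paper: show $xy \ge J_{\rho'}(x,y)$ for $\rho'<0$ (the paper cites Slepian's lemma, you use negative association/integrating $\partial_\rho J_\rho$), combine with the $\epsilon$-approximation of $\tilde J$ to get $H \ge \epsilon$ on the box, then apply Putinar's Positivstellensatz. The only cosmetic difference is that the paper invokes the pre-packaged Corollary~\ref{corr:Putinar} rather than verifying the Archimedean condition by hand as you do.
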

\begin{proof}
Note that for $(x,y) \in (0,1)^2$, $J_{0}(x,y) = xy$ and hence by Slepian's lemma, we get that if $\rho' <  0$, then $xy \ge J_{\rho'}(x,y)$. Now, by definition, we have  that for $(x,y) \in [\epsilon, 1- \epsilon]^2$
, $
xy \ge \tilde{J} (x,y) - \epsilon
$. In other words, if we define the polynomial $p(x,y) = xy -\tilde{J}(x,y) +2 \epsilon$, then we know that  for $(x,y) \in [\epsilon, 1- \epsilon]^2$, $p(x,y) \ge \epsilon$. We can thus apply Corollary~\ref{corr:Putinar} to get that there is an integer $d_\alpha=d_\alpha(\epsilon,  \rho')$  such that for $(x,y) \in [\epsilon, 1- \epsilon]^2$, 
for $\rho' \in (0,1)$,
 $$
\{ \epsilon \le x\le 1- \epsilon, \epsilon \le y \le 1 - \epsilon \} \vdash_{d_1}  p \ge 0
$$
Expanding $p$, finishes the proof. 
\end{proof}

\subsection{Taylor's theorem in the SoS world} 

Following the proof of Majority is Stablest, we now need to prove a Taylor's theorem in the SoS hierarchy. 
The following lemma is the SoS analogue of Claim~\ref{clm:base-case}. 
\begin{lemma}\label{lem:taylor}
Define a sequence of indeterminates $\{h_0(1), h_0(-1), h_1(1), h_1(-1) \}$. Let $A$ be a set of constraints defined as $A=\mathop{\cup}_{i,j \in \{0,1\}}\{  \epsilon \le h_i(j) \le 1-\epsilon \}$.
%Also, for $i, j \in \{0,1\}$, define  $\widehat{h_i}(j) = (h_i(0) + (-1)^j h_i(1))/2$. 
% We next define 
%\begin{itemize}
%\item $\hat{f}(0) = (f(1) + f(-1))/2$ and $\hat{g}(0) = (g(1) + g(-1))/2$. 
%\item $\hat{f}(1) = (f(1) - f(-1))/2$ and $\hat{g}(1) = (g(1) - g(-1))/2$. 
%\end{itemize}
For any $\epsilon>0, \rho' \in (-1,0)$,  $\exists c_\gamma= c_\gamma(\epsilon , \rho')$ and $\exists d_\gamma= d_\gamma(\epsilon , \rho')$  such that  
$$
A \vdash_{d_\gamma} \mathop{\mathbb{E}}_{\substack{x \in_R \{-1,1\} \\ y \sim_{\rho'} x}} [\widetilde{J}(h_0(x), h_1(y))] \ge \widetilde{J}(\widehat{h_0}(0), \widehat{h_1}(0)) -\epsilon \cdot ( \widehat{h_0}^2(1) + \widehat{h_1}^2(1) )  - c_{\gamma} \cdot(\widehat{h_0}^4(1) + \widehat{h_1}^4(1))
$$
where $\widehat{h_i}(j) = \frac{h_i(0) + (-1)^j \cdot h_i(1)}{2}$ for $i, j \in \{0,1\}$. 
\end{lemma}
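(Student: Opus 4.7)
My strategy is to follow the analytic proof of Claim~\ref{clm:base-case} (the base case of the tensorization theorem) but with the polynomial $\widetilde{J}$ in place of $J_{\rho'}$, producing an explicit SoS certificate at every step. Since $\widetilde{J}$ is a polynomial of some bounded degree $D=D(\epsilon,\rho')$, its Taylor expansion around $(a,b):=(\widehat{h_0}(0),\widehat{h_1}(0))$ terminates and is an exact polynomial identity. Substituting $u=h_0(x)$, $v=h_1(y)$ turns the increments into $u-a=\widehat{h_0}(1)\,x$ and $v-b=\widehat{h_1}(1)\,y$, so that averaging over $(x,y)\in\{-1,1\}^2$ with $\mathbb{E} x=\mathbb{E} y=0$, $\mathbb{E} xy=\rho'$, $x^2=y^2=1$ kills the linear terms and collapses the quadratic part to the bilinear form
\[
Q \;:=\; \tfrac12 \bigl(\widehat{h_0}(1),\,\widehat{h_1}(1)\bigr)\, \widetilde{M}_{\rho'\rho'}\!\bigl(\widehat{h_0}(0),\,\widehat{h_1}(0)\bigr) \bigl(\widehat{h_0}(1),\,\widehat{h_1}(1)\bigr)^{T},
\]
where $\widetilde{M}_{\rho'\rho'}$ is the $\widetilde{J}$-analogue of the matrix $M_{\rho\sigma}$ of Claim~\ref{clm:negative-semidefinite}. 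The rest of the identity is an explicit polynomial remainder $\widetilde R = \sum_{i+j\ge 3} c_{ij}(\widehat{h_0}(0),\widehat{h_1}(0))\cdot\mathbb{E}[x^iy^j]\cdot \widehat{h_0}(1)^i\widehat{h_1}(1)^j$, each $c_{ij}$ being a partial derivative of $\widetilde J$ at $(a,b)$.

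The key step is to produce an SoS certificate for $Q\ge -\epsilon(\widehat{h_0}^2(1)+\widehat{h_1}^2(1))$ under $A$. Because $\rho'\in(-1,0)$, Claim~\ref{clm:negative-semidefinite} gives $M_{\rho'\rho'}(u,v)\succeq 0$ on $(0,1)^2$. Assuming the approximation delivered by Claim~\ref{clm:bernstein} is strong enough that its second partial derivatives are uniformly $O(\epsilon)$-close to those of $J_{\rho'}$ on $[\epsilon,1-\epsilon]^2$ (which a high-degree Bernstein approximation of the smooth $J_{\rho'}$ provides), we obtain $\widetilde M_{\rho'\rho'}(u,v)+2\epsilon I_2\succeq \epsilon I_2$ on that compact box. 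I then apply the matrix Putinar Positivstellensatz (Theorem~\ref{thm:matrix-Putinar}) to $\widetilde M_{\rho'\rho'}(u,v)+2\epsilon I_2$ with the four box constraints $\{u-\epsilon\ge 0,\,1-\epsilon-u\ge 0,\,v-\epsilon\ge 0,\,1-\epsilon-v\ge 0\}$, obtaining a matrix SoS decomposition of bounded degree. Substituting $u\mapsto\widehat{h_0}(0)$, $v\mapsto\widehat{h_1}(0)$ — both lie in the box as immediate SoS consequences of $A$, since each $\widehat h_i(0)$ is the half-sum of two constrained coordinates — and contracting with $(\widehat h_0(1),\widehat h_1(1))^T$, converts this matrix decomposition into a scalar SoS proof of $Q+\epsilon(\widehat{h_0}^2(1)+\widehat{h_1}^2(1))\ge 0$ modulo $A$.

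The final task is to absorb the remainder $\mathbb{E}[\widetilde R]$. Each $c_{ij}$ is a polynomial in $(\widehat h_0(0),\widehat h_1(0))$ bounded on the box by some $M=M(\epsilon,\rho')$, a fact certifiable by scalar Putinar (Theorem~\ref{thm:Putinar}) applied to $M^2-c_{ij}^2$. For cubic monomials ($i+j=3$) I would use the SoS Young-type inequality $\pm\,\widehat h_0(1)^i \widehat h_1(1)^j \le \alpha\bigl(\widehat h_0^2(1)+\widehat h_1^2(1)\bigr) + \alpha^{-1}\bigl(\widehat h_0^4(1)+\widehat h_1^4(1)\bigr)$, which holds unconditionally because the univariate polynomial $\alpha + t^2/\alpha - t$ has negative discriminant and is therefore SoS, and multiplying by a square $\widehat h_i^2(1)$ preserves SoS-ness. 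Choosing $\alpha$ small enough in terms of $M$ and the finite number of cubic terms, the cubic contributions are absorbed into the $-\epsilon$-slack plus a quartic correction. For monomials of degree $\ge 4$ I would use the SoS fact $\widehat h_i^2(1)\le 1/4$ — a Putinar consequence of $A$ since $\tfrac14-\widehat h_i^2(1)=\tfrac14(1-h_i(1)+h_i(-1))(1+h_i(1)-h_i(-1))$ and each factor is nonnegative under $A$ — to dominate every such monomial by a constant multiple of $\widehat h_0^4(1)+\widehat h_1^4(1)$. Combining the three paragraphs yields the claimed SoS identity, with $c_\gamma$ and $d_\gamma$ accumulating the various constants and degrees. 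The subtlest point I expect is the second paragraph: ensuring that the Bernstein-type approximation $\widetilde J$ preserves the PSD property of the Hessian with a uniform margin on $[\epsilon,1-\epsilon]^2$ (not just the sup-norm closeness of $\widetilde J$ to $J_{\rho'}$), so that matrix Putinar can be invoked cleanly.
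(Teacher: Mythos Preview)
Your approach is essentially the paper's own: Taylor-expand $\widetilde J$ around $(\widehat{h_0}(0),\widehat{h_1}(0))$, certify nonnegativity of the quadratic Hessian form via the matrix Putinar Positivstellensatz (this is exactly Claim~\ref{clm:lower-bound-Lambda}), and absorb the higher-order remainder into the quartic term (this is Claim~\ref{clm:bound1}).

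The one point you overlooked is that the cubic terms are identically zero: since $x,y\in\{-1,1\}$, the factor $\E[x^iy^j]$ appearing in your own expression for $\widetilde R$ vanishes whenever $i+j$ is odd, so only even orders $i+j\in\{0,2,4,\ldots\}$ survive the averaging. This is precisely what Claim~\ref{clm:simplify} records. Your proposed Young-inequality treatment of the $i+j=3$ terms is therefore unnecessary --- and this is not merely cosmetic, because as you wrote it that step would compete for the $\epsilon$-quadratic slack that the Hessian step already consumes in full, so the budget would not close if those terms were genuinely present. For the surviving $i+j\ge 4$ even terms the paper does not invoke Putinar on the coefficients $c_{ij}$: it instead expands each $c_{ij}$ monomial-by-monomial in $(\widehat h_0(0),\widehat h_1(0))\in[0,1]^2$ and uses the elementary SoS identities (Facts~\ref{fac:SoS0}--\ref{fac:SoS3}) to sandwich each product by $\pm(\widehat h_0^4(1)+\widehat h_1^4(1))$, yielding the explicit degree bound $4K+3$; your Putinar-based route for this part would also work but gives less control on the degree.
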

\begin{proof}
We start by noting that since $\widetilde{J}$ is a symmetric polynomial, hence we can write 
$$
\widetilde{J}(x,y) = \sum_{m, n : m + n \le K} \mu_{\{m,n\}} x^m y^n
$$
Here, we assume that $K$ is the degree of $\widetilde{J}$ and $c$ is the maximum absolute value of any coefficient. .   We next make the following claim. 
\begin{claim}\label{clm:simplify}
$$ \mathop{\mathbb{E}}_{\substack{x \in_R \{-1,1\} \\ y \sim_{\rho'} x}} [\widetilde{J}(h_0(x), h_1(y))]  =  \sum_{m,n : m+ n \textrm{ is even}} \nu_{m,n} \cdot \widehat{h_0}^m(1)  \cdot\widehat{h_1}^n(1) \cdot \left( \frac{1+\rho'}{2} + (-1)^m \cdot \frac{1-\rho'}{2} \right) 
$$
where
$$
\nu_{m,n} = \sum_{m_1 \ge m ; n_1 \ge n} \mu_{m_1, n_1} \cdot \widehat{h_0}^{m_1-m}(0) \cdot \widehat{h_1}^{n_1-n}(0) \cdot \binom{m_1}{m} \binom{n_1}{n}
$$
\end{claim}
The proof  is deferred to Appendix~\ref{app:missing-SoS-proof}. Next, we note that $\nu_{0,0} = \widetilde{J}(\widehat{h_0}(0), \widehat{h_1}(0))$. 
Thus, we get that 
\begin{equation}\label{eq:difference-terms}
 \mathop{\mathbb{E}}_{\substack{x \in_R \{-1,1\} \\  y \sim_{\rho'} x}} [\widetilde{J}(h_0(x), h_1(y))]  - \tilde{J}(\widehat{h_0}(0), \widehat{h_1}(0)) =  \sum_{\substack{m,n : m+ n \textrm{ is even} \\ K \ge m + n \ge 2}} \nu_{m,n} \cdot \widehat{h_0}^m(1) \cdot \widehat{h_1}^n(1) \cdot \left( \frac{1+\rho'}{2} + (-1)^m \cdot \frac{1-\rho'}{2} \right) 
\end{equation}
We first make the following claim which bounds the terms when $m  + n \ge 4$. 
\begin{claim}\label{clm:bound1}
$$
A \vdash_{4K+3} Y \ge  \sum_{\substack{m,n : m+ n \textrm{ is even } \\ \textrm{ and }m+n \ge 4}} \nu_{m,n}  \cdot \widehat{h_0}^m(1) \cdot \widehat{h_1}^n(1) \cdot \left( \frac{1+\rho'}{2} + (-1)^m \cdot \frac{1-\rho'}{2} \right)  \ge -Y
$$
where $Y = 2cK^4 2^{2K} (\widehat{h_0}^4(1) + \widehat{h_1}^4(1))$. 
\end{claim}
Again, we defer the proof of Claim~\ref{clm:bound1} to Appendix~\ref{app:missing-SoS-proof}.
%\begin{claim}\label{clm:bound-1}
%There exists an integer $d_1 = d_1(D, \delta)$ and $C = C(\epsilon, \delta)$ such that 
%$$
%A \vdash_{d_1} CD^2 (a'^4 + b'^4) \ge  \sum_{m,n : m+ n \textrm{ is even and }m+n \ge 4} d_{m,n} a'^m b'^n \cdot \left( \frac{1+\rho}{2} + (-1)^m \cdot \frac{1-\rho}{2} \right)  \ge -CD^2 (a'^4 + b'^4)
%$$ 
%\end{claim}
Thus, we are only left with the task of controlling the terms when  $m+n=2$. Note that 
$$
\sum_{m+n=2} \nu_{m,n} \cdot \widehat{h_0}^m(1) \cdot  \widehat{h_1}^n(1) \cdot \left( \frac{1+\rho'}{2} + (-1)^m \cdot \frac{1-\rho'}{2} \right) = \nu_{2,0} \cdot \widehat{h_0}^2(1) + \nu_{0,2} \cdot \widehat{h_1}^2(1) + \rho' \nu_{1,1} \cdot \widehat{h_0}(1) \widehat{h_1}(1)
$$ For the sake of brevity, call the above quantity $\Lambda$. 
Next, we observe that at $x=\widehat{h_0}(0), y = \widehat{h_1}(0)$
$$
\frac{\partial^2 \widetilde{J}(x,y) }{\partial x^2}   = 2\nu_{2,0} \ \  \ \frac{\partial^2 \widetilde{J}(x,y) }{\partial y^2}   = 2\nu_{0,2}  \ \ \ \frac{\partial^2 \widetilde{J}(x,y) }{\partial x \partial y}   = \nu_{1,1}
$$
To see this, note that 
$$
\frac{\partial^2 \widetilde{J}(x,y) }{\partial x^2}_{\substack{x=\widehat{h_0}(0)  , y = \widehat{h_1}(0)}} = \frac{\partial^2 \widetilde{J}(\widehat{h_0}(0) +x',\widehat{h_1}(0)+y') }{\partial x'^2}_{x'=0  , y'=0} 
$$
However, the quantity on the right side is simply twice the coefficient of $x'^2$ in the polynomial $ \widetilde{J}(\widehat{h_0}(0) +x',\widehat{h_1}(0)+y') $ which is exactly $2\nu_{2,0}$. The other equalities follow similarly. 
Thus, we get that 
$$
\Lambda= \frac{1}{2} \left( \frac{\partial^2 \tilde{J}(x,y) }{\partial x^2}\widehat{h_0}^2(1) + \frac{\partial^2 \tilde{J}(x,y) }{\partial y^2} \widehat{h_1}^2(1) +2 \rho'\frac{\partial^2 \tilde{J}(x,y) }{\partial x \partial y}\widehat{h_0}(1) \widehat{h_1}(1) \right) 
$$
In the above, all the derivatives are evaluated at $x=\widehat{h_0}(0), y = \widehat{h_1}(0)$. 
We now make the following claim which gives a lower bound on $\Lambda$. 
\begin{claim}\label{clm:lower-bound-Lambda}
For every $\epsilon > 0$ and $\rho \in (-1,0)$,   there exists $ d_\gamma' = d_\gamma'(\epsilon , \rho')$ such that  
$$
A \vdash_{ d_\gamma' } \Lambda \ge - \epsilon \cdot ( \widehat{h_0}^2(1) + \widehat{h_1}^2(1) )  
$$
\end{claim}
We defer this proof to Appendix~\ref{app:missing-SoS-proof}.  Now, set $d_{\gamma} = \max \{d'_{\gamma}, 4K+3\}$ and $c_{\gamma} =2cK^4 2^{2K}$.  Combining Claim~\ref{clm:bound1} and Claim~\ref{clm:lower-bound-Lambda} with (\ref{eq:difference-terms}), we get Lemma~\ref{lem:taylor}.
\end{proof}

\subsection{Tensorization: } We now do a ``tensorization" of the inequality in Lemma~\ref{lem:taylor}. Let $\{\phi(x) \}_{x \in \{-1,1\}^n}$ be a set of indeterminates. We recall that for $y \in \{-1,1\}^i$, we define the set $\{\phi_y(z) \}_{z \in \{-1,1\}^{n-i}}$ of indeterminates  as follows : $\phi_y(z) = \phi(z \cdot y)$.  
As before, we can define the fourier coefficients $\widehat{\phi_y}(S)$ for $S \subseteq [n-i]$ and it is easy to see that they are homogenous linear forms in the indeterminates $\widehat{\phi}(S)$ (for $S \subseteq [n]$). We now state a few basic properties for the indeterminates $ g_y(z)$ and $\widehat{g_y}(S)$.
\begin{align}
\label{eq:prelim-obs1}
A'_p &\vdash_1 \mathop{\cup}_{i=0}^{n-1}  \mathop{\cup}_{y \in \{-1,1\}^i}  \mathop{\cup}_{z \in \{-1,1\}^{n-i}} \{ \epsilon \le g_y(z) \le 1- \epsilon \}
\\
\label{eq:prelim-obs2}
A'_p &\vdash_1 \mathop{\cup}_{i=0}^{n-1}  \mathop{\cup}_{y \in \{-1,1\}^i}  \mathop{\cup}_{S \subseteq [n-i]} \{ -1 \le \widehat{g_y}(S) \le 1 \}
\end{align}
\begin{align}
\label{eq:prelim-obs3}
\vdash_2\mathop{\mathbf{E}}_{y \in \{-1,1\}^i} [\widehat{g_x}^2(n-i)] = \sum_{\substack{S \subseteq \{n-i , \ldots  , n \}  \\ n -i \in S }} \widehat{g}^2(S)
\end{align}

%For any function $\phi : \{-1,1\}^n \rightarrow \mathbb{R}$ and for $x \in \{-1,1\}^i$, we define $\phi_x : \{-1,1\}^{n-i} \rightarrow \mathbb{R}$ defined in the following way : For $y \in \{-1,1\}^{n-i}$, $\phi_x : y \mapsto \phi(y \circ x)$ where $y \circ x$ denotes the concatenation of $y$ and $x$.  Like before, we can define the fourier coefficients $\widehat{\phi_x}(S)$ for $S \subseteq [n-i]$ and it is easy to see that they are homogenous linear forms in the indeterminates $\widehat{\phi}(S)$ (for $S \subseteq [n]$). 

\begin{lemma}\label{lem:roll-out-induction}
For the parameters $c_\gamma = c_\gamma(\epsilon, \rho')$ and $d_{\gamma} = d_{\gamma} (\epsilon, \rho')$ from Lemma~\ref{lem:taylor}, 
\begin{eqnarray*}A'_p \vdash_{d_\gamma} \mathop{\mathbf{E}}_{\substack{x \in \{-1,1\}^n \\  y \sim_{\rho'} x}}  [\widetilde{J}(g(x), g(y))] &\ge& \widetilde{J}(\mathbf{E}[g(x)], \mathbf{E}[g(y)]) - \epsilon \left(\sum_{i=0}^{n-1} \mathbf{E}_{z \in \{-1,1\}^{i}} [\widehat{g}_z^2(n-i)]\right) \\ &-& c_{\gamma} \left(\sum_{i=0}^{n-1} \mathbf{E}_{z \in \{-1,1\}^{i}} [\widehat{g}_z^4(n-i)] \right)  \end{eqnarray*}
\end{lemma}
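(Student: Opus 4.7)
The plan is to induct on $n$, mirroring the inductive proof of Theorem~\ref{thm:tensorization} but keeping every inequality inside the SoS calculus. One small preparation first: strengthen the statement to permit two sequences of indeterminates $g_1, g_2$ on the two sides of $\widetilde{J}$, since conditioning on the last coordinate produces the distinct restrictions $g_{x_n}$ and $g_{y_n}$; the single-function statement of the lemma is the specialization $g_1 = g_2 = g$. The box constraints $\epsilon \le g_y(z) \le 1-\epsilon$ on every restriction are available from~\eqref{eq:prelim-obs1}, so Lemma~\ref{lem:taylor} is applicable under $A'_p$ to any restriction of $g$.

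The base case $n=1$ is exactly Lemma~\ref{lem:taylor} with $h_0 = g_1$, $h_1 = g_2$. For the inductive step, decompose
\[
 \mathop{\mathbf{E}}_{x, y \sim_{\rho'} x}[\widetilde{J}(g(x), g(y))] = \mathop{\mathbf{E}}_{(x_n, y_n)} \mathop{\mathbf{E}}_{(\tilde x, \tilde y)}[\widetilde{J}(g_{x_n}(\tilde x), g_{y_n}(\tilde y))],
\]
and apply the two-function inductive hypothesis on $\{-1,1\}^{n-1}$ to $(g_{x_n}, g_{y_n})$ for each value of $(x_n, y_n)$. The resulting SoS inequality has main term $\widetilde{J}(\widehat{g_{x_n}}(\emptyset), \widehat{g_{y_n}}(\emptyset))$ and error terms indexed by $j = 0, \ldots, n-2$. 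The identity $(g_{x_n})_z = g_{z \cdot x_n}$ gives $\mathbf{E}_{x_n} \mathbf{E}_{z \in \{-1,1\}^j}[\widehat{(g_{x_n})_z}^2((n-1)-j)] = \mathbf{E}_{z' \in \{-1,1\}^{j+1}}[\widehat{g_{z'}}^2(n-(j+1))]$, and reindexing $i = j+1$ recovers exactly the error terms for $i = 1, \ldots, n-1$ in the theorem.

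To handle the residual main term $\mathop{\mathbf{E}}_{(x_n, y_n)}[\widetilde{J}(\widehat{g_{x_n}}(\emptyset), \widehat{g_{y_n}}(\emptyset))]$, apply Lemma~\ref{lem:taylor} once more with the linear-form substitutions $h_0(x_n) := \widehat{g_{x_n}}(\emptyset)$ and $h_1(y_n) := \widehat{g_{y_n}}(\emptyset)$; the box constraints on $h_0, h_1$ follow by averaging the box constraints on $g$. Since $\widehat{h_i}(0) = \mathbf{E}\,g$ and $\widehat{h_i}^2(1) = \widehat{g}^2(\{n\})$, the output is $\widetilde{J}(\mathbf{E}\,g, \mathbf{E}\,g)$ minus error proportional to $\widehat{g}^2(\{n\})$ and $\widehat{g}^4(\{n\})$, where the proportionality constants (including a factor of $2$ coming from the two-function form) are absorbed into $\epsilon$ and $c_\gamma$. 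This contributes exactly the $i=0$ term. Adding the two SoS inequalities via the first bullet of Fact~\ref{fac:SoS-basic0} completes the induction.

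The main obstacle I anticipate is bookkeeping the SoS degree across an induction whose depth grows with $n$. This is ultimately not a problem: each invocation of Lemma~\ref{lem:taylor} yields a proof of degree $d_\gamma(\epsilon, \rho')$ independent of $n$, substitution of linear forms into the $h_i$ preserves that degree, and Fact~\ref{fac:SoS-basic0} preserves the maximum degree under sums. A secondary subtlety is justifying the two-function strengthening of the inductive hypothesis, but this costs nothing, since the proof of Lemma~\ref{lem:taylor} already treats two independent functions and the Fourier identities in use act on each side independently.
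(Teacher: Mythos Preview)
Your proposal is correct and is essentially the same argument as the paper's, just presented as a rolled induction rather than an unrolled one. The paper writes out every instance of Lemma~\ref{lem:taylor}---indexed by level $i$ and by pairs of conditioning strings $(z_1,z_{-1})\in\{-1,1\}^i\times\{-1,1\}^i$---multiplies each by the probability weight $\big(\tfrac{1+\rho'}{4}\big)^{i-d(z_1,z_{-1})}\big(\tfrac{1-\rho'}{4}\big)^{d(z_1,z_{-1})}$, and sums so that all intermediate $\widetilde{J}$ terms telescope; your two-function strengthening is exactly what lets one fold this telescoping sum back into a clean induction on $n$, and the degree and error bookkeeping are identical in both presentations (including the harmless factor of $2$ you flag, which the paper's summation over both $z_1$ and $z_{-1}$ also incurs).
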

The proof of this claim is a very simple induction. For the sake of completeness, we give the proof in Appendix~\ref{app:missing-SoS-proof}. 
We now simplify the error terms. Towards this, note that (\ref{eq:prelim-obs3}) implies that 
$$
\vdash_2 \epsilon \left(\sum_{i=0}^{n-1} \mathbf{E}_{z \in \{-1,1\}^{i}}[\widehat{g}_z^2(n-i)] \right)= \sum_{S \not = \phi} \widehat{g}^2(S) \le \sum_{S } \widehat{g}^2(S)  = \mathop{\mathbf{E}}_{x \in \{-1,1\}^n} [g^2(x)]
$$
Further, $A'_p \vdash_3 \mathop{\mathbf{E}}_{x \in \{-1,1\}^n} [g^2(x)] \le 1$ (using Fact~\ref{fac:SoS0}). 
Thus, we get that \begin{equation}\label{eq:penultimate}
A'_p  \vdash_{d_\gamma} \mathop{\mathbf{E}}_{\substack{x \in \{-1,1\}^n \\ y \sim_{\rho'} x}}  [\widetilde{J}(g(x), g(y))] \ge \widetilde{J}(\mathbf{E}[g(x)], \mathbf{E}[g(y)]) - \epsilon -c_{\gamma} \left(\sum_{i=0}^{n-1} \mathbf{E}_{x \in \{-1,1\}^{i}} [\widehat{g}_x^4(n-i)]\right) 
\end{equation}

\subsection{Bounding the error terms:} 
Thus, all we are left to bound is the ``degree-4" term. We briefly describe why one has to be careful to get a (meaningful) upper bound here. The reason is that the obvious strategy to do this is to break $g$ into high degree and low-degree parts based on the noise parameter (call them $h$ and $\ell$). Now, this very naively gives an error term of the form $\mathbf{E}_x \widehat{h}_x^4(n-i)$ and $\mathbf{E}_x \widehat{\ell}_x^4(n-i)$. The latter can be easily bound using hypercontractivity. However,  there does not seem to be obvious way to bound the former. This is in spite of the fact that $\mathbf{E}_x \widehat{h}_x^2(n-i)$ is small. We now show how to get around this problem. 

We define $d_{\eta} = (1/\eta) \cdot \log (1/\eta)$. Now, define the sequence of indeterminates $\{h(x) \}_{x \in \{-1,1\}^n}$ and $\{\ell(x) \}_{x \in \{-1,1\}^n}$ as follows : 
$$ h (x)= \sum_{|S|>d_\eta} \widehat{g}(S) \chi_S(x) \quad \quad \ell (x) = \sum_{|S| \le d_\eta} \widehat{g}(S) \chi_S(x)$$
By the way it is defined, it is clear that $\vdash_1 h(x) + \ell(x) = g(x)$. 
%We make another couple of observations. 
%\begin{eqnarray}
%\label{high1} A'_p \vdash_3  \sum_{i=0}^{n-1} \mathop{\mathbf{E}}_{x \in \{-1,1\}^n} [\widehat{h}_x^2(n-i)] &=&  \sum_{|S|>d_\tau} \widehat{g}^2 (S) \le (1-\eta )^{d_{\tau}} \cdot (\sum_{S} \widehat{f}^2(S)) \\ &\le& (1-\eta)^{d_{\tau}} \mathop{\mathbf{E}}_{x \in \{-1,1\}^n} [f^2(x)] \le (1-\eta)^{d_{\tau}}
%\end{eqnarray}
Now, we can analyze the term $\mathbf{E}_{x \in \{-1,1\}^{i}} [\widehat{g}_x^4(n-i)]$ as
\begin{eqnarray*}
\vdash_4 \sum_{i=0}^{n-1} \mathbf{E}_{x \in \{-1,1\}^{i}} [\widehat{g}_x^4(n-i)] &=&  \sum_{i=0}^{n-1} (\mathbf{E}_{x \in \{-1,1\}^{i}} [\widehat{g}_x^3(n-i) (\widehat{h}_x(n-i) + \widehat{\ell}_x(n-i))] )\\ 
&=&  \sum_{i=0}^{n-1} ( \mathbf{E}_{x \in \{-1,1\}^{i}} [\widehat{g}_x^3(n-i) \widehat{h}_x(n-i)] +  \mathbf{E}_{x \in \{-1,1\}^{i}} [\widehat{g}_x^2(n-i) \widehat{\ell}_x^2(n-i)] \\ 
&+& \mathbf{E}_{x \in \{-1,1\}^{i}} [\widehat{g}_x^2(n-i) \widehat{\ell}_x(n-i) \widehat{h}_x(n-i)] )
\end{eqnarray*}
We begin by stating the following useful fact : 
\begin{fact}\label{fac:temp}
$A_p \vdash_3 \sum_{i=0}^{n-1} \mathbf{E}_{x \in \{-1,1\}^i} \widehat{h}_x^2(n-i)  \le \eta$
\end{fact}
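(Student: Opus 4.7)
My plan is to prove Fact~\ref{fac:temp} by reducing the left-hand side to a Parseval-type Fourier sum on the smoothed function $f_2$, exploiting that $g=T_{1-\eta}f_2$ damps every high-degree coefficient by the scalar $(1-\eta)^{|S|}$. As the first step, invoking the identity of (\ref{eq:prelim-obs3}) coordinate by coordinate and summing over $i$ yields the polynomial identity
$$
\vdash_2\ \sum_{i=0}^{n-1}\mathbf{E}_{x\in\{-1,1\}^i}[\widehat{h}_x^2(n-i)]\;=\;\sum_{S\ne\emptyset}\widehat{h}^2(S)\;=\;\sum_{|S|>d_\eta}\widehat{g}^2(S),
$$
the last equality being just the definition $\widehat h(S)=\widehat g(S)\mathbf{1}[|S|>d_\eta]$ together with $d_\eta\ge 1$. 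Since $\widehat g(S)=(1-\eta)^{|S|}\widehat{f_2}(S)$ for every $|S|\ge 1$, this rewrites (as a polynomial identity in the indeterminates) as $\sum_{|S|>d_\eta}(1-\eta)^{2|S|}\widehat{f_2}^2(S)$.

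The second step is a purely scalar damping bound: for $|S|>d_\eta=(1/\eta)\log(1/\eta)$ we have $(1-\eta)^{2|S|}\le e^{-2\eta|S|}\le e^{-2\log(1/\eta)}=\eta^2\le\eta$. So $\eta-(1-\eta)^{2|S|}$ is a nonnegative scalar and $\widehat{f_2}^2(S)$ is literally the square of a linear form, making $(\eta-(1-\eta)^{2|S|})\widehat{f_2}^2(S)\ge 0$ an immediate degree-$2$ SoS that uses no constraints at all. Summing this over $|S|>d_\eta$ and adding the nonnegative slack $\eta\cdot\sum_{|S|\le d_\eta}\widehat{f_2}^2(S)$ yields, via Parseval for $f_2$,
$$
\vdash_2\ \eta\cdot\mathbf{E}_x[f_2^2(x)]\;\ge\;\sum_{|S|>d_\eta}(1-\eta)^{2|S|}\widehat{f_2}^2(S).
$$

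The third and final step is to bound $\mathbf{E}[f_2^2]\le 1$ using $A_p$. Since $f_2(x)=(1-\epsilon)f(x)+\epsilon/2$, a direct expansion gives $f_2(x)(1-f_2(x))=(1-\epsilon)^2\,f(x)(1-f(x))+c_\epsilon$ for some nonnegative constant $c_\epsilon$, and $f(x)(1-f(x))\in\mathcal{C}_2(A_p)$; hence $A_p\vdash_2 f_2(x)-f_2^2(x)\ge 0$, and averaging together with the degree-$1$ consequence $A_p\vdash_1 \mathbf{E}[f_2]\le 1$ yields $A_p\vdash_2 \mathbf{E}[f_2^2]\le 1$. Chaining the three steps gives the claim comfortably within degree $3$. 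There is no real obstacle here beyond bookkeeping: the only nontrivial ingredients are that the Fourier identities lift verbatim into the SoS world (already set up in (\ref{eq:prelim-obs1})--(\ref{eq:prelim-obs3})) and the single use of the degree-$2$ product $f(1-f)$ from $A_p$; the rest is the scalar estimate on $(1-\eta)^{2|S|}$ afforded by our choice of $d_\eta$.
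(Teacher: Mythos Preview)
Your proof is correct and follows essentially the same approach as the paper: expand the left-hand side via the Parseval-type identity, use the scalar bound $(1-\eta)^{2|S|}\le\eta$ for $|S|>d_\eta$, and finish with $\mathbf{E}[\,\cdot\,^2]\le 1$ from $A_p$. The only cosmetic difference is that the paper pushes the bound back to $\widehat f$ and uses $A_p\vdash_3 \mathbf{E}[f^2]\le 1$, whereas you stop at $\widehat{f_2}$ and use $A_p\vdash_2 \mathbf{E}[f_2^2]\le 1$; both are fine.
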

\begin{proof}
$$
 \vdash_2 \sum_{i=0}^{n-1} \mathbf{E}_{x \in \{-1,1\}^i} \widehat{h}_x^2(n-i) = \sum_{S} \widehat{h}^2(S)  = \sum_{|S| > d_\eta} (1-\eta)^{d_{\eta}}  \widehat{f}^2(S) \le  \eta \cdot (\sum_{|S| > d_\eta}  \widehat{f}^2(S) ) \le  \eta \cdot (\sum_{|S|}  \widehat{f}^2(S) )
$$
$$
A_p \vdash_3 \left(\sum_{|S|}  \widehat{f}^2(S) \right) = \mathbf{E}_{x \in \{-1,1\}^n} [f^2(x)] \le 1
$$
Combining the two facts, finishes the proof. 
\end{proof}

We now  make the following claims. 
\begin{claim}\label{clm:final-1}
$A_p \vdash_6 \sum_{i=0}^{n-1}  \mathop{\mathbf{E}}_{x \in \{-1,1\}^{i}} [\widehat{g}_x^3(n-i) \widehat{h}_x(n-i)]  \le \sqrt{\eta}$. 
\end{claim}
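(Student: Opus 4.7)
The plan is to reduce the claim to an SoS version of Cauchy-Schwarz combined with the two available second-moment estimates: Fact~\ref{fac:temp} gives $\sum_i \mathbf{E}_x[\widehat h_x^2(n-i)] \le \eta$, while (\ref{eq:prelim-obs3}) and the fact that $A_p \vdash_3 \mathbf{E}[g^2(x)] \le 1$ together give $\sum_i \mathbf{E}_x[\widehat g_x^2(n-i)] \le 1$. So morally we want to turn the mixed cubic--linear product $\widehat g_x^3\,\widehat h_x$ into something that can be controlled by these two quantities.

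The SoS Cauchy--Schwarz I will invoke is the one-parameter polarization: for any $\lambda>0$,
\[
  2 (\widehat g_x^2)(\widehat g_x\widehat h_x) \;=\; \lambda\,\widehat g_x^4 + \tfrac{1}{\lambda}\widehat g_x^2\widehat h_x^2 - \Bigl(\sqrt{\lambda}\,\widehat g_x^2 - \tfrac{1}{\sqrt\lambda}\widehat g_x\widehat h_x\Bigr)^2,
\]
which is a degree-$4$ identity in the indeterminates, and whose last term is $-\mathrm{SoS}$. Hence $2\widehat g_x^3\widehat h_x \le \lambda\widehat g_x^4 + \lambda^{-1}\widehat g_x^2\widehat h_x^2$ at degree $4$. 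To turn the right-hand side into the two ``good'' second moments, I will use the box constraints. From (\ref{eq:prelim-obs2}) we have $A'_p \vdash_1 1\pm \widehat g_x(n-i)\ge 0$, so $A'_p\vdash_4 \widehat g_x^2(1-\widehat g_x^2) = \widehat g_x^2(1-\widehat g_x)(1+\widehat g_x)\ge 0$ (product of an SoS term and two nonnegative constraints, by Fact~\ref{fac:SoS-basic0}), giving $\widehat g_x^4\le \widehat g_x^2$. The same argument yields $\widehat g_x^2\widehat h_x^2\le \widehat h_x^2$ at degree $4$.

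Combining these, taking $\mathbf{E}_x$ and summing over $i$, I obtain
\[
  2\sum_{i=0}^{n-1}\mathbf{E}_x[\widehat g_x^3(n-i)\,\widehat h_x(n-i)] \;\le\; \lambda \sum_{i=0}^{n-1} \mathbf{E}_x[\widehat g_x^2(n-i)] \;+\; \tfrac{1}{\lambda}\sum_{i=0}^{n-1} \mathbf{E}_x[\widehat h_x^2(n-i)],
\]
and plugging in the bounds $\le 1$ and $\le \eta$ from the preliminary observations and Fact~\ref{fac:temp} respectively, setting $\lambda=\sqrt{\eta}$ gives the target bound $\sqrt{\eta}$.

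The only subtle point is degree accounting. The polarization is degree $4$, the two ``box'' inequalities $\widehat g_x^4\le \widehat g_x^2$ and $\widehat g_x^2\widehat h_x^2\le \widehat h_x^2$ each cost one application of a linear constraint from $A_p$ (pushing the proof degree up to $4$ after applying the second bullet of Fact~\ref{fac:SoS-basic0}), and finally converting $\sum_i\mathbf{E}_x[\widehat g_x^2(n-i)]\le 1$ via Parseval and $\mathbf{E}[g^2]\le 1$ costs another two degrees (via $A_p\vdash_3 \mathbf{E}[f^2]\le 1$ plus the degree-$1$ relation between $g$ and $f$). This yields an overall degree-$6$ SoS derivation, exactly as claimed. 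The main thing to be careful about is that at no step do we need to divide by $\widehat g_x$ or $\widehat h_x$---the polarization trick is what lets the proof stay polynomial, and the quantitative $\sqrt\eta$ factor comes out automatically by optimizing $\lambda$.
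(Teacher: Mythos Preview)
Your proof is correct and follows essentially the same route as the paper's: an AM--GM/polarization split, then the box constraint $-1\le \widehat g_x(n-i)\le 1$ to reduce the resulting high-degree term to $\widehat g_x^2$, and finally the two Parseval-type bounds (Fact~\ref{fac:temp} and $\sum_i \mathbf E_x[\widehat g_x^2(n-i)]\le 1$). The only cosmetic difference is the choice of split: the paper takes $a=\widehat g_x^3,\ b=\widehat h_x$, producing $\widehat g_x^6$ and $\widehat h_x^2$ and then reduces $\widehat g_x^6\le\widehat g_x^2$, whereas you take $a=\widehat g_x^2,\ b=\widehat g_x\widehat h_x$, producing $\widehat g_x^4$ and $\widehat g_x^2\widehat h_x^2$ and then reduce each via the box.
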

\begin{claim}\label{clm:final-2}
$
A_p \vdash_5 \sum_{i=0}^{n-1} \mathop{\mathbf{E}}_{x \in \{-1,1\}^{i}} [\widehat{g}_x^2(n-i) \widehat{\ell}_x^2(n-i)] \le \sqrt{\eta} + \frac{9^{d_\eta}}{\sqrt{\eta}} ( \sum_{i=1}^n (\Inf_i^{\le d_\eta}(f))^2)
$
\end{claim}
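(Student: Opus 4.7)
My plan: bound each term $\widehat{g}_x^2(n-i)\widehat{\ell}_x^2(n-i)$ pointwise by a weighted SoS AM--GM split that isolates a pure-$g$ piece (to be controlled via boundedness of $g$ plus Parseval) from a pure-$\ell$ piece (to be controlled via SoS hypercontractivity, which applies because $\ell$ has degree less than $d_\eta$). The weights will be calibrated by a parameter $\alpha = 2\sqrt{\eta}$ so that the two contributions sum to exactly the $\sqrt{\eta}$ and $9^{d_\eta}/\sqrt{\eta}$ terms in the claim. An attractive feature is that, unlike the naive decomposition $\widehat{g}_x = \widehat{h}_x + \widehat{\ell}_x$ followed by expansion of $\widehat{g}_x^2\widehat{\ell}_x^2$, this AM--GM split never forces us to upper bound $\mathbf{E}_x\widehat{h}_x^4$, which is hopeless since $h$ is neither bounded nor low-degree.

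The first step will be the degree-$4$ tautology
\[
\widehat{g}_x^2(n-i)\widehat{\ell}_x^2(n-i) \;\le\; \frac{\alpha}{2}\widehat{g}_x^4(n-i) + \frac{1}{2\alpha}\widehat{\ell}_x^4(n-i)
\]
coming from $(\sqrt{\alpha}\,\widehat{g}_x^2(n-i) - \widehat{\ell}_x^2(n-i)/\sqrt{\alpha})^2 \ge 0$. For the $g$-piece I will reduce $\widehat{g}_x^4$ to $\widehat{g}_x^2$: writing $\widehat{g}_x(n-i) = \mathbf{E}_y[g(x\cdot y)\chi_{\{n-i\}}(y)]$ and applying SoS Cauchy--Schwarz gives $\widehat{g}_x^2(n-i) \le \mathbf{E}_y g(x\cdot y)^2$; combining with the constraints $g(1-g)\ge 0$ and $1-g\ge 0$ from $A_p$ upgrades this to $\widehat{g}_x^2(n-i)\le 1$; multiplying by the square $\widehat{g}_x^2(n-i)$ then yields $\widehat{g}_x^4(n-i)\le \widehat{g}_x^2(n-i)$. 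Parseval together with $\mathbf{E}[g^2]\le \mathbf{E}[g]\le 1$ finally gives $\sum_i\mathbf{E}_x\widehat{g}_x^2(n-i)\le 1$, so the entire $g$-contribution is at most $\alpha/2 = \sqrt{\eta}$.

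For the $\ell$-piece I will use that $\widehat{\ell}_x(n-i) = \sum_{T \subseteq \{1,\dots,i\},\, |T|<d_\eta}\chi_T(x)\,\widehat{\ell}(T\cup\{n-i\})$ is a polynomial in $x$ of degree less than $d_\eta$, and invoke the SoS $(2,4)$-hypercontractive inequality of~\cite{BBHKSZ:12,OZ:12} to get $\mathbf{E}_x\widehat{\ell}_x^4(n-i) \le 9^{d_\eta}\bigl(\mathbf{E}_x\widehat{\ell}_x^2(n-i)\bigr)^2$. Since $\widehat{\ell}(S) = (1-\eta)^{|S|}(1-\epsilon)\widehat{f}(S)$ for nonempty $S$ gives $\widehat{\ell}^2(S)\le \widehat{f}^2(S)$ as an SoS inequality, and squaring preserves this order on nonnegative sums, we obtain $(\mathbf{E}_x\widehat{\ell}_x^2(n-i))^2\le (\Inf_{n-i}^{\le d_\eta}(f))^2$. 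Summing over $i$, taking expectations, and substituting $\alpha = 2\sqrt{\eta}$ into the AM--GM identity produces the desired bound. I expect the only real obstacle to be degree-bookkeeping: one must check that the Cauchy--Schwarz derivation of $\widehat{g}_x^2\le 1$ and the black-box hypercontractivity call both fit inside degree $5$ once the linear forms $\widehat{g}$ and $\widehat{\ell}$ are unfolded into the original $f(x)$ indeterminates, which should go through cleanly since all the building blocks are degree-$\le 2$ SoS certificates coming from $A_p$.
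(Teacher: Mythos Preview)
Your proposal is correct and follows essentially the same route as the paper: apply the weighted AM--GM split $\widehat{g}_x^2\widehat{\ell}_x^2 \le \tfrac{\sqrt\eta}{2}\widehat{g}_x^4 + \tfrac{1}{2\sqrt\eta}\widehat{\ell}_x^4$, bound $\sum_i \mathbf{E}_x \widehat{g}_x^4(n-i)$ by $1$ via $|\widehat{g}_x(n-i)|\le 1$ and Parseval, and bound $\mathbf{E}_x\widehat{\ell}_x^4(n-i)$ by $9^{d_\eta}(\mathbf{E}_x\widehat{\ell}_x^2(n-i))^2 \le 9^{d_\eta}(\Inf_{n-i}^{\le d_\eta} f)^2$ via SoS hypercontractivity. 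The only cosmetic difference is that the paper obtains $|\widehat{g}_x(n-i)|\le 1$ directly as a degree-$1$ consequence of the box constraints (and then invokes the degree-$5$ fact $y^4\le y^2$), whereas you go through Cauchy--Schwarz and $g(1-g)\ge 0$; both are fine and land inside degree $5$.
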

\begin{claim}\label{clm:final-3}
$
A_p \vdash_8 \sum_{i=0}^{n-1} \mathbf{E}_{x \in \{-1,1\}^{i}} [\widehat{g}_x^2(n-i) \widehat{h}_x(n-i) \widehat{\ell}_x(n-i)] \le 2\sqrt{\eta} + \frac{9^{d_\eta}}{\sqrt{\eta}} ( \sum_{i=1}^n (\Inf_i^{\le d_\eta}(f))^2)
$
\end{claim}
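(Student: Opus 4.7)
\medskip
\noindent\textbf{Proof plan.}
The plan is to reduce Claim~\ref{clm:final-3} to Claim~\ref{clm:final-2} and Fact~\ref{fac:temp} by an SoS form of the AM-GM inequality. The starting point is the identity $(\widehat{g}_x(n-i)\widehat{h}_x(n-i) - \widehat{g}_x(n-i)\widehat{\ell}_x(n-i))^2 \ge 0$, which is manifestly a sum of squares and requires no hypotheses; expanding it gives the degree-$4$ bound
$$2\,\widehat{g}_x^2(n-i)\,\widehat{h}_x(n-i)\,\widehat{\ell}_x(n-i) \;\le\; \widehat{g}_x^2(n-i)\,\widehat{h}_x^2(n-i) \;+\; \widehat{g}_x^2(n-i)\,\widehat{\ell}_x^2(n-i),$$
so matters reduce to estimating the two pure-square sums on the right, exactly as in Claims~\ref{clm:final-1} and~\ref{clm:final-2}.

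For the $\widehat{\ell}$-side, I would invoke Claim~\ref{clm:final-2} verbatim. For the $\widehat{h}$-side, the idea is to absorb the $\widehat{g}$-factor using $\widehat{g}_x^2(n-i)\le 1$, after which the remaining $\sum_i \mathbf{E}[\widehat{h}_x^2(n-i)]$ is controlled by Fact~\ref{fac:temp}. Concretely, (\ref{eq:prelim-obs2}) gives $A'_p \vdash_1 1\pm \widehat{g}_x(n-i) \ge 0$; multiplying these two (bullet~2 of Fact~\ref{fac:SoS-basic0}) yields $A'_p \vdash_2 1-\widehat{g}_x^2(n-i) \ge 0$, and multiplying further by the SoS $\widehat{h}_x^2(n-i)$ produces
$$A'_p \vdash_4 \widehat{g}_x^2(n-i)\,\widehat{h}_x^2(n-i) \;\le\; \widehat{h}_x^2(n-i).$$
Summing over $i$, taking expectations, and chaining with Fact~\ref{fac:temp} then gives $\sum_{i=0}^{n-1} \mathbf{E}[\widehat{g}_x^2(n-i)\widehat{h}_x^2(n-i)] \le \eta$ at a small constant degree (using Claim~\ref{clm:subsec} to move between $A_p$ and $A'_p$ as needed).

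Combining the two halves with the AM-GM identity and using $\eta \le \sqrt{\eta}$, which we may assume since $\eta$ will eventually be taken small, yields
$$\sum_{i=0}^{n-1} \mathbf{E}_{x}[\widehat{g}_x^2(n-i)\widehat{h}_x(n-i)\widehat{\ell}_x(n-i)] \;\le\; \tfrac{1}{2}\Bigl(\eta + \sqrt{\eta} + \tfrac{9^{d_\eta}}{\sqrt{\eta}}\sum_{i=1}^n (\Inf_i^{\le d_\eta}(f))^2\Bigr) \;\le\; 2\sqrt{\eta} + \tfrac{9^{d_\eta}}{\sqrt{\eta}}\sum_{i=1}^n (\Inf_i^{\le d_\eta}(f))^2,$$
which is the advertised estimate; the total degree is comfortably within $8$ by the accounting above. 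I do not anticipate a genuine obstacle: the only thing to be careful about is that every intermediate step is a valid SoS deduction rather than a merely pointwise inequality, and this is immediate once the cross-term is written as the explicit square above so that the remaining work is just repeated application of Fact~\ref{fac:SoS-basic0}.
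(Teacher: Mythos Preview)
Your proof is correct and in fact takes a cleaner route than the paper. The paper's argument applies a \emph{weighted} AM--GM at the outset,
\[
\widehat{g}_x^2\,\widehat{h}_x\,\widehat{\ell}_x \;\le\; \frac{1}{2\sqrt{\eta}}\,\widehat{h}_x^2 \;+\; \frac{\sqrt{\eta}}{2}\,\widehat{g}_x^4\,\widehat{\ell}_x^2,
\]
and then a second AM--GM on $\widehat{g}_x^4\widehat{\ell}_x^2 \le \tfrac12(\widehat{g}_x^8 + \widehat{\ell}_x^4)$, which forces it to separately bound the degree-$8$ term $\sum_i\mathbf{E}[\widehat{g}_x^8(n-i)]\le 1$ and then the $\widehat{\ell}^4$-term via hypercontractivity. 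Your unweighted splitting $2\widehat{g}_x^2\widehat{h}_x\widehat{\ell}_x \le \widehat{g}_x^2\widehat{h}_x^2 + \widehat{g}_x^2\widehat{\ell}_x^2$ is more economical: the $\widehat{\ell}$-half is literally the left-hand side of Claim~\ref{clm:final-2}, and the $\widehat{h}$-half collapses immediately via $\widehat{g}_x^2\le 1$ and Fact~\ref{fac:temp}. This avoids the detour through $\widehat{g}_x^8$ entirely and actually gives a degree-$5$ SoS proof (the maximum of the $\vdash_4$ from your AM--GM square, the $\vdash_5$ from Claim~\ref{clm:final-2}, and the $\vdash_4$ from $\widehat{g}_x^2\widehat{h}_x^2\le\widehat{h}_x^2$ chained with Fact~\ref{fac:temp}), well under the advertised $\vdash_8$. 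The only cosmetic point is that your final constant in front of the influence term is $\tfrac12 \cdot \tfrac{9^{d_\eta}}{\sqrt{\eta}}$, which you then relax to the claimed $\tfrac{9^{d_\eta}}{\sqrt{\eta}}$; there is no issue there.
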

The proofs are deferred to the appendix. Combining (\ref{eq:penultimate}) with Claim~\ref{clm:final-1}, Claim~\ref{clm:final-2}, Claim~\ref{clm:final-3} along (\ref{eq:penultimate}) and Claim~\ref{clm:subsec}, we get that for $c_{\gamma}$ and $d_{\gamma}$ described in Lemma~\ref{lem:taylor},
\begin{eqnarray*}
A_p \vdash_{d_{\gamma}} \mathop{\mathbf{E}}_{\substack{x \in \{-1,1\}^n \\  y \sim_{\rho'} x}}  [\widetilde{J}(g(x), g(y))] &\ge& \widetilde{J}(\mathbf{E}[g(x)], \mathbf{E}[g(y)]) - \epsilon   - 4 \cdot c_{\gamma} \sqrt{\eta}\\ &-& \frac{2 \cdot 9^{d_{\eta}} \cdot c_{\gamma}}{\sqrt{\eta}} \left( \sum_{i=1}^n (\Inf_i^{\le d_\eta}(f))^2 \right)
\end{eqnarray*}

%\begin{lemma}\label{lem:borell}
%For any $\epsilon>0$, $\beta>0$ and $\rho \in (-1,0)$, if $\delta>0$ is sufficiently small compared to $\beta$, then there is some integer $d_2 = d_2(\epsilon, \rho, \delta, \beta)$ such that
%$$
%A_p \vdash_{d_2}  \mathop{\mathbf{E}}_{x \in \{-1,1\}^n, y \sim_{\rho'} x}  [\tilde{J}(g(x), g(y))] \ge \tilde{J}(\mathbf{E}[g(x)], \mathbf{E}[g(y)]) - \beta -C(\epsilon, \rho, \delta) (\sum_{i=1}^{n} (\Inf_{i}^{\le d_g}(g))^2)
%$$
%where $d_g = 1/\eta \log (1/\eta)$ and $C(\epsilon, \rho, \delta)$ is a constant dependent on $\epsilon$, $\rho$ and $\delta$.
%\end{lemma}
Using Claim~\ref{lem:comparison}, we have that $A_p \vdash_{d_\alpha} g(x) \cdot g(y) \ge   \widetilde{J}(g(x), g(y)) - 2\epsilon$. Similarly, combining this with Claim~\ref{clm:subsec1}, we can get that
\begin{equation}\label{eq:aux1}
A_p \vdash_{d_\alpha} \mathbf{E}_{x, y \sim_{\rho} x} [f(x) \cdot f(y)] \ge   \mathbf{E}_{x, y \sim_{\rho'} x} [\widetilde{J}(g(x), g(y))]  - 4 \epsilon
\end{equation}
Thus, now applying (\ref{eq:aux1}), we get
\begin{equation}\label{eq:borell}
A_p \vdash_{\max\{d_{\gamma},d_{\alpha}\}} \mathop{\mathbf{E}}_{\substack{x \in \{-1,1\}^n \\  y \sim_{\rho'} x}}  [f(x) \cdot f(y)] \ge \widetilde{J}(\mathbf{E}[g(x)], \mathbf{E}[g(y)]) - 5 \epsilon   - 4 \cdot c_{\gamma} \sqrt{\eta}- \frac{2 \cdot 9^{d_{\eta}} \cdot c_{\gamma}}{\sqrt{\eta}} \left( \sum_{i=1}^n (\Inf_i^{\le d_\eta}(f))^2 \right)
\end{equation}
%Also, note that
%\begin{equation}\label{eq:aux2}
%\vdash_4 \sum_{i=1}^{n} (\Inf_{i}^{\le d_g}(g))^2 = \sum_{i=1}^n (\sum_{i \in S : |S| \le d_g}  \widehat{g}^2(S))^2 \le \sum_{i=1}^n (\sum_{i \in S : |S| \le d_g}  \widehat{f}^2(S))^2 = \sum_{i=1}^{n} (\Inf_{i}^{\le d_g}(f))^2
%\end{equation}
%The second inequality simply uses that $\widehat{g}(S) =(1-\eta)^{|S|} \widehat{f}(S)$.  Combining (\ref{eq:aux1}) and (\ref{eq:aux2}), we have that under the conditions of Lemma~\ref{lem:borell}, we have 
%\begin{equation}\label{eq:borell}
%A_p \vdash_{d_2}  \mathop{\mathbf{E}}_{x \in \{-1,1\}^n, y \sim_{\rho} x}  [f(x) \cdot f(y)] \ge \tilde{J}(\mathbf{E}[g(x)], \mathbf{E}[g(y)]) - \delta - \epsilon  - \beta -C(\epsilon, \rho, \delta) (\sum_{i=1}^{n} (\Inf_{i}^{\le d_g}f))^2)
%\end{equation}
Now, define a new sequence of indeterminates $\{f_c(x) \}_{x \in \{-1,1\}^n}$ where $f_c(x) =1-f(x)$.  
Next, define $f_{2c} (x) = (1-\epsilon) f_c(x)   + \epsilon/2$. Next, we define $g_c(x) = \mathbf{E}_{y \sim_{1-\eta} x} [f_{2c}(x)]$. 
We now make the following observations : 
\begin{itemize}
\item $\forall x \in \{-1,1\}^n$, $A_p \vdash_1 \forall x \in \{-1,1\}^n$, $\epsilon \le g_c(x) \le (1-\epsilon)$. 
\item $\mathbf{E}_{x} [g(x)] + \mathbf{E}_x [g_c(x)]=1$. 
\item For all $i \in [n]$, $\Inf_i^{\le d_\eta} f =  \Inf_i^{\le d_\eta} f_c$. 
\end{itemize}
Thus, using the above, analogous to (\ref{eq:borell}), we have the following : 
\begin{equation}\label{eq:borellc}
A_p \vdash_{\max\{d_{\gamma},d_{\alpha}\}}  \mathop{\mathbf{E}}_{\substack{x \in \{-1,1\}^n\\ y \sim_{\rho} x}}  [f_c(x) \cdot f_c(y)] \ge \widetilde{J}(\mathbf{E}[g_c(x)], \mathbf{E}[g_c(y)]) - 5 \epsilon   - 4 \cdot c_{\gamma} \sqrt{\eta}- \frac{2 \cdot 9^{d_{\eta}} \cdot c_{\gamma}}{\sqrt{\eta}} \left( \sum_{i=1}^n (\Inf_i^{\le d_\eta}(f_c))^2 \right)
\end{equation}
Now,  define $\xi $ as $\xi =  5 \epsilon   - 4 \cdot c_{\gamma} \sqrt{\eta}- \frac{2 \cdot 9^{d_{\eta}} \cdot c_{\gamma}}{\sqrt{\eta}} \left( \sum_{i=1}^n (\Inf_i^{\le d_\eta}(f))^2 \right)$. Summing up (\ref{eq:borell}) and (\ref{eq:borellc}), we get
\begin{align}
A_p \vdash_{\max\{d_{\gamma},d_{\alpha}\}}  &\mathop{\mathbf{E}}_{\substack{x \in \{-1,1\}^n \\ y \sim_{\rho} x}}  [f(x) \cdot f(y) + (1-f(x)) \cdot (1- f(y))] \nonumber  \\ \label{eq:rho1} &\ge  \widetilde{J}(\mathbf{E}[g(x)], \mathbf{E}[g(y)]) + \widetilde{J}(\mathbf{E}[1-g(x)], \mathbf{E}[1-g(y)])  - 2\xi\end{align}
Next, we recall the following  fact :
\begin{fact}\label{fac:maj-stable}
For any $a \in (0,1)$ and $\rho \in (-1,0)$, 
$$
 J_{\rho} (a,a) +  J_{\rho} (1-a,1-a)  \ge 2 J_{\rho}(1/2, 1/2) =  1- \frac{ \arccos \rho}{\pi}
$$
\end{fact}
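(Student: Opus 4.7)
The plan is to split the stated fact into two pieces: the closed-form evaluation $2 J_\rho(1/2,1/2) = 1 - \arccos\rho/\pi$, which is just Sheppard's formula, and the inequality $J_\rho(a,a) + J_\rho(1-a,1-a) \ge 2 J_\rho(1/2,1/2)$, which I plan to deduce from convexity of the single-variable function $h(a) := J_\rho(a,a)$ on $(0,1)$.

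For the closed form, since $\Phi^{-1}(1/2) = 0$, $J_\rho(1/2,1/2) = \Pr[X \le 0,\ Y \le 0]$ for jointly standard Gaussian $(X,Y)$ with correlation $\rho$. Sheppard's formula~\cite{Sheppard:99} gives $\Pr[X \le 0,\ Y \le 0] = \tfrac{1}{4} + \tfrac{\arcsin\rho}{2\pi}$, and doubling together with $\arccos\rho = \pi/2 - \arcsin\rho$ yields $1 - \arccos\rho/\pi$. Granting convexity of $h$, the inequality is immediate from the midpoint inequality
\[
\tfrac{1}{2}\bigl(h(a) + h(1-a)\bigr) \ge h\!\left(\tfrac{a + (1-a)}{2}\right) = h(1/2).
\]

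Thus the substantive step is to verify $h''(a) \ge 0$ on $(0,1)$ for $\rho \in (-1,0)$. By the chain rule,
\[
h''(a) = \frac{\partial^2 J_\rho}{\partial x^2}(a,a) + 2\,\frac{\partial^2 J_\rho}{\partial x \partial y}(a,a) + \frac{\partial^2 J_\rho}{\partial y^2}(a,a),
\]
and the key algebraic trick I plan to use is to add and subtract $2\rho\,\partial^2_{xy}J_\rho(a,a)$ to rewrite this as
\[
h''(a) = (1,1)\, M_{\rho\rho}(a,a)\, (1,1)^T \; + \; 2(1-\rho)\,\frac{\partial^2 J_\rho}{\partial x \partial y}(a,a).
\]
The first summand is nonnegative by Claim~\ref{clm:negative-semidefinite} applied with $\sigma = \rho$ (the hypothesis $\rho \le \sigma \le 0$ holds since $\rho < 0$), so $M_{\rho\rho}(a,a)$ is positive semidefinite. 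The second summand is strictly positive: $1-\rho > 0$ because $\rho < 0$, while a direct change of variables gives $\frac{\partial^2 J_\rho}{\partial x \partial y}(a,a) = f_\rho(\Phi^{-1}(a),\Phi^{-1}(a))/\phi(\Phi^{-1}(a))^2 > 0$, where $f_\rho$ is the bivariate standard Gaussian density with correlation $\rho$ (finite for $|\rho|<1$).

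There is no real obstacle: the only idea is the algebraic decomposition above, which is chosen so that the residual quadratic form is exactly the one controlled by Claim~\ref{clm:negative-semidefinite}, while the leftover single term $2(1-\rho)\partial^2_{xy}J_\rho$ is positive precisely because $\rho<0$. This is also why the sign restriction $\rho \in (-1,0)$ is essential here (and elsewhere in Theorem~\ref{thm:SoS-maj}); for $\rho > 0$ one would instead want $\mathbf{E}[f]=1/2$ to carry through the analogous argument.
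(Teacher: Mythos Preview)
Your argument is correct. The paper itself does not supply a proof of this fact; it is simply ``recalled'' and then used, so there is nothing in the paper to compare your approach against. Your route via convexity of $h(a)=J_\rho(a,a)$, using the decomposition $h''(a) = (1,1)\,M_{\rho\rho}(a,a)\,(1,1)^T + 2(1-\rho)\,\partial^2_{xy}J_\rho(a,a)$ together with Claim~\ref{clm:negative-semidefinite} and the explicit positivity of $\partial^2_{xy}J_\rho$, is clean and fully self-contained within the paper's framework.

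One minor remark: your decomposition makes the sign condition $\rho<0$ look essential, but in fact the inequality holds for all $\rho\in(-1,1)$. Plugging the explicit second-derivative formulas from the proof of Claim~\ref{clm:negative-semidefinite} at the diagonal point $(a,a)$ (with $t=\Phi^{-1}(a)$) gives
\[
h''(a)=2\bigl(J_{xx}+J_{xy}\bigr)(a,a)=\frac{2(1-\rho)}{\sqrt{1-\rho^2}}\cdot\frac{\phi\!\bigl(t\sqrt{(1-\rho)/(1+\rho)}\bigr)}{\phi(t)}>0,
\]
which is positive whenever $|\rho|<1$. So the convexity argument actually proves a slightly stronger statement than the one the paper needs; your version with the $M_{\rho\rho}$ decomposition is a correct but less sharp packaging of the same computation.
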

Combining this fact with Claim~\ref{clm:bernstein}, we have that for every $x \in [\epsilon, 1-\epsilon]$, 
$$
\widetilde{J}(x,x) + \widetilde{J}(1-x, 1-x) \ge 1- \frac{ \arccos \rho'}{\pi} - 2\epsilon
$$
By using Corollary~\ref{corr:Putinar}, we have that there exists $d_{\delta} = d_{\delta}(\epsilon, \rho')$ such that
\begin{equation}\label{eq:rho}
A_p \vdash_{d_{\delta}} \widetilde{J}(\mathbf{E}[g(x)], \mathbf{E}[g(y)]) + \widetilde{J}(\mathbf{E}[1-g(x)], \mathbf{E}[1-g(y)])  \ge  1- \frac{ \arccos \rho'}{\pi} - 4\epsilon
\end{equation}
Combining (\ref{eq:rho1}) and (\ref{eq:rho}), we get 
\begin{align}\label{eq:final-manipulate}
A_p \vdash_{\max\{d_{\gamma},d_{\alpha} , d_{\delta}\}}\mathop{\mathbf{E}}_{\substack{x \in \{-1,1\}^n \\ y \sim_{\rho} x}}  [f(x) \cdot f(y) + (1-f(x)) \cdot (1- f(y))]  &\ge 1- \frac{ \arccos \rho'}{\pi} - 14 \epsilon   - 8 \cdot c_{\gamma} \sqrt{\eta} \nonumber \\ &- \frac{4 \cdot 9^{d_{\eta}} \cdot c_{\gamma}}{\sqrt{\eta}} \left( \sum_{i=1}^n (\Inf_i^{\le d_\eta}(f))^2 \right)
\end{align}
From here, getting to Theorem~\ref{thm:SoS-maj} is pretty easy. We proceed as follows : \begin{itemize} \item  For the given $\rho$ and $\kappa$, first we choose $\epsilon = \kappa/100$. This  implies that $14 \epsilon \le \kappa/4$. \item Next, observe that $c_{\gamma}(\rho', \epsilon)$ is a uniformly continuous function of $\rho'$ and $\epsilon$. Now, recall that $\rho' = \rho/(1-\eta)$. Hence,  there exists $\eta_0  = \eta_0 (\rho , \epsilon , \kappa)$ such that for all $\eta \le \eta_0$, $\sqrt{\eta} \cdot c_{\gamma}(\rho', \epsilon) \le \kappa/32$. 
\item Again, observe that for any $\rho \in (-1,0)$ and $\kappa>0$,  there exists $\eta_1 = \eta_1( \rho , \kappa)$ such that for all $\eta \le \eta_1$, $(\arccos \rho')/\pi \le (\arccos \rho')/\pi + \kappa/4$. 
\end{itemize}
Now, choose $\eta = \min \{\eta_0, \eta_1\}$. With $\eta$ and $\epsilon$ having been fixed in terms of $\kappa$ and $\rho$, we set $d_0(\kappa,\rho) = \max\{d_{\gamma},d_{\alpha} , d_{\delta}\}$, $c(\kappa, \rho) = \frac{4 \cdot 9^{d_{\eta}} \cdot c_{\gamma}}{\sqrt{\eta}} $  and $d_1(\kappa, \rho)  = d_{\eta}$ and hence get 

$$
A_p \vdash_{d_0(\kappa,\rho)}\mathop{\mathbf{E}}_{\substack{x \in \{-1,1\}^n \\ y \sim_{\rho} x}}  [f(x) \cdot f(y) + (1-f(x)) \cdot (1- f(y))]  \ge 1- \frac{ \arccos \rho}{\pi} - \kappa- c(\kappa, \rho) \cdot \left( \sum_{i=1}^n (\Inf_i^{\le d_1(\kappa, \rho)}(f))^2 \right) $$
This finishes the proof of Theorem~\ref{thm:SoS-maj}.

\section{Refuting the Khot-Vishnoi instances of \textsf{MAX-CUT}} 
In this section, we will prove the following theorem : 
%\begin{theorem}\label{thm:Maj-ref}
%Let $\rho \in (-1,0)$ and $\delta >0$. Then, there exists $d_1 = d_1(\delta, \rho)$  such that for the instances of \textsf{MAX-CUT} with noise parameter $\rho \in (-1,0)$ from \emph{\cite{KV05}}, the claim that their optimum is at least $\arccos \rho/\pi + \delta$ can be refuted by the degree $d_1$ SoS hierarchy. 
%\end{theorem}
\begin{theorem}\label{thm:Maj-ref}
Let $\rho \in (-1,0)$ and $G_{\rho} = (V_{\rho}, E_{\rho})$ be the Max-Cut instance constructed in \emph{\cite{KV05}} for the noise parameter $\rho$. Let $\{x_v\}_{v \in V}$ be a sequence of indeterminates 
and $A = \cup_{v \in V} \{0 \le x_v \le 1\}$. Then, for any $\delta>0$, there exists $d_1 = d_1(\delta, \rho)$ such that
\begin{align*}
 A \cup \left\{ \mathop{\mathbf{E}}_{(u,v) \in E} x_u \cdot (1-x_v) + x_v \cdot (1-x_u)  \ge \frac{1}{\pi} \arccos \rho + \delta \right\}
\vdash_{d_1} -1 \ge 0
\end{align*}
\end{theorem}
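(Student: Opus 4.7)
The plan is to compose the SoS version of Majority is Stablest from Theorem~\ref{thm:SoS-maj} with the SoS refutation of the Khot--Vishnoi unique games instance due to Barak et al.~\cite{BBHKSZ:12}, stitched together by the long-code gadget of Khot--Kindler--Mossel--O'Donnell~\cite{KKMO07}. Recall that the MAX-CUT instance $G_\rho$ of~\cite{KV05} is built by applying the KKMO gadget to the KV05 unique games instance $\mathcal{U}$: each vertex $u$ of $\mathcal{U}$ is replaced by a ``cloud'' $\{u\} \times \{-1,1\}^R$ (where $R$ is the UG alphabet size), and each UG edge $(u,u')$ with permutation $\pi_{u,u'}$ gives rise to MAX-CUT edges between $(u,x)$ and $(u',y)$ weighted so that $y$ is $\rho$-correlated with $x \circ \pi_{u,u'}$.

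First I would rename the MAX-CUT indeterminates $\{x_v\}$ as $\{f_u(x)\}$ and rewrite the hypothesized lower bound on the cut value as
\[
 \mathop{\mathbf{E}}_{(u,u') \in E(\mathcal{U})}\ \mathop{\mathbf{E}}_{\substack{x \in \{-1,1\}^R \\ y \sim_\rho (x \circ \pi_{u,u'})}}\bigl[f_u(x)(1-f_{u'}(y)) + (1-f_u(x))f_{u'}(y)\bigr] \ge \tfrac{\arccos\rho}{\pi} + \delta.
\]
Since the bracketed expression equals $1$ minus the MIST-style stability quantity, applying Theorem~\ref{thm:SoS-maj} with $\kappa = \delta/2$ \emph{inside} the outer expectation over UG edges and combining with the hypothesis would yield
\[
 A \cup \{\text{cut} \ge \tfrac{\arccos\rho}{\pi} + \delta\} \vdash_{d_0} \mathop{\mathbf{E}}_{(u,u') \in E(\mathcal{U})} \sum_{i \in [R]} (\Inf_i^{\le d_1^{*}}(f_u))^2 \ge \delta'
\]
for some explicit $\delta' = \delta'(\delta,\rho) > 0$ and some degree $d_1^{*} = d_1^{*}(\delta,\rho)$ provided by Theorem~\ref{thm:SoS-maj}.

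Next I would carry out the standard influence-decoding step inside the SoS system. The weight $w_{u,i} := (\Inf_i^{\le d_1^{*}}(f_u))^2$ is a sum of squares of linear forms in the Fourier coefficients $\widehat{f_u}(S)$, hence a sum of squares of degree-$2$ polynomials in the cut indicators. Assigning vertex $u$ the label distribution $i \mapsto w_{u,i}$, a straightforward averaging argument---implementable as a constant-degree SoS manipulation using Fact~\ref{fac:SoS-basic0} together with the total-influence bound $\sum_i \Inf_i^{\le d_1^{*}}(f_u) \le d_1^{*}$---gives an SoS derivation that this pseudo-labeling satisfies at least a $\delta''(\delta,\rho)$ fraction of the constraints of $\mathcal{U}$. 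Finally I would invoke the Barak et al.\ refutation~\cite{BBHKSZ:12}: for every $\eta > 0$ there is a constant-degree SoS refutation of the statement that $\mathcal{U}$ admits a labeling of value at least $\eta$. Choosing $\eta < \delta''(\delta,\rho)$ and chaining the refutations via the third item of Fact~\ref{fac:SoS-basic0} produces $-1 \ge 0$ in some constant degree $d_1 = d_1(\delta,\rho)$.

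The main technical content is already packed into Theorem~\ref{thm:SoS-maj}; as the authors indicate in the introduction, the remaining composition is essentially routine because the Fourier-decoded weights $w_{u,i}$ are already sums of squares of low-degree polynomials in the cut indicators, so feeding them into the SoS refutation of~\cite{BBHKSZ:12} requires no new positivity arguments. The only genuine bookkeeping is to match the exact polynomial form of the pseudo-labeling constraints produced by our decoding to the constraint system refuted in~\cite{BBHKSZ:12}, and to track how the decoding degree $d_1^{*}$ propagates through the combined proof while keeping the overall degree constant in $n$.
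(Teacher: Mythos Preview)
Your plan is the same as the paper's: compose Theorem~\ref{thm:SoS-maj} with the SoS refutation of the Khot--Vishnoi UG instance via the KKMO gadget, exactly as O'Donnell--Zhou did with their weaker $2/\pi$ bound. The paper's proof of Theorem~\ref{thm:Maj-ref} is itself only a sketch that says ``follow \cite{OZ:12} but plug in Theorem~\ref{thm:SoS-maj} instead of Theorem~\ref{thm:OD},'' so at the level of strategy you are aligned.

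There is, however, a genuine slip in your formulation of the reduction that would break the argument as written. The KV05/KKMO MAX-CUT graph is \emph{not} built by placing a $\rho$-noise gadget on each UG edge $(u,u')$; it uses the two-prover structure of Claim~\ref{clm:max-cut-value}: pick a UG vertex $u$, draw two independent neighbors $(v_1,\pi_1),(v_2,\pi_2)\sim\mathcal E_u$, and connect $(v_1,\pi_1(x))$ to $(v_2,\pi_2(y))$ with $y\sim_\rho x$. This is what makes the cut value equal to $1-\E_u[\mathrm{Stab}_\rho(g_u)]$ for the \emph{single} averaged function $g_u(x)=\E_{(v,\pi)\sim\mathcal E_u}[f_v(\pi(x))]$, so that the one-function statement of Theorem~\ref{thm:SoS-maj} applies directly. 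In your edge-based display you are looking at a two-function stability $\E[f_u(x)(1-f_{u'}(y\circ\pi))+\cdots]$, and Theorem~\ref{thm:SoS-maj} as stated gives you nothing there. Once you switch to $g_u$, the SoS-MIST error term is $\sum_i(\Inf_i^{\le d_1}(g_u))^2$, not $\sum_i(\Inf_i^{\le d_1}(f_u))^2$; the point of the composition is that $\widehat{g_u}(S)=\E_{(v,\pi)}[\widehat{f_v}(\pi(S))]$, so these squared low-degree influences of $g_u$ are already (up to a $d_1$ factor from the total-influence bound) the UG ``value'' quantity $\E_u\sum_i(\E_{(v,\pi)} x_{v,\pi(i)})^2$ appearing in Definition~\ref{def:refutation}. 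You then substitute these degree-$2$ polynomials for the $x_{v,i}$ and invoke Theorem~\ref{thm:UG-refutation} directly---no separate ``influence decoding'' or label-distribution construction is needed inside SoS. With that correction your sketch matches the paper's.
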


For this section, it is helpful  to begin by recalling the following theorem  of O'Donnell and Zhou~\cite{OZ:12}. 
\begin{theorem}\label{thm:OD}
\emph{\cite{OZ:12}} Let $\{f(x)\}_{x \in \{-1,1\}^n}$ be a sequence of indeterminates and let $A = \cup_{x \in \{-1,1\}^n} \{ 0 \le f(x) \le 1 \}$. Then, for any $\delta >0$ and $\rho \in (-1,0)$, 
$$
A \vdash_{O(1/\delta^2)}   \mathrm{Stab}_{\rho}(f) \ge  K(\rho)  -\delta - 2^{O(1/\delta^2)} \cdot \left( \sum_{i=1}^n \widehat{f}^4(i)\right)
$$
where $K(\rho) = \frac{1}{2} + \frac{\rho}{\pi} + \left(\frac{1}{2}  - \frac1\pi \right) \cdot \rho^3$
\end{theorem}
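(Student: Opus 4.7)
The plan is to lift the classical analytic proof of the $2/\pi$ theorem of \cite{KKMO07} into an SoS argument, using the SoS hypercontractivity of \cite{BBHKSZ:12} as the main algebraic tool. I will interpret $\mathrm{Stab}_\rho(f)$ as the correlation noise stability $\mathbf{E}[f(x)f(y)+(1-f(x))(1-f(y))]$, so that the inequality is meaningful for any bounded $f$ and reduces, for $\{0,1\}$-valued $f$, to an upper bound on the MAX-CUT value.

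The first step is to set up the SoS Fourier expansion and truncate to low degrees. By the degree-$2$ SoS Plancherel identity,
\[
\mathrm{Stab}_\rho(f) \;=\; 1 \;-\; 2\hat f(\emptyset)\bigl(1-\hat f(\emptyset)\bigr) \;+\; 2\sum_{S\ne\emptyset}\rho^{|S|}\hat f(S)^2.
\]
Pick a cutoff $d=\Theta(1/\delta^2)$. The constraints $0\le f\le 1$ in $A$ yield $A\vdash_2 f(x)(1-f(x))\ge 0$ for each $x$, hence $A\vdash_2 \sum_S \hat f(S)^2 = \mathbf{E}[f^2]\le 1$; together with $|\rho|<1$ this bounds the high-frequency tail $|\sum_{|S|>d}\rho^{|S|}\hat f(S)^2|$ by $|\rho|^d\le\delta/2$ at constant SoS degree. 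The problem reduces to a polynomial inequality in the finitely many coefficients $\{\hat f(S):|S|\le d\}$.

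The second and core step is to establish the low-frequency inequality
\[
\sum_{1\le|S|\le d}\rho^{|S|}\hat f(S)^2 \;\ge\; \tfrac12\bigl(K(\rho)-1\bigr) + \hat f(\emptyset)(1-\hat f(\emptyset)) \;-\; \tfrac{\delta}{4} \;-\; 2^{O(1/\delta^2)}\sum_i\hat f(i)^4
\]
by SoS. The cubic $K(\rho)=\tfrac12+\tfrac\rho\pi+(\tfrac12-\tfrac1\pi)\rho^3$ is the unique cubic underapproximation of the Majority stability $1-\arccos(\rho)/\pi$ matching at $\rho=-1,0$ with derivative $1/\pi$ at $\rho=0$. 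To recover it in SoS I invoke \cite{BBHKSZ:12}'s SoS hypercontractivity: for any polynomial $g$ of degree $\le d$, there is a degree-$O(d)$ SoS proof that $\mathbf{E}[g^4]\le 9^d(\mathbf{E}[g^2])^2$. Applied to the linear piece $L=\sum_i\hat f(i)x_i$, this produces the $\sum_i\hat f(i)^4$ penalty and lets me compare the contribution of $L$ to its ``Gaussian proxy,'' in the spirit of the CLT-based KKMO argument; hypercontractivity on the higher-degree pieces controls them by the same tool, at the price of the $2^{O(1/\delta^2)}$ prefactor inherited from the degree-$d$ hypercontractivity constant.

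The main obstacle is the polynomial matching that produces the precise cubic $K(\rho)$: it is not a direct consequence of Parseval or hypercontractivity alone but encodes the subtle analytic identity at the heart of the $2/\pi$ ratio in Goemans--Williamson. Certifying it as a constant-degree SoS identity on the compact semi-algebraic set cut out by $A$ can be done via Putinar's Positivstellensatz (Theorem~\ref{thm:Putinar}) or its matrix variant (Theorem~\ref{thm:matrix-Putinar}); getting the SoS degree all the way down to $O(1/\delta^2)$, rather than a worse dependence on $\delta$ coming from a naive Putinar application, requires the approximation-theoretic bookkeeping of \cite{OZ:12}, and is what forces the final error term to be $\delta + 2^{O(1/\delta^2)}\sum_i\hat f(i)^4$ rather than a cleaner bound.
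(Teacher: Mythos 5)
First, be aware that the paper does not prove Theorem~\ref{thm:OD} at all: it is quoted from \cite{OZ:12} as background for Section~7, so there is no in-paper argument to compare against and your sketch has to stand on its own. Its outer structure is the right one: expand $\mathrm{Stab}_\rho(f)$ in Fourier coefficients, note that for $\rho\in(-1,0)$ every term with $|S|\ge 2$ satisfies $(\rho^{|S|}-\rho^3)\,\hat f(S)^2\ge 0$ as a degree-$2$ SoS inequality (so no $\delta$-dependent truncation is needed; and in any case $|\rho|^d\le\delta/2$ only requires $d=O(\log(1/\delta))$, not $\Theta(1/\delta^2)$), and thereby reduce everything to an upper bound on the level-one weight $\sum_i\hat f(i)^2$, which enters the stability with the negative coefficient $\rho-\rho^3$. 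That arithmetic checks out against $K(\rho)$.

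The genuine gap is that the step which actually produces the constant $2/\pi$ is not proved. The whole content of the theorem is the SoS level-one inequality: for $g=2f-1$ with $-1\le g\le 1$, a constant-degree SoS certificate that $\sum_i\hat g(i)^2\le \frac{2}{\pi}+\delta+2^{O(1/\delta^2)}\sum_i\hat g(i)^4$. Analytically this comes from $\sum_i\hat g(i)^2=\E[gL]\le\E[|L|]$ for $L=\sum_i\hat g(i)x_i$, followed by a Berry--Esseen-type bound $\E[|L|]\le\sqrt{2/\pi}\,\|L\|_2+\mathrm{err}$; making this SoS requires an explicit polynomial majorant of $|t|$ whose error is certified term by term, and that construction is exactly where the degree $O(1/\delta^2)$ and the constant $2^{O(1/\delta^2)}$ come from. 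Your sketch gestures at this (``compare $L$ to its Gaussian proxy'') and then defers the certification to ``the approximation-theoretic bookkeeping of \cite{OZ:12}'', which is circular for a theorem that \emph{is} \cite{OZ:12}'s. Nor can Putinar close the gap: the inequality in your Step 2 involves all $n$ (or all $n^{O(d)}$ truncated) Fourier coefficients, and Theorem~\ref{thm:Putinar} gives no degree bound --- in particular none independent of $n$. The present paper's own uses of Putinar (Corollary~\ref{corr:Putinar}, Corollary~\ref{corr:matrix-Putinar}) are deliberately confined to two-variable polynomials on $[\epsilon,1-\epsilon]^2$ precisely so the resulting degree depends only on $\epsilon$ and $\rho$; the obstruction is dimension, not merely ``a worse dependence on $\delta$''. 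Until the level-one inequality is certified at constant degree, the proof is not complete.
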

The main application of Theorem~\ref{thm:OD} in \cite{OZ:12} is the following : Khot and Vishnoi~\cite{KV05} construct instances of \textsf{MAX-CUT} (parameterized by noise parameter $\rho$) whose optimum is bounded by $(\arccos \rho)/\pi+ o(1)$ (and yet the basic SDP relaxation for \textsf{MAX-CUT} has value $(1-\rho)/2$.) O'Donnell and Zhou essentially use Theorem~\ref{thm:OD} as a black-box to give a constant degree SoS proof that these instances have optimum bounded by $1 - K(\rho)+ o(1)$. This improves significantly on the bound provided by the basic SDP. 

In this section, we will show how we can use the stronger version of Theorem~\ref{thm:OD}, namely Theorem~\ref{thm:SoS-maj} to do even better. In particular, we will use this theorem to give a constant degree SoS proof that these \textsf{MAX-CUT} instances have optimum bounded by $(\arccos \rho)/\pi+ o(1)$ (which is of course tight).  We will not give all the details of this proof as our proof will follow the (by now, standard) reduction from \cite{KKMO07} and its SoS variant from \cite{OZ:12}. 

%\cite{OZ:12} uses Theorem~\ref{thm:OD} essentially as a black box to give a constant degree SoS proof that the  \textsf{MAX-CUT} constructed in \cite{KV05} (parameterized by noise parameter $\rho$) have optimum bounded by $1 - K(\rho)+ o(1)$ i.e. Theorem~\ref{thm:Maj-ref}.  In this section, we will sketch a proof of how we can use the stronger version of Theorem~\ref{thm:OD}, namely Theorem~\ref{thm:SoS-maj} to give a constant degree SoS proof that the  \textsf{MAX-CUT} constructed in \cite{KV05}  have optimum bounded by $(\arccos \rho)/\pi+ o(1)$. This of course, is tight. 
%We will not give all the details of this proof as 
%the machinery will follow the  (by now, standard) reduction from \cite{KKMO07} and the SoS variant of it described in \cite{OZ:12}. 

 We begin by recalling the description of instances of \textsf{UNIQUE-GAMES} (UG). A UG instance is specified by a set of vertices $V$ and an alphabet $[k]$. Along with this, there is a probability distribution $\mathcal{E}$ on tuples of the form $(u,v ,\pi_{(u,v)})$ with $\pi_{(u,v)} : [k] \rightarrow [k]$ being a permutation. Further, the weighted graph defined by $\mathcal{E}$ is regular. Also, let $\mathcal{E}_u$ denote the marginal distribution on $(v, \pi)$ when the first vertex is conditioned to be $u$. 
  The objective is to get a mapping $L : V \rightarrow [k]$ so as to maximize the following quantity : 
$$
\Pr_{(u,v, \pi_{(u,v)}) \in \mathcal{E}} [\mathcal{L}(v) = \pi_{(u,v)}(\mathcal{L}(u))]
$$
We next consider the SoS formulation for the  UG instance described above. It is slightly different from the ``obvious" formulation and follows the formulation in \cite{OZ:12}. In particular, we define variables $x_{v,i}$ for every $v \in V$ and $i \in [k]$.  Now, consider the set of constraints defined by $$A_p = \bigcup_{v \in V, i \in [k]}  \left\{x_{v,i} \ge 0  \right\} \bigcup_{v \in V}   \left\{\sum_{i \in [k]} x_{v,i} \le 1 \right\}$$ 
It is easy to show that if the optimum solution to the Unique Games instance is bounded by $\beta$, then $$\mathop{\mathbf{E}}_{u \in V} [ \sum_{i=1}^k ( \mathop{\mathbf{E}}_{(v, \pi_{u,v}) \in \mathcal{E}_u} x_{v, \pi_{u,v}(i)})^2] \le 4\beta$$ for any set of indeterminates $\{ x_{v,i} \}$  which obeys the constraint set $A_p$.  We now make the following definition :
\begin{definition}\label{def:refutation}
Given a UG instance $(V, \mathcal{E})$ with alphabet size k,  there is a degree-$d$ SOS refutation for optimum  $\beta$ if
$$
A_p \bigcup \left\{ \mathop{\mathbf{E}}_{u \in V} \left[ \sum_{i=1}^k \left( \mathop{\mathbf{E}}_{(v, \pi_{u,v}) \in \mathcal{E}_u} x_{v, \pi_{u,v}(i)}\right)^2\right] \ge \beta \right\} \vdash_d -1 \ge 0
$$
\end{definition}
Before we go ahead, we recall that for any $\eta \in (0,1)$ and $N \in \mathbb{N}$ (which is a power of $2$),  \cite{KV05} construct UG instances over $2^N/N$ vertices, alphabet size $n$ such that optimal value of the instance is bounded by $N^{-\eta}$. \footnote{Of course, the interesting part is that \cite{KV05} shows that the standard SDP relaxation on this instance has value $1-\eta$} Modifying the result from \cite{BBHKSZ:12}, O'Donnell and Zhou~\cite{OZ:12} show the following : 
\begin{theorem}\label{thm:UG-refutation}
Let $\eta \in (0,1)$ and $N$ be a power of $2$ and let $(V, \mathcal{E})$ be the corresponding instances of UG constructed in \cite{KV05}. Then, there is a degree-$4$ SoS refutation for optimum $\beta = N^{-\Omega(\eta)}$. 
\end{theorem}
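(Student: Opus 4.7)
The plan is to compose Theorem~\ref{thm:SoS-maj} (the SoS Majority is Stablest bound of this paper) with Theorem~\ref{thm:UG-refutation} (the degree-$4$ SoS refutation of the Khot--Vishnoi Unique Games instance) via an SoS version of the KKMO long-code reduction, in the spirit of \cite{OZ:12} but with the sharp Majority is Stablest bound replacing their weak $2/\pi$-theorem. By the Khot--Vishnoi construction, $G_\rho$ is obtained from a Unique Games instance $(V,\mathcal{E})$ with alphabet $[n]$ by the usual noisy long-code gadget: an edge of $G_\rho$ is sampled by drawing $u\in V$, then $(v,\pi_{(u,v)})\sim\mathcal{E}_u$, then $\rho$-correlated $(x,y)\in\{-1,1\}^n\times\{-1,1\}^n$, and connecting $(u,x)$ with $(v,\pi_{(u,v)}\cdot y)$. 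Rename the Max-Cut indeterminates as $f_u(x):=x_{(u,x)}$; then $A=\cup_{u,x}\{0\le f_u(x)\le 1\}$ and the Max-Cut objective becomes
\[
 \mathcal{O}\;=\;\mathop{\mathbf{E}}_{u,(v,\pi)\sim\mathcal{E}_u}\mathop{\mathbf{E}}_{x,y\sim_\rho x}\bigl[f_u(x)(1-f_v(\pi y))+(1-f_u(x))f_v(\pi y)\bigr].
\]

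Next, apply Theorem~\ref{thm:SoS-maj} pointwise in $(u,v,\pi)$. The theorem is stated for a single function, but its proof goes through the two-function Lemma~\ref{lem:taylor} (with independent sequences $h_0, h_1$), so the two-function version
\[
 A\vdash_{d_0}\mathop{\mathbf{E}}_{x,y\sim_\rho x}[f_u(x)(1-f_v(\pi y))+(1-f_u(x))f_v(\pi y)]\;\le\;\tfrac{\arccos\rho}{\pi}+\tfrac{\delta}{3}+c\sum_i\bigl((\Inf_i^{\le d_1}f_u)^2+(\Inf_i^{\le d_1}f_v\circ\pi)^2\bigr)
\]
is obtained by rerunning the SoS Taylor expansion and tensorization arguments with $(f_u,f_v\circ\pi)$ in place of $(g,g)$. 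Averaging over $(u,v,\pi)$ and using the identity $\Inf_i^{\le d_1}(f_v\circ\pi)=\Inf_{\pi^{-1}(i)}^{\le d_1}(f_v)$, together with the regularity of $\mathcal{E}$, produces
\[
 A\vdash_{d_0}\;\mathcal{O}\;\le\;\tfrac{\arccos\rho}{\pi}+\tfrac{\delta}{3}+c'\cdot\mathop{\mathbf{E}}_{u\in V}\sum_{i\in[n]}(\Inf_i^{\le d_1}f_u)^2.
\]

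The final step is to dominate the influence-squared sum by the UG-objective of Definition~\ref{def:refutation}. Since $\Inf_i^{\le d_1}(f_u)=\sum_{S\ni i,|S|\le d_1}\widehat{f_u}(S)^2$ is a linear form in low-degree Fourier squares, the standard SoS Cauchy--Schwarz manipulation of \cite{OZ:12,BBHKSZ:12} (expressing $(\Inf_i^{\le d_1}f_u)^2$ as a sum of squares against the permutation-averaged labels $\mathop{\mathbf{E}}_{(v,\pi)\sim\mathcal{E}_u}\widehat{f_v}(\pi\cdot S)$ and applying the matrix Positivstellensatz of Theorem~\ref{thm:matrix-Putinar}) yields
\[
 A\vdash_{O(d_1)}\;\mathop{\mathbf{E}}_u\sum_i(\Inf_i^{\le d_1}f_u)^2\;\le\;C_{d_1}\cdot\mathop{\mathbf{E}}_u\sum_{i=1}^n\Bigl(\mathop{\mathbf{E}}_{(v,\pi)\sim\mathcal{E}_u}x_{v,\pi(i)}^{(\le d_1)}\Bigr)^2,
\]
where the right-hand side is, up to a constant, the UG objective evaluated on the low-degree part of the $\{f_u\}$'s. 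Choosing the free parameter $N$ in the KV UG construction large enough as a function of $\delta,\rho,d_1$, Theorem~\ref{thm:UG-refutation} gives a degree-$4$ SoS bound of $N^{-\Omega(1)}<\delta/(3c'C_{d_1})$ on that UG objective. Chaining the three inequalities and the hypothesis $\mathcal{O}\ge\tfrac{\arccos\rho}{\pi}+\delta$ yields $0\ge\delta/3$, i.e.\ $\vdash_{d_1}-1\ge 0$ for a constant $d_1=d_1(\delta,\rho)$.

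The main obstacle is the Cauchy--Schwarz/Positivstellensatz step in the final paragraph: while the standard (non-SoS) KKMO argument reads ``a large $\Inf_i^{\le d_1}f_u$ decodes to a labeling consistent across an UG edge,'' turning this into a constant-degree SoS inequality requires carefully exhibiting $(\Inf_i^{\le d_1}f_u)^2$ as a sum of squares plus a non-negative combination of the UG constraints. This is precisely the step where the matrix Positivstellensatz (Theorem~\ref{thm:matrix-Putinar}) is essential — a feature distinguishing this proof from \cite{BBHKSZ:12,OZ:12}, who did not need the full strength of completeness.
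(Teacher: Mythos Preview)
You have proved the wrong theorem. The statement you were asked to establish is Theorem~\ref{thm:UG-refutation}, the degree-$4$ SoS refutation of the Khot--Vishnoi \emph{Unique Games} instance. Your proposal instead sketches a proof of Theorem~\ref{thm:Maj-ref}, the SoS refutation of the KKMO \textsf{MAX-CUT} instance built on top of that UG instance. Indeed, in your very first sentence you \emph{invoke} Theorem~\ref{thm:UG-refutation} as a black box, which would be circular; and your conclusion ``$\vdash_{d_1}-1\ge 0$ for a constant $d_1=d_1(\delta,\rho)$'' is exactly the conclusion of Theorem~\ref{thm:Maj-ref}, not of Theorem~\ref{thm:UG-refutation} (which is a fixed degree-$4$ statement with no $\delta$ or $\rho$ in sight).

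In the paper, Theorem~\ref{thm:UG-refutation} is not proved at all: it is quoted as a result of O'Donnell--Zhou~\cite{OZ:12}, modifying Barak~\etal~\cite{BBHKSZ:12}. Its actual content is an SoS version of hypercontractivity/small-set expansion on the noisy hypercube, which certifies that the KV instance has small UG value; this has nothing to do with Majority is Stablest or with the KKMO reduction. Your write-up, read as a proof of Theorem~\ref{thm:Maj-ref}, is broadly in line with the paper's own argument for that theorem (compose the SoS Majority is Stablest bound with the cited UG refutation via the SoS-ified KKMO reduction of~\cite{OZ:12}), though your closing remark misplaces the use of the matrix Positivstellensatz: in this paper it is used inside the proof of the SoS Majority is Stablest theorem (Claim~\ref{clm:lower-bound-Lambda}), not in the influence-to-UG-value step, which is handled by the standard degree-$4$ manipulations already present in~\cite{OZ:12}.
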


We next describe the reduction from \cite{KKMO07} of UG to \textsf{MAX-CUT}.  The reduction is parameterized by a ``correlation" value $\rho \in (-1,0)$. Given the instance of UG described above,  the set of vertices in the corresponding \textsf{MAX-CUT} instance is given by $V'=V \times \{-1,1\}^k$. Further, the probability distribution $\mathcal{E}_{\rho,k}$ over the edges is given by the following sampling procedure : 
\begin{itemize}
\item Choose $u \sim V$ uniformly at random. 
\item Choose $(u,v_1 , \pi_{(u,v_1)}) $  and $(u,v_2 , \pi_{(u,v_2)}) $ independently from the distribution $E_u$ which is defined as the marginal of $E$ conditioned on the first vertex being $u$. 
\item Choose $x \in \{-1,1\}^k$ and $y \sim_{\rho} x$. 
\item  Output vertices $((v_1, \pi_{(u,v_1)}(x)),(v_2, \pi_{(u,v_2)}(y)))$
\end{itemize}
Now, for a function $g : \{-1,1\}^k \rightarrow [0,1]$, let us define $\mathrm{Stab}_{\rho} (g)$ as follows $$ \mathrm{Stab}_{\rho}(g) = \mathop{\mathbf{E}}_{x \in \{-1,1\}^k, y \sim_{\rho} x} [g(x) \cdot g(y) + (1- g(x)) \cdot (1-g(y))]$$
%Let $V'$ be the set of vertices and $\mathcal{E}'$ be the distribution on edges obtained by applying the reduction from \cite{KKMO07} to the UG instance. The, we have the following
We have the following simple claim : 
\begin{claim}\label{clm:max-cut-value}
\emph{\cite{KKMO07}} Let $G' = (V', E_{\rho,k})$ be an instance of \textsf{MAX-CUT} described above. Consider a partition of the graph $G'$ (into two sets) specified by a collection of functions $\{f_v : \{-1,1\}^k \rightarrow \{0,1\} \}$. Then, the value of cut defined by this partition is $1 - \mathop{\mathbf{E}}_{u \in V} [\mathrm{Stab}_{\rho}(g_u)]$ where
$$
g_u : \{-1,1\}^k \rightarrow [0,1] \ \textrm{ is defined as } \ g_u(x) = \mathop{\mathbf{E}}_{(v,\pi) \in  \mathcal{E}_u}[f_v(\pi (x))]
$$
\end{claim}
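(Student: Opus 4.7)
The plan is to compute the cut value directly by writing out the probability that a random edge has its endpoints on opposite sides, and then use the independence structure of the edge-sampling procedure to identify the resulting expression with $\mathbf{E}_u[\mathrm{Stab}_\rho(g_u)]$. The key observation is that conditional on the outer vertex $u$, the two UG-tuples $(v_1,\pi_{(u,v_1)})$ and $(v_2,\pi_{(u,v_2)})$ are drawn \emph{independently} from $\mathcal{E}_u$, which is precisely what lets the two-variable expectation factor into a product of two copies of $g_u$.

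First I would write the cut value as one minus the probability that a random edge has both endpoints assigned the same bit by the partition. Since $\{0,1\}$-valued indicators $f_v$ satisfy $\mathbf{1}[f_{v_1}(\pi_{(u,v_1)}(x)) = f_{v_2}(\pi_{(u,v_2)}(y))] = f_{v_1}(\pi_{(u,v_1)}(x)) f_{v_2}(\pi_{(u,v_2)}(y)) + (1-f_{v_1}(\pi_{(u,v_1)}(x)))(1-f_{v_2}(\pi_{(u,v_2)}(y)))$, the cut value equals
\[
 1 - \mathbf{E}\bigl[f_{v_1}(\pi_{(u,v_1)}(x))f_{v_2}(\pi_{(u,v_2)}(y)) + (1{-}f_{v_1}(\pi_{(u,v_1)}(x)))(1{-}f_{v_2}(\pi_{(u,v_2)}(y)))\bigr],
\]
where the outer expectation is over the full edge-sampling procedure (choose $u$; then $(v_1,\pi_{(u,v_1)}), (v_2,\pi_{(u,v_2)})$ independently from $\mathcal{E}_u$; then $x$ and $y \sim_\rho x$).

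Next I would condition on $u$, $x$, $y$ and take the inner expectation over the two independent UG-tuples. Since $(v_1,\pi_{(u,v_1)})$ depends only on $x$ (through $\pi_{(u,v_1)}(x)$) and $(v_2,\pi_{(u,v_2)})$ depends only on $y$, and the two tuples are independent given $u$, the inner expectation factors as
\[
 \mathbf{E}_{(v_1,\pi) \in \mathcal{E}_u}[f_{v_1}(\pi(x))] \cdot \mathbf{E}_{(v_2,\pi') \in \mathcal{E}_u}[f_{v_2}(\pi'(y))] = g_u(x)\,g_u(y),
\]
by definition of $g_u$, and similarly the $(1-f)(1-f)$ cross term becomes $(1-g_u(x))(1-g_u(y))$.

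Finally I would take the remaining expectation over $u$ and over $(x, y \sim_\rho x)$; the resulting inner expectation over $(x,y)$ is precisely $\mathrm{Stab}_\rho(g_u)$, so the cut value equals $1 - \mathbf{E}_u[\mathrm{Stab}_\rho(g_u)]$ as claimed. The argument is essentially bookkeeping, so there is no real obstacle — the only subtle point is remembering that \emph{two} independent UG-tuples are drawn from $\mathcal{E}_u$ (not one, as in the standard UG instance), which is what enables the factorization into a product of $g_u$-values and, after averaging over $(x,y)$, produces a noise-stability rather than a bilinear functional of two different functions.
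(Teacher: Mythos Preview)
Your argument is correct: the factorization via the independence of the two $\mathcal{E}_u$-draws is exactly the point, and the remaining averaging over $(x,y\sim_\rho x)$ and over $u$ gives the claimed identity. The paper does not actually prove this claim; it simply states it with a citation to \cite{KKMO07}, so there is nothing to compare against beyond noting that your write-up is the standard KKMO computation.
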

Consider the SoS relaxation of the \textsf{MAX-CUT} instance defined by $V'$ and $\mathcal{E}_{\rho, k}$. In particular,  we have an indeterminate $f_v(z)$ for every $v \in V$ and $z \in \{-1,1\}^k$. The constraint set $A_m$ is given by $A_m = \cup_{v \in V} \cup_{z \in \{-1,1\}^k} \{0 \le f_v(z) \le 1\}$. Then, O'Donnell and Zhou~\cite{OZ:12} show that if $(V,\mathcal{E})$ is a UG instance such that there is a degree $4$ refutation for the optimum $\beta$, then
\begin{equation}\label{eq:odz}
A_m \cup \{1 - \mathop{\mathbf{E}}_{u \in V} [\mathrm{Stab}_{\rho}(g_u)] \ge K(\rho)  -\delta - 2^{O(1/\delta^2)} \beta\} \vdash_{O(1/\delta^2) + 4} -1 \ge 0
\end{equation}
This of course means that the if the \cite{KKMO07} reduction is applied on the instances from Theorem~\ref{thm:UG-refutation}, 
$$
A_m \cup \{1 - \mathop{\mathbf{E}}_{u \in V} [\mathrm{Stab}_{\rho}(g_u)] \ge K(\rho)  -\delta - 2^{O(1/\delta^2)} \cdot N^{-\Omega(\eta)} \} \vdash_{O(1/\delta^2) + 4} -1 \ge 0
$$

Exactly following the same steps as \cite{OZ:12}, but using Theorem~\ref{thm:SoS-maj} instead of Theorem~\ref{thm:OD}, we show that for any $\delta>0$, 
\begin{equation}
A_m \cup \{1 - \mathop{\mathbf{E}}_{u \in V} [\mathrm{Stab}_{\rho}(g_u)] \ge (\arccos \rho)/\pi  -\delta - d_2( \delta, \rho) c(\delta, \rho) \cdot \beta\} \vdash_{d_1(\delta, \rho) + 4} -1 \ge 0
\end{equation}
We do not repeat the steps here and leave it to the reader to fill the details. Using $\beta = N^{-\Omega(\eta)}$, we get Theorem~\ref{thm:Maj-ref}. 

\section*{Acknowledgements}
We thank Ryan O'Donnell and Yuan Zhou for sharing the manuscript~\cite{OZ:12}. 
\bibliography{allrefs,all,mossel}

\appendix

\section{Facts regarding $J_\rho$}

Here we collect various facts about the function
\[
 J_\rho(x, y) = \Pr[X \le \Phi^{-1}(x), Y \le \Phi^{-1}(y)],
\]
where $(X, Y) \sim \mathcal{N}(0, (\begin{smallmatrix} 1 & \rho \\ \rho & 1\end{smallmatrix}))$.
These calculations all follow from elementary calculus.

\negsemi*
\begin{proof}

Towards proving this, note that we can define $Y = \rho \cdot X + \sqrt{1-\rho^2} \cdot Z$ where $Z \sim \mathcal{N}(0,1)$ is an independent normal.  Also, let us define $\Phi^{-1}(x) = s$ and $\Phi^{-1}(y) =t$. For $s, t \in \mathbb{R}$, define $K_{\rho}(s,t)$ as
$$
K_{\rho}(s,t) = \Pr_{X,Y} [X \le s, Y \le t] = \Pr_{X,Z} [ X \le s, Z \le (t - \rho \cdot X)/\sqrt{1-\rho^2}]
$$
Note that for the aforementioned relations between $x$, $y$, $s$ and $t$, $K_{\rho}(s,t) = J_{\rho}(x,y)$.  Note that 
\begin{equation}\label{eq:K}
K_{\rho}(s,t) = \int_{s'=-\infty}^{s} \phi(s') \int_{t'=-\infty}^{(t - \rho \cdot s')/\sqrt{1-\rho^2}} \phi(t') ds' dt'
\end{equation}
This implies that 
$$
\frac{\partial K_{\rho}(s,t)}{\partial s} =  \phi(s) \int_{t'=-\infty}^{(t - \rho \cdot s)/\sqrt{1-\rho^2}} \phi(t')  dt'
$$
By chain rule, we get that 
$$
\frac{\partial J_{\rho}(x,y)}{\partial x} = \frac{\partial K_{\rho}(s,t)}{\partial s} \cdot  \frac{\partial s}{\partial x}
$$
By elementary calculus, it follows that 
$$
\frac{d \Phi^{-1}(x)}{dx} = \frac{1}{\phi (\Phi^{-1}(x))} \quad \Rightarrow \quad  \frac{\partial s}{\partial x} = \frac{1}{\phi(\Phi^{-1}(x))} =\frac{1}{\phi(s)}
$$
Thus, 
$$
\frac{\partial J_{\rho}(x,y)}{\partial x} = \int_{t'=-\infty}^{(t - \rho \cdot s)/\sqrt{1-\rho^2}} \phi(t')  dt'
$$
Thus, we next get that 
$$
\frac{\partial^2 J_{\rho}(x,y)}{\partial x^2}  =\frac{\partial^2 J_{\rho}(x,y)}{\partial x \partial s}  \cdot  \frac{\partial s}{\partial x} = \phi \left(\frac{t-\rho \cdot s}{\sqrt{1-\rho^2}} \right) \cdot \frac{-\rho}{\sqrt{1-\rho^2}} \cdot \frac{1}{\phi(s)} =\phi \left(\frac{\Phi^{-1}(y)-\rho \cdot \Phi^{-1}(x)}{\sqrt{1-\rho^2}} \right) \cdot \frac{-\rho}{\sqrt{1-\rho^2}} \cdot \frac{1}{\phi(s)}  $$
$$
\frac{\partial^2 J_{\rho}(x,y)}{\partial x \partial y} =\frac{\partial^2 J_{\rho}(x,y)}{\partial x \partial t} \cdot  \frac{\partial t}{\partial y} = \phi \left(\frac{\Phi^{-1}(y)-\rho \cdot \Phi^{-1}(x)}{\sqrt{1-\rho^2}} \right) \cdot \frac{1}{\sqrt{1-\rho^2}} \cdot \frac{1}{\phi(t)}
$$
Because we know that $(X,Y) \sim (Y,X)$, by symmetry, we can conclude that 
$$
\frac{\partial^2 J_{\rho}(x,y)}{\partial y^2}   =\phi \left(\frac{\Phi^{-1}(x)-\rho \cdot \Phi^{-1}(y)}{\sqrt{1-\rho^2}} \right) \cdot \frac{-\rho}{\sqrt{1-\rho^2}} \cdot \frac{1}{\phi(t)}  $$
and likewise, 
$$
\frac{\partial^2 J_{\rho}(x,y)}{\partial y \partial x} = \phi \left(\frac{\Phi^{-1}(x)-\rho \cdot \Phi^{-1}(y)}{\sqrt{1-\rho^2}} \right) \cdot \frac{1}{\sqrt{1-\rho^2}} \cdot \frac{1}{\phi(s)}$$
It is obvious now that 
$$
\pdiffII {J_\rho(x,y)}xx \cdot \pdiffII {J_\rho(x,y)}yy
- \rho^2 \left(\pdiffII {J_\rho(x,y)}xy\right)^2 = 0.
$$
Now, suppose that $|\sigma| \le |\rho|$. Then
$$
\det(M_{\rho \sigma}(x, y))
=
\pdiffII {J_\rho(x,y)}xx \cdot \pdiffII {J_\rho(x,y)}yy
- \sigma^2 \left(\pdiffII {J_\rho(x,y)}xy\right)^2 \ge 0.
$$
If $\rho \ge 0$ then the diagonal of $M_{\rho \sigma}(x,y)$ is non-positive, and it follows that
$M_{\rho \sigma}(x, y)$ is negative semidefinite. If $\rho \le 0$ then
the diagonal is non-negative and so $M_{\rho \sigma}(x,y)$ is positive semidefinite.
\end{proof}

\thirddiff*
\begin{proof}
As before, we set $\Phi^{-1}(x) =s $ and $\Phi^{-1}(y) =t$. From the proof of Claim~\ref{clm:negative-semidefinite}, we see that 
$$
\frac{\partial^2 J_{\rho}(x,y)}{\partial x^2}  =\phi \left(\frac{\Phi^{-1}(y)-\rho \cdot \Phi^{-1}(x)}{\sqrt{1-\rho^2}} \right) \cdot \frac{-\rho}{\sqrt{1-\rho^2}} \cdot \frac{1}{\phi(s)}  
$$
To compute the  third derivatives of $J$, recall that
  $\frac{\partial s}{\partial x} = \frac{1}{\phi(s)}$ and $\frac{\partial t}{\partial y}= \frac{1}{\phi(t)}$, we have
 \begin{eqnarray*}
   \frac{\partial^3 J_{\rho}(x,y)}{\partial x^3} &=& \frac{\rho}{(1-\rho^2)^{3/2}}
   \frac{\rho t + (2\rho^2 - 1) s}{\phi(s)}
   \exp\Big(-\frac{t^2 - 2\rho st + (2\rho^2 - 1) s^2}{2(1-\rho^2)}\Big) \notag \\
   &=& \frac{\sqrt{2\pi} \rho}{(1-\rho^2)^{3/2}}
   (\rho t + (2\rho^2 - 1) s)
   \exp\Big(-\frac{t^2 - 2\rho st + (3\rho^2 - 2) s^2}{2(1-\rho^2)}\Big).
   \label{eq:Jaaa}
 \end{eqnarray*}
 Now, $\Phi^{-1}(x) \sim \sqrt{2 \log x}$ as $x \to 0$; hence there is a constant $C$
  such that $\Phi^{-1}(x) \le C \sqrt{\log x}$ for all $x \le \frac{1}{2}$.
  Hence, $\exp(s^2) \le x^{-C}$ for all $x \le \frac{1}{2}$; by symmetry,
  $\exp(s^2) \le (x(1-x))^{-C}$ for all $x \in (0, 1)$.
  Therefore
  \begin{eqnarray*}
   \exp\Big(-\frac{t^2 - 2\rho st + (3\rho^2 - 2)s^2}{2(1-\rho^2)}\Big)
   &=&
    e^{-\frac{t^2}{2(1-\rho^2)}} e^{\frac{\rho st}{1 - \rho^2}} e^{\frac{(2 - 3 \rho^2) s^2}{2(1-\rho^2}} \notag\\
   &\le&
    e^{-\frac{t^2}{2(1-\rho^2)}} e^{\frac{\rho (s^2 + t^2)}{2(1 - \rho^2)}} e^{\frac{(2 - 3 \rho^2) s^2}{2(1-\rho^2}} \notag\\
   &\le&
    \big(x(1-x)y(1-y)\big)^{-\frac{\rho}{2(1-\rho^2)}}
    \big(x(1-x)\big)^{-\frac{2 - 3 \rho^2}{2(1-\rho^2)}} \notag\\
   &\le& \big(x(1-x)y(1-y)\big)^{-C(\rho)}.
\label{eq:Jaaa-bound}
  \end{eqnarray*}
  Applying this to~\eqref{eq:Jaaa}, we see that
  $$\left|\frac{\partial^3 J_{\rho}(x,y)}{\partial x^3}\right| \le C(\rho) \big(x(1-x)y(1-y)\big)^{-C(\rho)}.$$
  The other third derivatives are similar:
  \[
  \frac{\partial^3 J_{\rho}(x,y)}{\partial x^2 \partial y}
   = \frac{\sqrt{2\pi} \rho}{(1-\rho^2)^{3/2}}
   (t - 2 \rho s)
   \exp\Big(-\frac{(2\rho^2 - 1) t^2 - 2\rho st + (2\rho^2 - 1) s^2}{2(1-\rho^2)}\Big).
  \]
  By the same steps that led to~\eqref{eq:Jaaa-bound}, we get
  $$\left|\frac{\partial^3 J_{\rho}(x,y)}{\partial x^2 \partial y}\right| \le C(\rho) \big(x(1-x)y(1-y)\big)^{-C(\rho)}$$ (for a slightly different $C(\rho)$).
  The bounds on $\partial^3 J/\partial y^2 \partial x$ and $\partial^3 J/\partial x^3$ then follow because $J$ is symmetric in $x$ and $y$.
 \end{proof}

 \begin{claim}\label{clm:diff-J-rho}
 For any $x, y \in (0, 1)$,
  \[
   \left|\pdiff{J_\rho(x, y)}{\rho}\right| \le (1-\rho^2)^{-3/2}.
  \]
 \end{claim}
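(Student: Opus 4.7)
The plan is to reduce the derivative $\pdiff{J_\rho(x,y)}{\rho}$ to the bivariate normal density evaluated at $(\Phi^{-1}(x), \Phi^{-1}(y))$, and then bound that density by inspection.

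First I would write $J_\rho$ as an explicit integral. Let $\phi_\rho(s,t)$ denote the joint density of $(X,Y)\sim\mathcal{N}(0,\bigl(\begin{smallmatrix}1&\rho\\\rho&1\end{smallmatrix}\bigr))$, namely
\[
\phi_\rho(s,t)=\frac{1}{2\pi\sqrt{1-\rho^2}}\exp\!\left(-\frac{s^2-2\rho st+t^2}{2(1-\rho^2)}\right),
\]
so that $J_\rho(x,y)=\int_{-\infty}^{\Phi^{-1}(x)}\!\int_{-\infty}^{\Phi^{-1}(y)}\phi_\rho(s,t)\,ds\,dt$. Since $\Phi^{-1}(x)$ and $\Phi^{-1}(y)$ do not depend on $\rho$, differentiation in $\rho$ passes inside the integral.

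The key step is the identity $\pdiff{\phi_\rho}{\rho}=\pdiffII{\phi_\rho}{s}{t}$, which can be verified by direct computation on the explicit formula above (the polynomial prefactor and the $\rho$-dependence of both the normalization and the exponent combine so that the two sides agree). Granting this, applying the fundamental theorem of calculus twice yields
\[
\pdiff{J_\rho(x,y)}{\rho}=\int_{-\infty}^{\Phi^{-1}(x)}\!\int_{-\infty}^{\Phi^{-1}(y)}\pdiffII{\phi_\rho(s,t)}{s}{t}\,ds\,dt=\phi_\rho\!\bigl(\Phi^{-1}(x),\Phi^{-1}(y)\bigr),
\]
using that $\phi_\rho$ vanishes at $-\infty$ in either variable.

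Finally, observe the algebraic identity $s^2-2\rho st+t^2=(s-\rho t)^2+(1-\rho^2)t^2\ge 0$ for $|\rho|<1$, so the exponential factor in $\phi_\rho(s,t)$ is at most $1$. Hence
\[
\left|\pdiff{J_\rho(x,y)}{\rho}\right|=\phi_\rho\!\bigl(\Phi^{-1}(x),\Phi^{-1}(y)\bigr)\le\frac{1}{2\pi\sqrt{1-\rho^2}}\le (1-\rho^2)^{-1/2}\le(1-\rho^2)^{-3/2}.
\]
The only nontrivial step is the Mehler-type identity $\pdiff{\phi_\rho}{\rho}=\pdiffII{\phi_\rho}{s}{t}$; once that is in hand, the rest is routine. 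The bound in the claim is in fact rather loose (the sharper constant $(2\pi)^{-1}(1-\rho^2)^{-1/2}$ falls out), which is why this claim is sufficient for the Majority is Stablest application in the body of the paper.
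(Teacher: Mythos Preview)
Your proof is correct and takes a different route from the paper. The paper works with the conditional representation $K_\rho(s,t)=\int_{-\infty}^{s}\phi(s')\,\Phi\!\bigl((t-\rho s')/\sqrt{1-\rho^2}\bigr)\,ds'$ and differentiates the upper limit of the inner integral in $\rho$, then bounds the resulting one-dimensional integral crudely using $\phi\le 1$ and $\int\phi=1$ to obtain $(1-\rho^2)^{-3/2}$. You instead invoke the Plackett-type identity $\partial_\rho\phi_\rho=\partial_s\partial_t\phi_\rho$ for the bivariate normal density and integrate it out, which gives the exact formula $\partial_\rho J_\rho(x,y)=\phi_\rho(\Phi^{-1}(x),\Phi^{-1}(y))$. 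Your approach is cleaner and, as you note, yields the sharper bound $(2\pi)^{-1}(1-\rho^2)^{-1/2}$; the trade-off is that it relies on the differential identity, which requires a short (but routine) verification, whereas the paper's argument is pure Leibniz-rule calculus with no auxiliary identity needed.
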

 \begin{proof}
  We begin from~\eqref{eq:K}, but this time we differentiate with respect to $\rho$:
  \[
   \pdiff{K_\rho(s,t)}{\rho} = -\frac{1}{(1 - \rho^2)^{3/2}}
   \int_{s'=-\infty}^s \phi(s') \phi\left(\frac{t - \rho s'}{\sqrt{1-\rho^2}}\right) ds'.
  \]
 Since $\phi \le 1$ and $\int_{s'} \phi(s') ds' = 1$, it follows that
  \[
   \left|\pdiff{K_\rho(s, t)}{\rho}\right| \le (1-\rho^2)^{-3/2}.
  \]
 Since $\pdiff{J_\rho(s,t)}{\rho} = \pdiff{K_\rho(\Phi^{-1}(x), \Phi^{-1}(y))}{\rho}$,
 the proof is complete.
 \end{proof}

\section{Approximation by polynomials}

\begin{claim}\label{clm:bernstein}
For any $\rho \in (-1, 1)$ and any $\delta > 0$, there is a polynomial
$\tilde J$ such that for all $0 \le i + j \le 2$,
\[
 \sup_{x,y \in [\epsilon,1-\epsilon]} \left|
 \frac{\partial^{i+j} J_\rho(x,y)}{\partial x^i \partial y^j}
 - \frac{\partial^{i+j} \tilde J_\rho(x,y)}{\partial x^i \partial y^j}
 \right| \le \delta.
\]
Moreover, if $\rho \in [-1+\epsilon, 1-\epsilon]$, then the
degree of $\tilde J$ and the maximal coefficient in $\tilde J$
can be bounded by constants depending only on $\epsilon$ and $\delta$.
\end{claim}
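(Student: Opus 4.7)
The plan is to obtain $\tilde J$ as a Jackson-type polynomial approximation to $J_\rho$, after first extending $J_\rho$ to a smooth function on all of $[0,1]^2$. First, I would use Claim~\ref{clm:third-derivative} to conclude that $J_\rho$ is $C^3$ on $[\epsilon/2, 1-\epsilon/2]^2$ with all partial derivatives of order $\le 3$ bounded by a constant $C(\rho,\epsilon)$ that depends continuously on $\rho \in (-1,1)$. Multiplying by a fixed smooth cutoff function and extending by zero outside $[\epsilon/2, 1-\epsilon/2]^2$, I obtain a function $\hat J : [0,1]^2 \to \R$ which agrees with $J_\rho$ on $[\epsilon, 1-\epsilon]^2$ and whose $C^3$ norm on $[0,1]^2$ is bounded by a constant depending only on $\rho$ and $\epsilon$ (and continuously on $\rho$).

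Next I would invoke the multivariate Jackson approximation theorem from Lorentz~\cite{Lorentz:86}: for any $C^3$ function on $[0,1]^2$ and any degree $d$, there is a polynomial $\tilde J$ of total degree at most $d$ with
\[
 \|\hat J - \tilde J\|_{C^2([0,1]^2)} \le \frac{C\,\|\hat J\|_{C^3([0,1]^2)}}{d},
\]
where the $C^2$ norm controls both the function and all partial derivatives up to order two. Choosing $d$ of size $C(\rho,\epsilon)/\delta$ yields the required simultaneous approximation of $J_\rho$ and its derivatives up to order $2$ by $\tilde J$ and its corresponding derivatives, uniformly on $[\epsilon, 1-\epsilon]^2$.

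It remains to bound the monomial coefficients of $\tilde J$ when $\rho \in [-1+\epsilon, 1-\epsilon]$. Since $C(\rho,\epsilon)$ depends continuously on $\rho$ on the compact interval $[-1+\epsilon, 1-\epsilon]$, the degree $d$ can be taken to depend only on $\epsilon$ and $\delta$. Moreover, $\tilde J$ is uniformly bounded on $[0,1]^2$ by $\|\hat J\|_\infty + \delta$, which is itself bounded in terms of $\epsilon$ alone. A standard Chebyshev/Markov-type estimate bounding the monomial coefficients of a bivariate polynomial of degree $d$ in terms of its sup-norm on $[0,1]^2$ (with an explicit constant of the form $c^d$) then gives the required coefficient bound, depending only on $\epsilon$ and $\delta$.

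The main obstacle is the first step: producing a smooth extension $\hat J$ whose $C^3$ norm is controlled uniformly in $\rho$. The bounds from Claim~\ref{clm:third-derivative} blow up both as one approaches the boundary of $(0,1)^2$ and as $|\rho| \to 1$, so the key point is to localize to $[\epsilon/2, 1-\epsilon/2]^2$ before extending, and to restrict to $\rho \in [-1+\epsilon, 1-\epsilon]$ before claiming $\epsilon,\delta$-dependence only. Once the extension is in place, Jackson's theorem and elementary coefficient bounds finish the argument without further work.
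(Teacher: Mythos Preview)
Your proposal is correct, but the paper takes a more explicit and self-contained route. Instead of extending $J_\rho$ off the square and invoking a multivariate Jackson theorem, the paper works directly with Bernstein polynomials: it quotes the univariate fact (Theorem~\ref{thm:bernstein}) that if $f\in C^m([0,1])$ has derivatives bounded by $M$ then $B_nf$ approximates $f$ and its first $m$ derivatives to within $C\sqrt{M/n}$, and then builds the bivariate approximant by applying $B_n$ in each variable in turn, setting $g_n(x,y)=(B_n f(\cdot,y))(x)$ and $h_n(x,y)=(B_n g_n(x,\cdot))(y)$. The key observation is that $\partial_y^j$ commutes with $B_n$ acting in the $x$-variable (and vice versa), so two applications of the univariate theorem give the required $C^2$ approximation. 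A change of coordinates reduces $[\epsilon,1-\epsilon]^2$ to $[0,1]^2$. The coefficient bound is then immediate from the explicit Bernstein form, since $|J_\rho|\le 1$ and the coefficients of $h_n$ are values $f(k/n,\ell/n)$ times binomial factors, all controlled by $n$ alone.

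In comparison, your approach is cleaner if one is willing to import a multivariate simultaneous-approximation theorem as a black box, and the cutoff extension plus Markov-type coefficient bound are routine. The paper's version, by contrast, avoids the extension step entirely and needs only the univariate Bernstein statement, at the cost of doing the two-variable bookkeeping by hand; it also gets the coefficient bound essentially for free. One caution on your side: make sure the precise Jackson statement you need (simultaneous $C^2$ approximation with rate $O(1/d)$ from a $C^3$ bound, in two variables) is actually available in the reference you cite, since Lorentz's treatment is largely univariate; the paper sidesteps this by reducing to the univariate case explicitly.
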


The proof of Claim~\ref{clm:bernstein} follows from standard
results on Bernstein polynomials. In particular, we make
use of the following theorem which may be found, for example, in~\cite{Lorentz:86}.

\begin{theorem}\label{thm:bernstein}
Suppose $f: [0, 1] \to \R$ has $m$ continuous derivatives
which are all bounded in absolute value by $M$.
For any $n \in \N$, let $B_{n}f$ be the polynomial
\[
 (B_n f)(x) = \sum_{k=1}^n f(k/n) \binom{n}{k} x^k (1-x)^{n-k}.
\]
Then for any $0 \le i \le m$,
\[
 \sup_{x \in [0, 1]}
 \left|\frac{d^i f(x)}{dx^i} - \frac{d^i (B_n f)(x)}{dx^i}\right| \le C \sqrt{M/n}.
\]
\end{theorem}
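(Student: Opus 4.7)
The plan is to exploit the probabilistic representation $(B_n f)(x) = \mathbb{E}[f(S_n/n)]$, where $S_n \sim \mathrm{Binomial}(n,x)$, and to bootstrap from the $i=0$ case to higher derivatives via the classical forward-difference formula for derivatives of Bernstein polynomials. For $i=0$, Taylor-expand $f(S_n/n)$ around $x$. Since $\mathbb{E}[S_n/n - x] = 0$ and $\mathrm{Var}(S_n/n) = x(1-x)/n \le 1/(4n)$, the linear term in the expansion vanishes upon taking expectation, and a one-step Taylor remainder combined with $|f'| \le M$ and Jensen gives
\[
|(B_n f)(x) - f(x)| \le M\, \mathbb{E}|S_n/n - x| \le M\sqrt{x(1-x)/n} \le \tfrac{1}{2}\sqrt{M/n}\cdot\sqrt{M},
\]
which is of the claimed form $C\sqrt{M/n}$ (absorbing constants into $C$).

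For general $i \le m$, I would use the standard identity
\[
 (B_n f)^{(i)}(x) = \frac{n!}{(n-i)!} \sum_{k=0}^{n-i} \Delta^i_{1/n} f(k/n)\, \binom{n-i}{k} x^k (1-x)^{n-i-k},
\]
where $\Delta_h$ is the forward-difference operator. The mean-value theorem for divided differences gives $\Delta^i_{1/n} f(k/n) = n^{-i} f^{(i)}(\xi_k)$ for some $\xi_k \in [k/n, (k+i)/n]$. This rewrites $(B_n f)^{(i)}(x)$ as $\frac{n!}{n^i(n-i)!}\, \mathbb{E}[f^{(i)}(\xi_{S_{n-i}})]$ with $S_{n-i} \sim \mathrm{Binomial}(n-i, x)$, i.e.\ essentially a Bernstein-type approximation of $f^{(i)}$ at slightly perturbed nodes.

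The error then decomposes into three pieces which I would bound separately: (a) the prefactor $\frac{n!}{n^i(n-i)!} = 1 - O(i^2/n)$ contributes $O(Mi^2/n)$; (b) the node perturbation $|f^{(i)}(\xi_k) - f^{(i)}(k/n)|$ is controlled by $M\cdot i/n$ using $|f^{(i+1)}| \le M$, giving $O(Mi/n)$; (c) the honest Bernstein error $|B_{n-i}(f^{(i)})(x) - f^{(i)}(x)|$ reduces to the base case applied to $f^{(i)}$ (which is $C^1$ with derivative bounded by $M$), yielding $O(\sqrt{M/(n-i)})$. Since $i \le m$ is fixed, all three pieces combine into $C(m)\sqrt{M/n}$.

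The main obstacle will be the top derivative $i = m$, where no bound on $f^{(m+1)}$ is available, so step (b) cannot use a Lipschitz estimate. The fix is to replace the pointwise bound in (b) by a modulus-of-continuity argument: $f^{(m)}$ is continuous on the compact interval $[0,1]$, hence uniformly continuous with some modulus $\omega_m$, and since the perturbed node satisfies $|\xi_k - k/n| \le m/n$ one gets $|f^{(m)}(\xi_k) - f^{(m)}(k/n)| \le \omega_m(m/n) \to 0$. Combined with the standard Bernstein error for $f^{(m)}$ (split by whether $|S_{n-m}/n - x|$ exceeds $\sqrt{1/n}$, using Chebyshev with $\mathrm{Var}(S_{n-m}/n) = O(1/n)$), this absorbs into a bound of the form $C_m\sqrt{M/n}$ once $n$ is large enough. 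A cleaner alternative, which I would probably adopt if the goal is just to match the stated theorem, is to invoke $(B_n f)^{(i)} = B_{n-i}(\text{divided difference of } f)$ and quote Lorentz's estimate directly, sidestepping the edge case.
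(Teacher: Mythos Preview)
The paper does not prove Theorem~\ref{thm:bernstein} at all; it is quoted as a standard result from Lorentz's monograph~\cite{Lorentz:86} and then used as a black box in the proof of Claim~\ref{clm:bernstein}. So there is no ``paper's own proof'' to compare against.

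As for your sketch: the route via the forward-difference identity $(B_n f)^{(i)}(x)=\frac{n!}{(n-i)!}\sum_k \Delta_{1/n}^i f(k/n)\binom{n-i}{k}x^k(1-x)^{n-i-k}$ together with the mean-value representation of the $i$th difference is exactly the classical argument, and your three-term error decomposition is the right one for $i<m$. Two small remarks. First, in your $i=0$ step you obtain $M/(2\sqrt n)$ and rewrite it as $\tfrac12\sqrt{M}\cdot\sqrt{M/n}$; that makes the constant depend on $M$, so either accept that $C$ in the statement is allowed to depend on $m$ and $M$ (which is how the paper uses it) or use a second-order remainder instead. Second, your own caveat about $i=m$ is genuine: with only continuity of $f^{(m)}$ the rate is governed by its modulus of continuity, not by $\sqrt{M/n}$, and the phrase ``absorbs into $C_m\sqrt{M/n}$ once $n$ is large enough'' does not yield a uniform constant. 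In the paper's application this edge case never arises, since $J_\rho$ has bounded third derivatives on $[\epsilon,1-\epsilon]^2$ and only approximation up to second derivatives is needed, so $i\le 2<3=m$ throughout.
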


Seeing as the first three derivatives of $J_\rho$ are bounded on $[\epsilon, 1-\epsilon]$,
Claim~\ref{clm:bernstein} is essentially just a 2-variable version of Theorem~\ref{thm:bernstein}.
Although such a result is almost certainly known (and for more than 2 variables), we were not
to find a reference in the literature, and so we include the proof here.

\begin{proof}[Proof of Claim~\ref{clm:bernstein}]
Suppose that $f: [0, 1]^2 \to \R$ has all partial derivatives up to third order bounded by $M$.
Define
\begin{align*}
 g_n(x, y)
 &= (B_n f(\cdot, y))(x) = \sum_{k=1}^n \binom{n}{k} f(k/n, y) x^k(1-x)^{n-k} \\
 h_n(x, y)
 &= (B_n g_n(x, \cdot))(y) = \sum_{k=1}^n \sum_{\ell=1}^n \binom{n}{k} \binom{n}{\ell} f(k/n, \ell/n)
  x^k(1-x)^{n-k} y^\ell (1-y)^{n-\ell}.
\end{align*}
Fix $0 \le i + j \le 2$ and note that
\begin{align}
\label{eq:g_n-diff}
 \frac{\partial^j g_n(\cdot, y)}{\partial y^j} &= B_n \frac{\partial^j f(\cdot, y)}{\partial y^j} \\
\label{eq:h_n-diff}
 \frac{\partial^i h_n(x, \cdot)}{\partial x^i} &= B_n \frac{\partial^i g_n(x, \cdot)}{\partial x^i}.
\end{align}
Now,
fix $y \in [0, 1]$ and apply Theorem~\ref{thm:bernstein} to~\eqref{eq:g_n-diff}:
for any $x \in [0, 1]$,
\[
 \left|\frac{\partial^{i+j} g_n(x, y)}{\partial x^i \partial y^j}
 - \frac{\partial^{i+j} f(x, y)}{\partial x^i \partial y^j}\right| \le C \sqrt{M/n}.
\]
On the other hand, fixing $x$ and applying Theorem~\ref{thm:bernstein} to~\eqref{eq:h_n-diff} yields
\[
 \left|\frac{\partial^{i+j} h_n(x, y)}{\partial x^i \partial y^j}
 - \frac{\partial^{i+j} g_n(x, y)}{\partial x^i \partial y^j}\right| \le C \sqrt{M/n}.
\]
Putting these together,
\[
 \left|\frac{\partial^{i+j} h_n(x, y)}{\partial x^i \partial y^j}
 - \frac{\partial^{i+j} f(x, y)}{\partial x^i \partial y^j}\right| \le 2C \sqrt{M/n}.
\]
Since $h_n$ is a polynomial, taking $n$ sufficiently large implies that there is a polynomial $\tilde f$ such that
$\tilde f$, and its partial derivatives of order at most 2, uniformly approximate
the corresponding derivatives of $f$.
Although we stated this for functions on $[0, 1]^2$, a change of coordinates shows that it holds equally well for
functions on $[\delta, 1-\delta]^2$ with three bounded derivatives. Since $J_\rho$ is such a function,
the first part of the claim follows.

For the second part of the claim, note that all of the error bounds hold uniformly in $\rho \in [-1+\epsilon, 1-\epsilon]$ since the third
derivatives of $J_\rho$ are uniformly bounded over $\rho \in [-1+\epsilon, 1-\epsilon]$. Moreover, since $\max_{x,y} |J_\rho(x, y)| \le 1$, the coefficients in $h_n$ can be bounded in terms of $n$, which is in
turn bounded in terms of $\epsilon$ and $\delta$.
\end{proof}

\section{Useful facts in SoS hierarchy}\label{app:SoS}
\begin{fact}\label{fac:deg-increase}
If $A \vdash_d p\ge  0$ and $A \vdash q \ge 0$, then $A \vdash_d p+q \ge 0$. 
\end{fact}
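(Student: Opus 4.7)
The plan is to reduce this statement to the first bullet of Fact~\ref{fac:SoS-basic0} by a simple ``padding'' argument, which is why the label is \emph{deg-increase}. I read the hypothesis ``$A \vdash q \ge 0$'' as asserting that $A$ has an SoS proof of $q \ge 0$ of some (finite, unspecified) degree $d'$; the content of the fact is that we may assume $d' \le d$ without loss of generality and conclude that the sum has a degree-$d$ proof. The first step I would carry out is to verify monotonicity in the degree: $\mathcal{C}_{d'}(A_e) \subseteq \mathcal{C}_d(A_e)$ and $\mathcal{C}_{d'}(A_g) \subseteq \mathcal{C}_d(A_g)$ whenever $d' \le d$. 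This is immediate from the definitions, since every monomial in $\mathcal{M}_{n,d'}[X]$ lies in $\mathcal{M}_{n,d}[X]$ and every product $\prod_i q_i^{a_i}$ with $\sum_i a_i \deg(q_i) \le d'$ satisfies the same bound with $d$ in place of $d'$. Hence a degree-$d'$ certificate $q = \sum_i \alpha_i \tilde p_i + \sum_j r_j \tilde q_j$ with $\tilde p_i \in \mathcal{C}_{d'}(A_e)$, $\tilde q_j \in \mathcal{C}_{d'}(A_g)$, $r_j \in \mathbb{M}[X]$, and $\deg(r_j) + \deg(\tilde q_j) \le d'$ is automatically a valid degree-$d$ certificate, so $A \vdash_d q \ge 0$.

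With both $p$ and $q$ admitting degree-$d$ SoS proofs from $A$, the conclusion follows by adding the certificates termwise. Concretely, if $p = \sum_i \alpha_i \tilde p_i + \sum_j r_j \tilde q_j$ and $q = \sum_i \beta_i \tilde p_i + \sum_j s_j \tilde q_j$ are degree-$d$ certificates (after collecting over a common index set, with coefficients set to zero where necessary), then $p + q = \sum_i (\alpha_i + \beta_i) \tilde p_i + \sum_j (r_j + s_j) \tilde q_j$. The two things to verify are that $r_j + s_j \in \mathbb{M}[X]$, which holds because concatenating two sum-of-squares representations yields another sum-of-squares representation, and that $\deg(r_j + s_j) + \deg(\tilde q_j) \le d$, which holds since each of $r_j$ and $s_j$ satisfies the bound individually. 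There is no genuine obstacle in this proof; its entire content is the observation that the closures $\mathcal{C}_d(\cdot)$ and the set $\mathbb{M}[X]$ behave well under padding and addition, so the formal write-up should take only a few lines.
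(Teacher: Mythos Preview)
The paper states this fact without proof, so there is nothing to compare against directly. Your argument is correct for the intended statement, and the monotonicity-then-add structure you outline is exactly how one proves it.

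One point of phrasing deserves correction. You write that ``we may assume $d' \le d$ without loss of generality,'' but this is not a wlog reduction --- it is the hypothesis. As written, the missing subscript in ``$A \vdash q \ge 0$'' is almost certainly a typo for $A \vdash_d q \ge 0$ (or at worst $A \vdash_{d'} q \ge 0$ with $d' \le d$); this reading is forced both by the label \texttt{deg-increase} and by the fact's only use in the paper, where many degree-$d_\gamma$ inequalities are summed and the degree is asserted to remain $d_\gamma$. If the second hypothesis genuinely allowed an arbitrary degree $d' > d$, the conclusion would be false: take $p = 0$ and any $q$ that has a degree-$d'$ SoS proof from $A$ but no degree-$d$ one. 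So your padding argument is not establishing a wlog; it is handling the (slightly more general) case $d' \le d$, after which adding the two degree-$d$ certificates is exactly the first bullet of Fact~\ref{fac:SoS-basic0} with $A' = A$. With that clarification, the write-up is complete.
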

\begin{fact}\label{fac:SoS0} If $A = \{-1\le y \le 1\}$,
\begin{itemize}
\item If $k$ is an even integer, $A \vdash_{k+1} 0 \le y^k \le 1$
\item If $k$ is an odd integer,  $A \vdash_{k}  -1 \le y^{k} \le 1$
\end{itemize}
%If $k$ is an even integer, then $-1 \le Y \le 1 \vdash_{k+1} 0 \le Y^k \le 1$. Likewise, if $k$ is an odd integer, then $-1 \le y \le 1 \vdash_{k} -1 \le y^{k} \le 1$.
\end{fact}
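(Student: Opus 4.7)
The plan is to give explicit algebraic certificates that exhibit each bound ($y^k\ge 0$, $y^k\le 1$, and in the odd case $y^k\ge -1$) as a sum of products of the hypothesis polynomials $1-y$ and $1+y$ with sum-of-squares multipliers, while keeping the total degree within the advertised bound.

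First I would dispose of the even case $k=2m$. The lower bound is free: $y^k=(y^m)^2$ is already a square, so no constraint needs to be used. For the upper bound, the relevant factorization is
\[
 1-y^k \;=\; (1-y^2)\sum_{i=0}^{m-1} y^{2i} \;=\; (1-y)(1+y)\sum_{i=0}^{m-1} (y^i)^2,
\]
which displays $1-y^k$ as the element $(1-y)(1+y)\in\mathcal{C}_2(A_g)$ multiplied by the sum-of-squares polynomial $\sum_{i=0}^{m-1}(y^i)^2$ of degree $k-2$. The total degree is $k$, comfortably within the advertised $k+1$.

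For the odd case $k=2m+1$ I would start from the usual factorizations $1-y^k=(1-y)\sum_{j=0}^{k-1} y^j$ and $1+y^k=(1+y)\sum_{j=0}^{k-1}(-1)^j y^j$. The key step is to pair consecutive terms in each geometric sum so as to extract the partner constraint, leaving a single monomial tail that happens to be a perfect square:
\begin{align*}
  \sum_{j=0}^{k-1} y^j &= (1+y)\sum_{i=0}^{m-1} y^{2i} + (y^m)^2,\\
  \sum_{j=0}^{k-1} (-1)^j y^j &= (1-y)\sum_{i=0}^{m-1} y^{2i} + (y^m)^2.
\end{align*}
Substituting back yields
\begin{align*}
  1-y^k &= (1-y^2)\sum_{i=0}^{m-1} y^{2i} + (1-y)(y^m)^2,\\
  1+y^k &= (1-y^2)\sum_{i=0}^{m-1} y^{2i} + (1+y)(y^m)^2.
\end{align*}
Each summand on the right is a product of an element of $\mathcal{C}_{\le 2}(A_g)$ with an explicit SoS multiplier; the first contributes degree $2+(k-3)=k-1$ and the second contributes degree $1+2m=k$, so both certificates fit in degree $k$.

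The only step that requires any thought is the regrouping of the geometric series in the odd case, and even that is routine once one notices that pairing consecutive terms produces a factor of $1\pm y$ and leaves the single square monomial $y^{2m}=(y^m)^2$ unpaired. After that the remainder of the argument is degree bookkeeping, which matches the statement exactly: degree $k\le k+1$ for even $k$ and degree $k$ for odd $k$. No appeal to Fact~\ref{fac:deg-increase} or to any of the other closure rules is needed, since the displays above serve directly as the SoS representations required by Definition~\ref{def:SoS}.
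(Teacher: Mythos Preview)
Your proof is correct. The explicit certificates you write down are valid SoS representations over the constraint set $\{1-y\ge 0,\,1+y\ge 0\}$, and the degree bookkeeping checks out; in fact for even $k$ your certificate $1-y^k=(1-y)(1+y)\sum_{i=0}^{m-1}(y^i)^2$ has total degree $k$, one better than the $k+1$ the statement asks for.

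The paper proceeds differently: it argues by induction on $k$ via the telescoping identity $1\mp y^k = y^2(1\mp y^{k-2}) + (1-y^2)$, together with the base-case certificate $1-y^2=\tfrac12(1+y)^2(1-y)+\tfrac12(1+y)(1-y)^2$. Your approach gives closed-form certificates directly, which is arguably cleaner and (as noted) slightly sharper in degree for even $k$; the paper's inductive step has the advantage that it never multiplies two distinct constraint polynomials together, using only $(1-y)$ and $(1+y)$ separately with SoS weights, so it would still go through under a more restrictive definition of the SoS hierarchy that disallows products like $(1-y)(1+y)$ in $\mathcal{C}_d(A_g)$.
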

\begin{proof}
Note that for $k=0,1$, the conclusion is trivially true. For $k=2$, note that trivially, $y^2 \ge 0$. So, 
%If $k$ is even, then by definition, $\vdash_{k} y^{k} \ge 0$. If $k$ is odd, then again, $y^{k} = y^{k-1} \cdot y $ and hence $\vdash_{k} y^{k} \ge 0$. 
we begin by observing that (from \cite{OZ:12}) 
$$
1- y^2 = \frac{1}{2} (1+y)^2 (1-y) + \frac{1}{2} (1+y) (1-y)^2 
$$
and hence $0 \le y \le 1 \vdash_{3} y^2 \le 1$. This finishes the case for $k=2$. 
For the remaining cases, we use induction. We first consider the case when $k >2$ is even. Then, trivially, we have $y^k \ge 0$. Also, observe that $
1- y^{k} = y^2 ( 1- y^{k-2}) + (1-y^2) 
$. Hence, by induction hypothesis, we have $A \vdash_{k+1}  y^k \le 1$. 

Next, consider the case when $k>2$ is odd. Again, as $1- y^{k} = y^2 ( 1- y^{k-2}) + (1-y^2)$, hence by induction hypothesis, we get $A \vdash_{k}   y^{k} \le 1$. Also, note that 
$1+y^k = y^2 ( 1+ y^{k-2}) + (1-y^2)$. Hence, again, by induction hypothesis, we get $A \vdash_{k}   -1 \le y^{k} $
%Observe that
%$
%1- Y^{k} = Y^2 ( 1- Y^{k-2}) + (1-y^2) 
%$. Now, we use induction hypothesis to get that
%$$
%-1 \le y \le 1 \vdash_{k+1} Y^k \le 1
%$$
\end{proof}
\begin{fact}\label{fac:SoS-10}
$-1\le y \le 1 \vdash_5 y^4 \le y^2$
\end{fact}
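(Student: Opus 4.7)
The plan is to mimic the degree-3 decomposition of $1-y^2$ used in Fact~\ref{fac:SoS0} and multiply it by $y^2$, keeping everything inside the SoS framework. Concretely, I would start from the identity
\[
 1 - y^2 = \tfrac{1}{2}(1+y)^2(1-y) + \tfrac{1}{2}(1+y)(1-y)^2
\]
and multiply both sides by $y^2$. Since $y^2 \cdot (1+y)^2 = (y+y^2)^2$ and $y^2 \cdot (1-y)^2 = (y-y^2)^2$, this gives
\[
 y^2 - y^4 \;=\; \tfrac{1}{2}(y+y^2)^2 (1-y) + \tfrac{1}{2}(y-y^2)^2 (1+y).
\]

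To convert this into a degree-5 SoS proof from $A = \{-1 \le y \le 1\}$, I would read off the certificate directly from the identity above. The multiplier $r_{(1-y)} := \tfrac{1}{2}(y+y^2)^2$ is a sum of squares of degree $4$, and multiplying by the constraint $(1-y) \in \mathcal{C}_5(A_g)$ of degree $1$ yields a term of total degree $5$. Similarly, $r_{(1+y)} := \tfrac{1}{2}(y-y^2)^2$ is a sum of squares of degree $4$, and multiplies the constraint $(1+y)$ of degree $1$, again giving total degree $5$. Thus, in the notation of Definition~\ref{def:SoS},
\[
 y^2 - y^4 \;=\; r_{(1-y)}(y)\cdot (1-y) + r_{(1+y)}(y)\cdot (1+y),
\]
with $r_{(1-y)}, r_{(1+y)} \in \mathbb{M}[y]$, so $A \vdash_5 y^2 - y^4 \ge 0$, as required.

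There is essentially no obstacle here; the whole content is the observation that the factor $y^2$ sitting outside $(1-y)$ and $(1+y)$ can be absorbed into the squared factors $(1+y)^2$ and $(1-y)^2$ without raising the SoS degree. The naive route of writing $y^4 \le y^2$ as $y^2(1-y^2) \ge 0$ and then invoking $A \vdash_3 1-y^2 \ge 0$ followed by multiplication by $y^2$ would, through Fact~\ref{fac:SoS-basic0}, give only a degree-$3+2=5$ bound in an aggregate sense but in fact works cleanly only when the multiplier itself is already square; fortunately $y^2 = y \cdot y$ is itself a square, so the argument above is just the explicit form of this composition and the degree count matches.
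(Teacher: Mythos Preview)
Your proof is correct and is essentially identical to the paper's: the paper writes the identity as $y^2-y^4=\tfrac12\,y^2(1+y)^2(1-y)+\tfrac12\,(1+y)(1-y)^2 y^2$, which is literally your decomposition after absorbing $y^2$ into the squared factors.
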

\begin{proof}
$$
y^2- y^4 = \frac{1}{2} y^2 (1+y)^2 (1-y) + \frac{1}{2} (1+y) (1-y)^2 y^2
$$
This finishes the proof.
\end{proof}
\begin{fact}\label{fac:SoS-1}
Let $a \le y \le b \vdash_d p(y) \ge 0$. Then, for $\lambda_1, \ldots, \lambda_k \in \mathbb{R}$ such that $\forall i \in [k]$, $\lambda_i \ge 0$ and $\sum_{i=1}^k \lambda_i=1$, we have that $\{ \cup_{i=1}^k 0 \le Z_i \le 1 \} \vdash_d p(\sum_{i=1}^k \lambda_i Z_i ) \ge 0$. 
\end{fact}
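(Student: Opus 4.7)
The plan is to unfold the degree-$d$ SoS certificate guaranteed by the hypothesis and substitute $y \mapsto \sum_i \lambda_i Z_i$, then use $\sum_i \lambda_i = 1$ to absorb the mixing into the certificate. Concretely, by hypothesis there exist $r_0, r_1, r_2, r_{12} \in \mathbb{M}[y]$ such that
$$
p(y) = r_0(y) + r_1(y)(y-a) + r_2(y)(b-y) + r_{12}(y)(y-a)(b-y),
$$
where each product on the right has total degree at most $d$ (the term with $r_{12}$ arises from the product $(y-a)(b-y) \in \mathcal{C}_d(A_g)$; if it is absent the analysis simplifies).

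The key identities I would use are the two one-line calculations
$$
y - a \;=\; \sum_i \lambda_i Z_i - a \sum_i \lambda_i \;=\; \sum_i \lambda_i (Z_i - a), \qquad b - y \;=\; \sum_i \lambda_i (b - Z_i),
$$
valid because $\sum_i \lambda_i = 1$. After substituting $y = \sum_i \lambda_i Z_i$ into the certificate, every occurrence of $r_j(y)$ becomes $r_j(\sum_i \lambda_i Z_i)$, which is still a sum of squares in $Z_1,\dots,Z_k$ (squares of polynomials remain squares after composition with an affine form). Since the substitution is affine, the degree in $Z$ of $r_j(\sum_i \lambda_i Z_i)$ equals the degree of $r_j$ in $y$, so all degree bounds are preserved.

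Next I would distribute: for instance,
$$
r_1\!\left(\textstyle\sum_j \lambda_j Z_j\right)(y - a) \;=\; \sum_i \bigl(\sqrt{\lambda_i}\bigr)^2 r_1\!\left(\textstyle\sum_j \lambda_j Z_j\right)(Z_i - a),
$$
and since $\bigl(\sqrt{\lambda_i}\bigr)^2 r_1(\cdot)$ is a sum of squares in $\mathbb{M}[Z]$ and $(Z_i - a) \in \mathcal{C}_d(\{0 \le Z_i \le 1\})$ (taking $a = 0$), each summand is a valid term in the SoS derivation. The analogous step handles the $(b-y)$ term. For the mixed $(y-a)(b-y)$ term one expands
$$
(y-a)(b-y) = \sum_{i,j} \lambda_i \lambda_j (Z_i - a)(b - Z_j),
$$
with each coefficient $\lambda_i \lambda_j \ge 0$ and each polynomial $(Z_i - a)(b - Z_j)$ lying in $\mathcal{C}_d$ of the new constraint set since it is a product of at most two constraint polynomials of degree one.

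The resulting expression is manifestly a degree-$d$ SoS derivation of $p(\sum_i \lambda_i Z_i) \ge 0$ from $\{0 \le Z_i \le 1 : i \in [k]\}$, which is the conclusion. I expect no real obstacle here beyond bookkeeping: the only subtle point is checking that the affine substitution does not inflate the degree in $Z$ beyond $d$, which is immediate because $\sum_i \lambda_i Z_i$ is linear in the $Z_i$'s. The argument is essentially the SoS analogue of the fact that sum-of-squares certificates pull back under affine change of variables.
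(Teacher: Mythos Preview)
Your proposal is correct and follows essentially the same approach as the paper: substitute $y \mapsto \sum_i \lambda_i Z_i$ and rewrite every occurrence of the constraint polynomials via the identities $y-a = \sum_i \lambda_i(Z_i - a)$ and $b-y = \sum_i \lambda_i(b - Z_i)$. The paper's proof is a three-sentence sketch of exactly this substitution, while you spell out the bookkeeping (degree preservation under the affine substitution, distribution of the nonnegative weights $\lambda_i$, handling of the cross-term) in more detail; note only that the general certificate may contain higher products $(y-a)^j(b-y)^k$, but your argument extends to those verbatim by multinomial expansion with nonnegative coefficients.
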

\begin{proof}
In the SoS proof of $p(Y) \ge 0$, whenever the term $(b-Y)$ appears, we simply substitute it by $\sum_{i=1}^k \lambda_i (b- Z_i)$. Likewise,  whenever the term $(Y-a)$ appears, we substitute it by $\sum_{i=1}^k \lambda_i (a-Z_i)$. It is easy to see that this substitution shows that $\{ \cup_{i=1}^k 0 \le Z_i \le 1 \} \vdash_d p(\sum_{i=1}^k \lambda_i Z_i ) \ge 0$. 
\end{proof}
\begin{fact}\label{fac:SoS-2}
For $k\ge 3$, $-1 \le y \le 1 \vdash_{2k+1} 0 \le y^{2k} \le y^4$
\end{fact}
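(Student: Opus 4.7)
\textbf{Proof proposal for Fact~\ref{fac:SoS-2}.} The plan is to split the two-sided statement and handle each inequality separately, relying on Fact~\ref{fac:SoS0} (already established earlier in the appendix) plus the multiplication rule from Fact~\ref{fac:SoS-basic0}. The nonnegativity half $y^{2k} \ge 0$ is immediate and requires no hypothesis: since $y^{2k} = (y^k)^2$, this is literally a single square, giving a degree-$2k$ SoS certificate $\vdash_{2k} y^{2k} \ge 0$, which is well within the $2k+1$ budget.

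For the upper bound $y^4 - y^{2k} \ge 0$, the key algebraic identity is the factorization
\[
 y^4 - y^{2k} = y^4 \cdot (1 - y^{2k-4}),
\]
valid whenever $k \ge 3$ so that $2k-4 \ge 2$ is a positive even integer. The first factor $y^4 = (y^2)^2$ is a square, hence admits a trivial SoS proof of nonnegativity at degree $4$ (with no use of the constraint $-1 \le y \le 1$). For the second factor, I would invoke the even-exponent half of Fact~\ref{fac:SoS0} with $m = 2k - 4$: it yields
\[
 \{-1 \le y \le 1\} \vdash_{2k-3} 1 - y^{2k-4} \ge 0.
\]

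To combine the two, I would apply the second bullet of Fact~\ref{fac:SoS-basic0} (products of SoS-provable nonnegative polynomials are SoS-provable at the sum of the degrees). This gives
\[
 \{-1 \le y \le 1\} \vdash_{4 + (2k-3)} y^4 \cdot (1 - y^{2k-4}) \ge 0,
\]
i.e.\ a certificate of $y^4 - y^{2k} \ge 0$ at degree $2k+1$, matching the claim exactly. Unfolding the degree arithmetic: in the certificate from Fact~\ref{fac:SoS0}, each term is of the form $\sigma \cdot (1-y)^{a}(1+y)^{b}$ with $\sigma$ a SoS polynomial and $a + b + \deg(\sigma) \le 2k-3$; multiplying throughout by $y^4$ keeps each $y^4 \sigma$ a SoS polynomial and raises the total degree by exactly $4$.

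There is no genuine obstacle here: the only thing to watch is that the base case of the induction in Fact~\ref{fac:SoS0} requires $2k-4 \ge 2$, which is the reason the statement assumes $k \ge 3$. (For $k=2$ the claim $y^4 \le y^4$ is trivial but the multiplicative argument above would have us bounding $y^0 = 1 \le 1$, which is fine but vacuous; for $k = 1$ it would fail.) Combining the two halves yields the two-sided conclusion $\vdash_{2k+1} 0 \le y^{2k} \le y^4$, completing the proof.
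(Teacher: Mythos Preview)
Your proof is correct and essentially the same as the paper's: both reduce to the degree-$3$ SoS identity for $1-y^2$ and multiply through by an even power of $y$, yielding the same underlying certificate at degree $2k+1$. The only cosmetic difference is that the paper telescopes the one-step inequalities $y^{2j} \le y^{2j-2}$ for $j=3,\dots,k$, whereas you factor out $y^4$ in one shot and cite Fact~\ref{fac:SoS0} for $1 - y^{2k-4} \ge 0$; unwinding Fact~\ref{fac:SoS0}'s induction recovers exactly the paper's telescoping sum.
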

\begin{proof}
$
1- y^2 = \frac{1}{2} (1+y)^2 (1-y) + \frac{1}{2} (1+y) (1-y)^2 
$
and hence $-1 \le y \le 1 \vdash_{3} y^2 \le 1$. As a consequence, for any $j\ge 1$. we can get that $-1 \le y \le 1 \vdash_{2j+1} y^{2j} \le y^{2j-2}$. Summing all the inequalities as $j$ variables from $j=3$ to $j=k$, we get the stated inequalities.
\end{proof}
\begin{fact}\label{fac:SoS1}
For integers $m,n \ge 2$, $\{-1 \le y \le 1 , -1 \le z \le 1 \}\vdash_{1+ \max\{2m,2n\}} -(y^4 + z^4) \le y^m z^n \le (y^4 + z^4)$. 
\end{fact}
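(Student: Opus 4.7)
The plan is to reduce the claim to two already-established single-variable SoS facts together with one universal algebraic identity. The guiding inequality is the AM-GM-type bound
\[
\pm 2\,y^m z^n \;\le\; y^{2m} + z^{2n},
\]
which is nothing but the expansion of the nonnegative squares $(y^m \mp z^n)^2$. Since a square has a trivial SoS certificate with no constraints and degree $2\max\{m,n\}$, this inequality is a degree-$2\max\{m,n\}$ SoS consequence of the empty constraint set. Once this reduction is done, the variables $y$ and $z$ have been decoupled, and each side is controlled by a pure power of a single variable, which is what the earlier univariate facts handle.

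Next I would upgrade $y^{2m}$ and $z^{2n}$ to $y^4$ and $z^4$ using Fact~\ref{fac:SoS-2}: for $m\ge 3$ we have the degree-$(2m+1)$ SoS proof $-1\le y\le 1 \vdash_{2m+1} y^{2m}\le y^4$, and symmetrically for $z$; for the boundary case $m=2$ (or $n=2$) the inequality $y^{2m}\le y^4$ is the identity $y^4\le y^4$ and needs no constraints. Adding these to the AM-GM bound using the summation rule from Fact~\ref{fac:SoS-basic0} (which gives degree $\max$ of the two summands) yields
\[
\{-1\le y\le 1,\ -1\le z\le 1\}\vdash_{\max\{2m+1,\,2n+1\}} \ \pm 2\,y^m z^n \;\le\; y^4+z^4,
\]
i.e.\ at degree exactly $1+\max\{2m,2n\}$ as required. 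Finally, halving gives $\pm y^m z^n \le \tfrac12(y^4+z^4)$, and since $\tfrac12(y^4+z^4)\le y^4+z^4$ because $\tfrac12 y^4$ and $\tfrac12 z^4$ are themselves squares of polynomials and therefore lie in $\mathbb{M}[y,z]$, the two-sided bound $-(y^4+z^4)\le y^m z^n\le y^4+z^4$ follows.

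There is not really a hard step here: the only bookkeeping point is making sure the degree comes out to $1+\max\{2m,2n\}$ rather than $\max\{2m,2n\}+2$ or similar. This is why I prefer to invoke Fact~\ref{fac:SoS-2} directly (which is already tight at degree $2k+1$) rather than, say, iterating Fact~\ref{fac:SoS-10} $y^{2k}\le y^{2k-2}$, which would potentially pile up constants multiplied with the constraint terms. The two boundary cases $m=2$ and $n=2$ should be mentioned explicitly, since Fact~\ref{fac:SoS-2} is stated only for $k\ge 3$; there the claim reduces to the trivial identities $y^4=y^4$ or $z^4=z^4$.
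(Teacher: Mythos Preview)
Your proposal is correct and follows essentially the same approach as the paper: both use the square expansion $(y^m\mp z^n)^2\ge 0$ to decouple into $y^{2m}+z^{2n}$ and then invoke Fact~\ref{fac:SoS-2} to pass from $y^{2m},z^{2n}$ down to $y^4,z^4$. Your version is in fact a bit more careful than the paper's, since you explicitly address the boundary cases $m=2$ or $n=2$ where Fact~\ref{fac:SoS-2} (stated only for $k\ge 3$) does not literally apply.
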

\begin{proof}
$\vdash_{\max\{2m,2n\}} y^{2m} + z^{2n}  \ge y^m z^n$. Also, using Fact~\ref{fac:SoS-2},  $-1 \le y \le 1 \vdash_{2m+1} y^4 \ge y^{2m}$. Similarly, we have $-1 \le z \le 1 \vdash_{2m+1} z^4 \ge z^{2m}$. Combining these, we have $\{-1 \le y \le 1 , -1 \le z \le 1 \}\vdash_{1+ \max\{2m,2n\}}  y^m z^n \le (y^4 + z^4)$. Replacing $y$ by $-y$ and $z$ by $-z$, we can similarly get, $\{-1 \le y \le 1 , -1 \le z \le 1 \}\vdash_{1+ \max\{2m,2n\}}  -y^m z^n \le (y^4 + z^4)$. This completes the proof.
\end{proof}
\begin{fact}\label{fac:SoS2}
For any odd integer $n \ge 3$, $\{-1 \le y \le 1 , -1 \le z \le 1 \}\vdash_{n+2} -(y^4 + z^4) \le yz^n \le (y^4 + z^4)$. 
\end{fact}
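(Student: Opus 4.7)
\emph{Plan.} Since $n$ is odd, $n-1$ is even, so $z^{n-1} = (z^{(n-1)/2})^2$ is a perfect square. This lets us write down the sum of squares $(y \pm z)^2 \cdot z^{n-1} \ge 0$ without invoking any constraint in $A = \{-1\le y\le 1,\, -1\le z\le 1\}$, and expanding gives
\[
  \pm 2\, yz^n \;\le\; y^2 z^{n-1} + z^{n+1}
\]
as a degree-$(n+1)$ SoS inequality. The remaining task is to control the right-hand side by $y^4 + z^4$ within the degree budget $n+2$.

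For the $z^{n+1}$ term: when $n=3$ it is already $z^4$; when $n\ge 5$, Fact~\ref{fac:SoS-2} applied with $k=(n+1)/2\ge 3$ yields $z^{n+1}\le z^4$ at degree $2k+1 = n+2$, and trivially $z^4\le y^4+z^4$. For the $y^2 z^{n-1}$ term: I would first telescope $z^{2j}(1-z^2)\ge 0$ (each inequality obtained by multiplying the degree-$3$ SoS proof of $1-z^2\ge 0$ from Fact~\ref{fac:SoS0} by the square $z^{2j}$) along $j=1,\ldots,(n-3)/2$ to obtain $z^2 - z^{n-1}\ge 0$ at degree $n$; multiplying this by the square $y^2$ gives $y^2 z^{n-1} \le y^2 z^2$ at degree $n+2$. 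Then $(y^2-z^2)^2\ge 0$ yields $y^2 z^2 \le \tfrac12(y^4+z^4)$ at degree $4$.

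Putting the pieces together, $y^2 z^{n-1} + z^{n+1} \le \tfrac12(y^4+z^4) + (y^4+z^4) \le 2(y^4+z^4)$, proved at degree $n+2$. Combined with the opening SoS inequality, this gives $\pm 2\, yz^n \le 2(y^4+z^4)$, which is exactly $-(y^4+z^4) \le yz^n \le y^4+z^4$ at degree $n+2$.

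The main point to watch is the degree accounting. The tempting alternative, applying Fact~\ref{fac:SoS1} after a bound like $2yz^n \le y^2 z^{n-2} + z^{n+2}$, inflates the $z$-exponent and overshoots $n+2$. The crucial observation is that because $n-1$ is even, $(y\pm z)^2 \cdot z^{n-1}$ is already a sum of squares of degree $n+1$ (no constraints needed), and the mixed term $y^2 z^{n-1}$ it produces has exactly the right structure to be reduced to $y^2 z^2$ and then to $\tfrac12(y^4+z^4)$ using only $(n-3)+2$ additional degrees, landing precisely at $n+2$.
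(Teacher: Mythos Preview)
Your proof is correct, and the degree accounting is exactly right: the heaviest pieces are $y^2(z^2 - z^{n-1})$ and $z^4 - z^{n+1}$, each landing at degree $n+2$, while $(y\pm z)^2 z^{n-1}$ is a genuine square of degree $n+1$ requiring no constraints. The final chain $\pm 2yz^n \le y^2 z^{n-1} + z^{n+1} \le \tfrac12(y^4+z^4) + (y^4+z^4) \le 2(y^4+z^4)$ closes out cleanly.

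Your route differs from the paper's. The paper bootstraps from the $n=3$ case, which it imports as a black box from~\cite{OZ:12} (``Fact~3.10''): $A \vdash_4 yz^3 \le y^4 + z^4$. For $n>3$ it multiplies this by the square $z^{n-3}$ to obtain $yz^n \le y^4 z^{n-3} + z^{n+1}$ at degree $n+1$, then bounds $z^{n-3}\le 1$ and $z^{n+1}\le z^4$ separately. Your argument is more self-contained: the key identity $(y\pm z)^2 z^{n-1}\ge 0$ replaces the cited base case entirely and in fact \emph{rederives} it when $n=3$ (your decomposition specializes to a pure degree-$4$ SoS certificate for $y^4+z^4-yz^3$, with no constraints used at all). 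What you gain is a uniform, citation-free proof; what the paper gains is brevity by outsourcing the $n=3$ step.
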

\begin{proof}
We use $A$ to denote $ \{-1 \le y \le 1 , -1 \le z \le 1 \}$. 
We first use Fact~3.10 from \cite{OZ:12} which states that $A \vdash_4 yz^3 \le y^4 + z^4$.   We can replace $y$ by $-y$ to get the other inequality. This already gives the proof for $n=3$. For $n>3$, we have that $A \vdash_{n+1} yz^n \le y^4 z^{n-3} + z^{n+1}$. Now, using Fact~\ref{fac:SoS0} (Item 1), we get $A \vdash_{n+2} z^{n+1} \le z^4$. And similarly, we get $A \vdash_{n-2} z^{n-3} \le 1$ and hence $A \vdash_{n+2} z^{n-3} y^4 \le y^4$. Combining these, we get that $A \vdash_{n+2} yz^n \le y^4 + z^4$. Replacing $y$ by $-y$, we get the other side. 
\end{proof}
\begin{fact}\label{fac:SoS3}
Let $A \vdash_{d_1} 0 \le x \le 1$ and $A \vdash_{d_2} -z \le y \le z$ where $z \in \mathbb{M}_{\le d}[X]$. Then, $A \vdash_{d_1 + \max\{d_2,d_3\}}  -z \le xy \le z$. 
\end{fact}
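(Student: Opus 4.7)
The plan is to reduce the claim to the basic SoS combinators in Fact~\ref{fac:SoS-basic0} via the polynomial identities
\[
 z - xy = (1-x)\cdot z + x\cdot(z-y), \qquad z + xy = (1-x)\cdot z + x\cdot(z+y).
\]
Each summand on the right will be verified to be SoS-derivable from $A$ at a controlled degree, and the stated bound will follow from sum- and product-closure.

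First I would unpack the three ``building block'' nonnegativities. From the hypothesis $A \vdash_{d_1} 0 \le x \le 1$ we get both $A \vdash_{d_1} x \ge 0$ and $A \vdash_{d_1} 1-x \ge 0$. From $A \vdash_{d_2} -z \le y \le z$ we get $A \vdash_{d_2} z-y \ge 0$ and $A \vdash_{d_2} z+y \ge 0$. Finally, because $z \in \mathbb{M}_{\le d}[X]$ is literally a sum of squares, there is a trivial derivation $\vdash_{d} z \ge 0$ (this is presumably what $d_3$ in the stated bound refers to; one may take $d_3=d$).

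Next, by the multiplicative combinator (second bullet of Fact~\ref{fac:SoS-basic0}),
\[
 A \vdash_{d_1 + d_3} (1-x)\cdot z \ge 0, \qquad A \vdash_{d_1 + d_2} x\cdot(z-y) \ge 0, \qquad A \vdash_{d_1 + d_2} x\cdot(z+y) \ge 0.
\]
Summing the first two with the additive combinator (first bullet of Fact~\ref{fac:SoS-basic0}) gives $A \vdash_{d_1 + \max\{d_2,d_3\}} z - xy \ge 0$, i.e.\ $xy \le z$; summing the first and third analogously gives $A \vdash_{d_1 + \max\{d_2,d_3\}} z + xy \ge 0$, i.e.\ $-z \le xy$. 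Conjoining the two yields the stated conclusion $A \vdash_{d_1 + \max\{d_2,d_3\}} -z \le xy \le z$.

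There is no real analytic obstacle here: the two displayed identities are immediate at the polynomial level, so the only work is bookkeeping of degrees through the combinators of Fact~\ref{fac:SoS-basic0}. In particular, no appeal to Putinar's Positivstellensatz is needed, and the fact that $z$ is \emph{already} a sum of squares (rather than merely certifiably nonnegative on $\mathbb{V}(A)$) is what keeps the degree cost for the factor $(1-x)\cdot z$ small and independent of $A$.
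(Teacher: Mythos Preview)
Your proof is correct and essentially identical to the paper's: both use the identity $z-xy=(1-x)z+x(z-y)$ and combine the three building-block nonnegativities via the product and sum rules, with the second inequality obtained by the symmetric identity (the paper phrases this as ``flipping $y$ to $-y$''). Your reading of $d_3$ as the degree needed to certify $z\ge 0$ (namely $d$, since $z\in\mathbb{M}_{\le d}[X]$) matches the paper's usage.
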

\begin{proof}
Note that $z-xy = z(1-x) + x(z-y)$. Now, $A \vdash_{d_1 +d_2} x(z-y) \ge 0$ and $A \vdash_{d_3 + d_1} z(1-x) \ge 0$. Combining these, we get that  $A \vdash_{d_1 + \max\{d_2,d_3\}}   xy \le z$. Flipping $y$ to $-y$, we get the other inequality. 
\end{proof}
\begin{fact}\label{fac:hyper}\emph{\cite{BBHKSZ:12}}
Let $n, d \in \mathbb{N}$ and $d \le n$. For every $S \subseteq [n]$ such that $|S| \le d$, we have an indeterminate $\widehat{\ell}(S)$. For $x \in \{-1,1\}^n$, define $\ell (x) = \sum_{S \subseteq [n] : |S| \le d} \widehat{\ell}(S) \chi_S(x)$. Then, 
$$
\vdash_4 \mathop{\mathbf{E}}_{x \in \{-1,1\}^n} [\ell^4(x)] \le 9^d \left( \mathop{\mathbf{E}}_{x \in \{-1,1\}^n} [\ell^2(x)] \right)^2 
$$
\end{fact}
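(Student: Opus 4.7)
The plan is to prove Fact~\ref{fac:hyper} by induction on $n$ after strengthening the inductive hypothesis to a \emph{two-function} statement: for multilinear polynomials $f, g$ on $\{-1,1\}^n$ of degrees at most $d_f$ and $d_g$ respectively (with Fourier coefficients $\widehat{f}(S), \widehat{g}(S)$ treated as formal indeterminates),
\[
\vdash_4 \quad 3^{d_f + d_g}\, \mathbf{E}_x[f^2(x)]\, \mathbf{E}_x[g^2(x)] \;-\; \mathbf{E}_x[f^2(x) g^2(x)] \;\ge\; 0.
\]
Specializing to $f = g = \ell$ and $d_f = d_g = d$ then yields Fact~\ref{fac:hyper} since $3^{2d} = 9^d$. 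The base case $n = 0$ reduces to $(3^{d_f+d_g} - 1)\widehat{f}(\emptyset)^2 \widehat{g}(\emptyset)^2 \ge 0$, which is manifestly SoS.

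For the inductive step, I would write $f = x_n f_1 + f_0$ and $g = x_n g_1 + g_0$, with $f_1, g_1$ of degrees $\le d_f - 1$, $d_g - 1$ and $f_0, g_0$ of degrees $\le d_f$, $d_g$ in $x_1, \ldots, x_{n-1}$. Using $x_n^2 = 1$ and $\mathbf{E}[x_n] = 0$, elementary expansion gives $\mathbf{E}[f^2] = \mathbf{E}[f_1^2] + \mathbf{E}[f_0^2]$, the analogous identity for $g$, and
\[
\mathbf{E}_x[f^2 g^2] \;=\; \mathbf{E}_{x'}\!\big[f_1^2 g_1^2 + f_1^2 g_0^2 + f_0^2 g_1^2 + f_0^2 g_0^2 + 4\, f_0 f_1 g_0 g_1\big],
\]
where $x' = (x_1, \ldots, x_{n-1})$. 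I would then combine the inductive hypothesis applied to each of the four pairs $(f_i, g_j)$ — yielding $\mathbf{E}[f_i^2 g_j^2] \le 3^{(d_f - [i=1]) + (d_g - [j=1])}\mathbf{E}[f_i^2]\mathbf{E}[g_j^2]$ with an SoS proof — together with the pointwise SoS identity $(f_0 g_1 - f_1 g_0)^2 \ge 0$, which gives $4 f_0 f_1 g_0 g_1 \le 2 f_0^2 g_1^2 + 2 f_1^2 g_0^2$, to handle the residual mixed term.

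Setting $K := d_f + d_g$, bookkeeping these inequalities produces the SoS certificate
\[
3^K\mathbf{E}[f^2]\mathbf{E}[g^2] - \mathbf{E}[f^2 g^2] \;=\; \sum_{(i,j)\in\{0,1\}^2}\!\!\lambda_{ij}\big(3^{(d_f-[i=1])+(d_g-[j=1])}\mathbf{E}[f_i^2]\mathbf{E}[g_j^2] - \mathbf{E}[f_i^2 g_j^2]\big)
\]
\[
{}\;+\; \tfrac{8}{9}\cdot 3^K\,\mathbf{E}[f_1^2]\mathbf{E}[g_1^2] \;+\; 2\,\mathbf{E}[(f_0 g_1 - f_1 g_0)^2],
\]
with multipliers $\lambda_{11}=\lambda_{00}=1$, $\lambda_{10}=\lambda_{01}=3$. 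The four bracketed summands are degree-$4$ SoS by induction; the product $\mathbf{E}[f_1^2]\mathbf{E}[g_1^2] = \sum_{S,T}(\widehat{f_1}(S)\widehat{g_1}(T))^2$ is a sum of squares of degree-$2$ polynomials; and the last term is a sum over $x'$ of squares of degree-$2$ polynomials. Verifying this is a decomposition of the target reduces to matching coefficients of each monomial: for instance, the coefficient of $\mathbf{E}[f_1^2]\mathbf{E}[g_1^2]$ on the right is $3^{K-2} + \tfrac{8}{9}\cdot 3^K = 3^K$, and the coefficient of $\mathbf{E}[f_0 f_1 g_0 g_1]$ is $-4$.

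The main conceptual obstacle — and the reason for the two-function strengthening — is that a direct single-function induction $\ell = x_n f + g$ produces the cross term $6\,\mathbf{E}[f^2 g^2]$ in $\mathbf{E}[\ell^4] = \mathbf{E}[f^4] + 6\mathbf{E}[f^2 g^2] + \mathbf{E}[g^4]$, and the natural Cauchy–Schwarz bound $\mathbf{E}[f^2 g^2] \le \sqrt{\mathbf{E}[f^4]\mathbf{E}[g^4]}$ involves extracting a square root, which is not available in a degree-$4$ SoS proof. The two-function formulation sidesteps this entirely: every inductive invocation stays polynomial in the indeterminates, and the single pointwise SoS identity $(f_0 g_1 - f_1 g_0)^2 \ge 0$ exactly absorbs the $\tfrac{2}{3}\cdot 3^K$ slack produced by the mixed terms $\mathbf{E}[f_1^2 g_0^2]$ and $\mathbf{E}[f_0^2 g_1^2]$, with the constants lining up perfectly.
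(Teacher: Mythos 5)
Your certificate is correct: I checked the coefficient bookkeeping in your displayed identity (e.g.\ the coefficient of $\mathbf{E}[f_1^2]\mathbf{E}[g_1^2]$ is $3^{K-2}+8\cdot 3^{K-2}=3^K$, the mixed terms $\mathbf{E}[f_1^2g_0^2]$ and $\mathbf{E}[f_0^2g_1^2]$ each get $-3+2=-1$, and the cross term $-4\,\mathbf{E}[f_0f_1g_0g_1]$ comes entirely from $2\,\mathbf{E}[(f_0g_1-f_1g_0)^2]$), and every summand is a nonnegative combination of degree-$4$ sums of squares, so the induction closes at degree $4$. The paper itself gives no proof of Fact~\ref{fac:hyper} --- it is imported verbatim from \cite{BBHKSZ:12} --- and your two-function induction $\mathbf{E}[f^2g^2]\le 3^{d_f+d_g}\mathbf{E}[f^2]\mathbf{E}[g^2]$ is essentially the standard argument from that reference, including the reason you give for the strengthening (avoiding the square root in the Cauchy--Schwarz step).
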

\subsection*{Consequences of Putinar's Positivstellensatz}

The following is a consequence of Theorem~\ref{thm:Putinar}.
\begin{corollary}\label{corr:Putinar}
Let $X = (x_1, x_2)$ and $A = \{x_1 \ge \epsilon, x_2 \ge \epsilon, x_1 \le 1-\epsilon, x_2 \le 1- \epsilon \}$. Then, for any $p(X)$ such that $p(X) \ge \epsilon$ on $\mathbb{V}(A)$, there exists an integer $d = d(p)$ such that $A \vdash_d p \ge \epsilon/2$. \end{corollary}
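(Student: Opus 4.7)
The plan is to reduce the corollary to Putinar's Positivstellensatz (Theorem~\ref{thm:Putinar}). Set $q(X) = p(X) - \epsilon/2$, so that $q(X) \ge \epsilon/2 > 0$ strictly on $\mathbb{V}(A)$. If I can apply Theorem~\ref{thm:Putinar} to $q$, I obtain a decomposition $q = r_0 + \sum_{i=1}^{4} r_i \cdot p_i$ with $r_i \in \mathbb{M}[X]$, where $p_1 = x_1 - \epsilon$, $p_2 = x_2 - \epsilon$, $p_3 = (1-\epsilon) - x_1$, $p_4 = (1-\epsilon) - x_2$. This identity is, by inspection of Definition~\ref{def:SoS} (with $A_e = \emptyset$), precisely a degree-$d$ SoS proof of $p - \epsilon/2 \ge 0$ for $d = \max\{\deg r_0,\ \max_i (\deg r_i + 1)\}$, a finite quantity depending only on $p$.

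The substantive step is verifying the Archimedean hypothesis of Theorem~\ref{thm:Putinar}, i.e., exhibiting some $f \in \mathbb{M}(A)$ whose superlevel set $\{f \ge 0\}$ is compact. I aim to prove $2 - x_1^2 - x_2^2 \in \mathbb{M}(A)$, whose superlevel set is a closed disk. Factoring and expanding,
\[
 1 - x_1^2 \;=\; (1 - x_1)(1 + x_1) \;=\; (\epsilon + p_3)\bigl((1+\epsilon) + p_1\bigr)
 \;=\; \epsilon(1+\epsilon) + \epsilon\, p_1 + (1+\epsilon)\, p_3 + p_1 p_3.
\]
The first three summands are manifestly in $\mathbb{M}(A)$ (positive constants are squares of real numbers, hence SoS, and can serve as either $r_0$ or as SoS multipliers of the $p_i$). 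The fourth term $p_1 p_3$ is the only one requiring an argument.

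The main obstacle is precisely that the quadratic module $\mathbb{M}(A)$, unlike the full preorder generated by $A$, is not closed under products, so $p_1 p_3 \in \mathbb{M}(A)$ is not automatic. To handle this, I exploit the constant-sum identity $p_1 + p_3 = 1 - 2\epsilon$ (which is a positive constant for $\epsilon < 1/2$; the case $\epsilon \ge 1/2$ is degenerate since $\mathbb{V}(A)$ is then at most a single point, and the statement can be handled trivially). Multiplying both sides of this identity by $p_1 p_3$ gives
\[
 (1 - 2\epsilon)\, p_1 p_3 \;=\; p_1^2 \cdot p_3 \;+\; p_3^2 \cdot p_1,
\]
and the right-hand side lies in $\mathbb{M}(A)$ because $p_1^2, p_3^2 \in \mathbb{M}[X]$. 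Dividing by the positive constant $1 - 2\epsilon$ yields $p_1 p_3 \in \mathbb{M}(A)$, hence $1 - x_1^2 \in \mathbb{M}(A)$.

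Symmetrically $1 - x_2^2 \in \mathbb{M}(A)$, and summing produces the Archimedean witness $2 - x_1^2 - x_2^2 \in \mathbb{M}(A)$. Theorem~\ref{thm:Putinar} then applies to $q$, and reading off the SoS decomposition completes the proof, with the degree $d$ coming from Putinar's (non-effective) guarantee that such a decomposition exists. No explicit bound on $d$ is needed — it suffices for the corollary that some finite $d = d(p)$ works.
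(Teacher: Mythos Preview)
Your proof is correct and follows essentially the same route as the paper: verify the Archimedean hypothesis of Putinar's theorem by exhibiting an element of $\mathbb{M}(A)$ with compact superlevel set, then apply Theorem~\ref{thm:Putinar} to $p - \epsilon/2$. The paper's Archimedean witness $(1-\epsilon-x_1)^2 p_1 + (x_1-\epsilon)^2 p_3 + (1-\epsilon-x_2)^2 p_2 + (x_2-\epsilon)^2 p_4$ (in your numbering) is built from precisely the identity $p_1^2 p_3 + p_3^2 p_1 = (1-2\epsilon)\,p_1 p_3$ that you isolate, so the only difference is cosmetic --- they use this combination directly as the witness, while you use it to place $p_1 p_3$ in $\mathbb{M}(A)$ en route to $2 - x_1^2 - x_2^2$.
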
 
\begin{proof}
We can define the polynomials $p_1 = x_1 - \epsilon$, $p_2 = 1- \epsilon -x_1$,  $p_3 = x_2 - \epsilon$ and $p_4 = 1-\epsilon - x_2$. Now, note that $q(x,y)$ defined as
\begin{eqnarray*}
q(x_1,x_2) &=& (1-\epsilon-x_1)^2 \cdot p_1 + (x_1- \epsilon)^2 \cdot p_2 +  (1-\epsilon-x_2)^2 \cdot p_3 + (x_2- \epsilon)^2 \cdot p_4 \\
&=& \left(1-2\epsilon\right) \left(-\left(x_1-\frac{1}{2}\right)^2 - \left(x_2-\frac{1}{2}\right)^2 +\frac14 - 2\epsilon(1-\epsilon) \right) 
\end{eqnarray*}
Clearly, $q(x_1,x_2) \in \mathbb{M}(S)$ and  that the set $\{(x_1,x_2) : q(x_1, x_2) \ge 0 \}$ is  a compact set. As a consequence, we can apply Theorem~\ref{thm:Putinar} to get that there is an integer $d=d(p)$ such that $S \vdash_d p - \epsilon/2 \ge 0$. This implies $S \vdash p \ge \epsilon/2$. 
\end{proof}
The following is a corollary of Theorem~\ref{thm:matrix-Putinar}. 
\begin{corollary}\label{corr:matrix-Putinar}
Let $X = (x_1, \ldots, x_n)$ and $A = \{p_1(X) \ge 0, \ldots, p_m(X) \ge 0\}$ be satisfying the conditions in Theorem~\ref{thm:Putinar}. Let $\Gamma \in (\mathbb{R}[X])^{p \times p}$ be such that for $x \in \mathbb{V}(A)$, $\Gamma  \succeq \delta I$ for some $\delta > 0$. Let $v \in (\mathbb{R}[X])^{p}$. Then, if $p = v^T \cdot \Gamma \cdot v$, then $p \in \mathbb{M}(A)$. 
\end{corollary}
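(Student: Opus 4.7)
The plan is to apply the matrix Positivstellensatz (Theorem~\ref{thm:matrix-Putinar}) directly to $\Gamma$ and then exploit the structure of sum-of-squares matrices. Since $\Gamma \succeq \delta I$ on $\mathbb{V}(A)$ with $\delta > 0$, and $A$ satisfies the Archimedean compactness hypothesis needed for Putinar's result, Theorem~\ref{thm:matrix-Putinar} yields a decomposition
\[
\Gamma(X) \;=\; \Gamma_0(X) + \sum_{i=1}^m \Gamma_i(X)\, p_i(X),
\]
in which each $\Gamma_i \in (\mathbb{R}[X])^{p\times p}$ is a sum-of-squares matrix, i.e.\ $\Gamma_i = B_i B_i^T$ for some polynomial matrix $B_i \in (\mathbb{R}[X])^{p\times q_i}$.

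The key observation is that conjugating any SoS matrix by a polynomial vector yields an SoS polynomial: writing $w_i := B_i^T v \in (\mathbb{R}[X])^{q_i}$, we have
\[
v^T \Gamma_i v \;=\; v^T B_i B_i^T v \;=\; w_i^T w_i \;=\; \sum_{j=1}^{q_i} (w_i)_j^2,
\]
which is manifestly an element of $\mathbb{M}[X]$. Substituting the Positivstellensatz decomposition into $q := v^T \Gamma v$ gives
\[
q \;=\; v^T \Gamma_0 v \;+\; \sum_{i=1}^m (v^T \Gamma_i v)\, p_i \;=\; r_0 \;+\; \sum_{i=1}^m r_i\, p_i,
\]
where $r_0 := v^T \Gamma_0 v$ and $r_i := v^T \Gamma_i v$ all lie in $\mathbb{M}[X]$ by the previous display. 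This is exactly the defining form of $\mathbb{M}(A)$, so $q \in \mathbb{M}(A)$ as required.

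The proof is essentially a one-step reduction to Theorem~\ref{thm:matrix-Putinar}, so there is no real obstacle; the only subtlety worth highlighting is the passage from matrix-SoS to scalar-SoS via the $B_i B_i^T$ factorization, which is precisely why the corollary needs the matrix version of Putinar's theorem rather than just the scalar version. (Note that the statement of the corollary overloads the symbol $p$ for both the matrix dimension and the polynomial $v^T \Gamma v$; in the argument above I renamed the polynomial $q$ to avoid ambiguity.)
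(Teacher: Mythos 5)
Your proof is correct and follows essentially the same route as the paper: apply Theorem~\ref{thm:matrix-Putinar} to decompose $\Gamma$ as $\Gamma_0 + \sum_i \Gamma_i p_i$ with each $\Gamma_i$ a matrix sum-of-squares, then observe that $v^T \Gamma_i v = (B_i^T v)^T(B_i^T v)$ is a scalar sum-of-squares. Your write-up is in fact slightly cleaner than the paper's (which has a typo writing $B_0$ in every term), and your remark about the overloaded symbol $p$ is apt.
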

\begin{proof}
First, by applying Theorem~\ref{thm:matrix-Putinar}, we get $\Gamma =  \Gamma_0(X) + \sum_{i=1}^m \Gamma_i(X) \cdot p_i(X)$. Let us assume that $\Gamma_i = B_i^T \cdot B_i$. Then, 
$$
p= v^T \cdot \Gamma \cdot v = v^T( \Gamma_0(X) + \sum_{i=1}^m \Gamma_i(X) \cdot p_i(X)) v = (B_0 \cdot v)^T \cdot (B_0 \cdot v) +  \sum_{i=1}^m (B_0 \cdot v)^T \cdot (B_0 \cdot v) \cdot p_i(x)$$
This proves the claim.
\end{proof}

\section{Missing proofs from Section~\ref{sec:majstable}}\label{app:missing-SoS-proof}
\begin{proofof}{(of Claim~\ref{clm:smooth1})}
$$
f_1(x) f_1(y) - f(x) f(y) = (\epsilon^2 - 2\epsilon) f(x) f(y)  + \frac{\epsilon^2}{4} + \epsilon (1-\epsilon) (f(x) + f(y))
$$
Hence, $A_p \vdash_2 f_1(x) f_1(y) - f(x) f(y) \le 2\epsilon$. On the other hand, note that $1- f(x) f(y)  = (1-f(x))f(y) + (1-f(y))$ and hence $A_p \vdash_2 f(x) f(y) \le 1$.  This implies that
$ A_p \vdash_2 f(x) f(y) -f_1(x) f_1(y) \le 2 \epsilon $. 
\end{proofof}

\begin{proofof}{(of Claim~\ref{clm:simplify})}
%\begin{claim}
%$$ \mathop{\mathbb{E}}_{x \in_R \{-1,1\}, y \sim_{\rho} x} [\tilde{J}(f(x), g(y))]  =  \sum_{m,n : m+ n \textrm{ is even}} d_{m,n} \widehat{f}^m(1) \widehat{g}^n(1) \cdot \left( \frac{1+\rho}{2} + (-1)^m \cdot \frac{1-\rho}{2} \right) 
%$$
%where
%$$
%d_{m,n} = \sum_{m_1 \ge m ; n_1 \ge n} c_{m_1, n_1} \widehat{f}^{m_1-m}(0)  \widehat{g}^{n_1-n}(0) \binom{m_1}{m} \binom{n_1}{n}
%$$
%\end{claim}
We begin by observing that 
\begin{align*}
\widetilde{J}(\widehat{h_0}(0)+x, \widehat{h_1}(0)+y) &=  \sum_{m,n} \mu_{\{m,n\}} \cdot (\widehat{h_0}(0)+x)^m  \cdot(\widehat{h_1}(0)+y)^n = \sum_{m,n} \nu_{m,n} \cdot x^m \cdot y^n
\end{align*}
As a consequence, we get that 
\begin{align*}
\widetilde{J}(\widehat{h_0}(0)+\widehat{h_0}(1), \widehat{h_1}(0)+\widehat{h_1}(1)) &=  \sum_{m,n} \nu_{m,n} \cdot \widehat{h_0}^m(1) \cdot \widehat{h_1}^n(1)
\\
\widetilde{J}(\widehat{h_0}(0)-\widehat{h_0}(1),\widehat{h_1}(0)-\widehat{h_1}(1)) &= \sum_{m,n} (-1)^{m+n} \nu_{m,n}  \cdot \widehat{h_0}^m(1) \cdot \widehat{h_1}^n(1)
\end{align*}
Adding these equations, we get 
\begin{equation}\label{eq:1}
\widetilde{J}(\widehat{h_0}(0)+\widehat{h_0}(1), \widehat{h_1}(0)+\widehat{h_1}(1)) + \tilde{J}(\widehat{h_0}(0)-\widehat{h_0}(1),\widehat{h_1}(0)-\widehat{h_1}(1)) = 2 \cdot \sum_{\substack{m,n  \\ m+ n \textrm{ is even}}} \nu_{m,n} \cdot \widehat{h_0}^m(1) \cdot \widehat{h_1}^n(1)
\end{equation}
Similarly, we have that 
\begin{align*}
\widetilde{J}(\widehat{h_0}(0)-\widehat{h_0}(1), \widehat{h_1}(0)+\widehat{h_1}(1)) = \sum_{m,n} (-1)^{m} \nu_{m,n} \cdot \widehat{h_0}^m(1) \cdot \widehat{h_1}^n(1)
\\
\widetilde{J}(\widehat{h_0}(0)+\widehat{h_0}(1), \widehat{h_1}(0)-\widehat{h_1}(1)) =  \sum_{m,n} (-1)^{n} \nu_{m,n}  \cdot \widehat{h_0}^m(1) \cdot \widehat{h_1}^n(1)
\end{align*}
Thus, 
\begin{equation}\label{eq:2}
\widetilde{J}(\widehat{h_0}(0)-\widehat{h_0}(1), \widehat{h_1}(0)+\widehat{h_1}(1)) + \tilde{J}(\widehat{h_0}(0)+\widehat{h_0}(1), \widehat{h_1}(0)-\widehat{h_1}(1)) =  2 \cdot \sum_{\substack{m,n  \\ m+ n \textrm{ is even}}} (-1)^m \nu_{m,n} \cdot  \widehat{h_0}^m(1) \cdot \widehat{h_1}^n(1)
\end{equation}
%Now, observe that
%\begin{eqnarray*}
%\mathbf{E}_{x \in_R \{-1,1\}, y \sim_{\rho} x} [\tilde{J}(f(x), g(y))] &=& \frac{1+\rho}{4} \cdot \left( J(a+ a', b+ b') + J(a-a', b-b') \right) \\ &+&\frac{1-\rho}{4} \cdot \left( J(a- a', b+ b') + J(a+a', b-b') \right) 
%\end{eqnarray*}
Hence, combining  (\ref{eq:1}) and (\ref{eq:2}), we get that 
$$
\mathop{\mathbf{E}}_{\substack{x \in_R \{-1,1\} \\ y \sim_{\rho'} x}} [\widetilde{J}(h_0(x), h_1(y))] = \sum_{\substack{m,n \\ m+ n \textrm{ is even}}} \nu_{m,n} \cdot \widehat{h_0}^m(1) \cdot \widehat{h_1}^n(1) \cdot \left( \frac{1+\rho'}{2} + (-1)^m \cdot \frac{1-\rho'}{2} \right) 
$$
\end{proofof}

\begin{proofof}{(of Claim~\ref{clm:bound1})}
For $m +n \ge 4$ and $m_1 \ge m$ and $n_1 \ge n$, we define $\Gamma_{m_1,n_1, m , n}$ as follows : 
$$
\Gamma_{m_1,n_1, m , n} = \mu_{m_1, n_1}  \binom{m_1}{m} \binom{n_1}{n} \cdot  \widehat{h_0}^m(1) \cdot \widehat{h_1}^n(1) \cdot \widehat{h_0}^{m_1-m}(0) \cdot  \widehat{h_1}^{n_1-n}(0)\cdot \left( \frac{1+\rho'}{2} + (-1)^m \cdot \frac{1-\rho'}{2} \right) 
 $$
 Next, define the set of constraints $A_m$ as
 $$
 A_m = \{0 \le \widehat{h_0}(0) \le 1 \ , \ 0 \le \widehat{h_1}(0) \le 1 \ , \ -1 \le \widehat{h_0}(1) \le 1 \ , \ -1 \le \widehat{h_1}(1) \le 1  \}
 $$ 
 Now, it is easy to see that $A \vdash_1 A_m$. Hence, by using the third bullet of Fact~\ref{fac:SoS-basic0}, if for any $p$ and $d \in \mathbb{N}$, $A_m \vdash_d p \ge 0$, then $A \vdash_d p \ge 0$. We shall be using this fact consistently throughout this proof.
 Applying Fact~\ref{fac:SoS0}, we get that 
 $$
 A \vdash_{m_1 -m +1} 0 \le \widehat{h_0}^{m_1-m}(0) \le 1 \quad \quad  A \vdash_{n_1 -n +1} 0 \le \widehat{h_1}^{n_1-n}(0) \le 1
 $$
 and hence we have that 
 \begin{equation}\label{eq:hlimit2}
  A \vdash_{m_1+ n_1 -m-n +2} \quad 0 \le \widehat{h_0}^{m_1-m}(0)  \cdot \widehat{h_1}^{n_1-n}(0)\le 1 
 \end{equation}
Now, we consider two possibilities : Either $m, n \ge 2$ or $\max\{m, n \} \ge 3$.  In the first case, using  Fact~\ref{fac:SoS1}, we get that 
\begin{equation}\label{eq:h-limit1}
A \vdash_{1 + \max \{2m, 2n \}}  -(\widehat{h_0}^{4}(1) + \widehat{h_1}^4(1)) \le \widehat{h_0}^m(1) \cdot \widehat{h_1}^n(1) \le \widehat{h_0}^{4}(1) + \widehat{h_1}^4(1) 
\end{equation}
Next, consider the other case i.e.   when $\max\{m,n\} \ge 3$. Without loss of generality, assume $m \ge 3$ and $n=1$. Then, by  Fact~\ref{fac:SoS2}, we get that
$$
A \vdash_{ m+2 }  -(\widehat{h_0}^{4}(1) + \widehat{h_1}^4(1)) \le \widehat{h_0}^m(1) \cdot \widehat{h_1}^n(1) \le \widehat{h_0}^{4}(1) + \widehat{h_1}^4(1) 
$$
Now, combining (\ref{eq:hlimit2}) along with  an application of  Fact~\ref{fac:SoS3}, we get that 
$$
A \vdash_{m+n + m_1+n_1  + 3} -(  \widehat{h_0}^{4}(1) + \widehat{h_1}^4(1) )\le  \widehat{h_0}^m(1) \cdot \widehat{h_1}^n(1) \cdot \widehat{h_0}^{m_1-m}(0) \cdot \widehat{h_1}^{n_1-n}(0) \le \widehat{h_0}^{4}(1) + \widehat{h_1}^4(1) 
$$ 
Now, recalling that $0 \le m_1,  n_1 , m , n  \le K $, $\binom{m_1}{m}, \binom{n_1}{n} \le 2^K$, $|\mu_{m,n}| \le c$ and $|\rho'| \le 1$, we get that
$$
A \vdash_{4K + 3} - c 2^{2K} (  \widehat{h_0}^{4}(1) + \widehat{h_1}^4(1) )\le \Gamma_{m_1,n_1, m , n}  \le c 2^{2K}( \widehat{h_0}^{4}(1) + \widehat{h_1}^4(1))
$$
As, $$\sum_{\substack{m,n : m+ n \textrm{ is even}  \\ \textrm{and }m+n \ge 4}} \nu_{m,n} \widehat{h_0}^m(1) \widehat{h_1}^n(1) \cdot \left( \frac{1+\rho'}{2} + (-1)^m \cdot \frac{1-\rho'}{2} \right) = \sum_{ \substack{m,n : m+ n \textrm{ is even}  \\ m+n \ge 4 \\ K \ge m_1\ge m  \ K \ge n_1 \ge n}}  \Gamma_{m_1,n_1, m , n} $$
As a result, we can conclude that
$$
A \vdash_{4K + 3} - c 2^{2K} K^4 (  \widehat{h_0}^{4}(1) + \widehat{h_1}^4(1) )\le  \sum_{\substack{m,n : m+ n \textrm{ is even}  \\ \textrm{and }m+n \ge 4}} \nu_{m,n} \widehat{h_0}^m(1) \widehat{h_1}^n(1)  \le c 2^{2K} K^4 (  \widehat{h_0}^{4}(1) + \widehat{h_1}^4(1) )
$$
\end{proofof}

\begin{proofof}{(of Claim~\ref{clm:lower-bound-Lambda})}
We begin by defining  the matrix $\widetilde{M}$ as follows : 
$$
\widetilde{M} =   \left(\begin{array}{cc} \frac{\partial^2 \tilde{J}(x,y)}{\partial x^2} & \rho' \frac{\partial^2 \tilde{J}(x,y)}{\partial x \partial y} \\\rho' \frac{\partial^2 \tilde{J}(x,y)}{\partial x \partial y}  & \frac{\partial^2 \tilde{J}(x,y)}{\partial y^2}\end{array}\right)
$$
Put $\beta = 2\epsilon$. Now, let us define $\widetilde{M_1} = \widetilde{M} + \beta I $. Using Claim~\ref{clm:negative-semidefinite} and Claim~\ref{clm:bernstein}, for $\rho' \in (-1,0)$, 
and $(x,y) \in [\epsilon, 1-\epsilon]^2$, we can say that $\widetilde{M_1} \succeq \epsilon \cdot I$. Hence, using Corollary~\ref{corr:matrix-Putinar}, $\exists d_{\gamma}'$ such that we have the following 
$$
A \vdash_{d_{\gamma}'} \widehat{h_0}^2(1) \left( \frac{\partial^2 \tilde{J}(x,y)}{\partial x^2}  + \beta\right) + 2 \rho  \cdot \widehat{h_0}(1)  \cdot\widehat{h_1}(1) \cdot  \frac{\partial^2 \tilde{J}(x,y)}{\partial x \partial y} + \widehat{h_1}^2(1) \left( \frac{\partial^2 \tilde{J}(x,y)}{\partial y^2}  + \beta\right) \ge 0  $$
$$
\equiv A \vdash_{d_{\gamma}'} \widehat{h_0}^2(1) \cdot \frac{\partial^2 \tilde{J}(x,y)}{\partial x^2}   + 2 \rho  \cdot \widehat{h_0}(1) \cdot \widehat{h_1}(1) \cdot  \frac{\partial^2 \tilde{J}(x,y)}{\partial x \partial y} + \widehat{h_1}^2(1)\cdot \frac{\partial^2 \tilde{J}(x,y)}{\partial y^2}  \ge -\beta (\widehat{h_0}^2(1)+\widehat{h_1}^2(1))
$$

Dividing by $2$ on both sides, finishes the proof. Note that the reason the degree $d'_{\gamma}$ depends only on $\epsilon$ and $\rho$ is because from  Corollary~\ref{corr:matrix-Putinar}, the degree $d'_{\gamma}$ depends on $\epsilon$, $\rho$ and  the polynomial $\tilde{J}$ which again in turn depends only on $\epsilon$ and $\rho$. 
\end{proofof}

\begin{proofof}{(of Lemma~\ref{lem:roll-out-induction})}
The proof is by induction.  We introduce the following notation : For any $z \in \{-1,1\}^i$, we use $1 \cdot z \in \{-1,1\}^{i+1}$ to denote the string $z$ with a $1$ prefixed to it. Likewise, we define $-1 \cdot z \in \{-1,1\}^{i+1}$ to denote the string $z$ with a $-1$ prefixed to it.
Next, for any  pairs of strings $z_{-1}, z_1 \in \{-1,1\}^i$ and $j ,k\in \{-1,1\}$, we define the indeterminate, $h_i(j) = \mathbf{E}_{z \in \{-1,1\}^{n-i-1}} [g(z \cdot j \cdot z_i)]$.   Define $A_{z_{-1}, z_{1}} =  \cup_{j ,k\in \{-1,1\}}  \{\epsilon \le h_i(j) \le 1-\epsilon \}$.  It is trivial to see that $A'_p \vdash_1 A_{z_{-1}, z_{1}} $.

 Now, using Lemma~\ref{lem:taylor}, for any two strings $z_1, z_{-1} \in \{-1,1\}^i$, we get that 
 \begin{eqnarray*}
 A_{z_{-1}, z_{1}} &\vdash_{d_\gamma}& \mathop{\mathbf{E}}_{\substack{x \in \{-1,1\} \\ y \sim_{\rho'} x}} \widetilde{J}(\mathbf{E} [g_{x \cdot z_1}], \mathbf{E}[g_{y \cdot z_{-1}}]) 
 \ge \widetilde{J}(\mathbf{E}[g_{z_1}], \mathbf{E}[g_{z_{-1}}])  - \epsilon \left( \widehat{g}_{z_1} ^2(n-i)+\widehat{g}_{z_{-1}} ^2(n-i)\right) \\ &-& c_{\gamma} \left( \widehat{g}_{z_{-1}} ^4(n-i) +\widehat{g}_{z_1} ^4(n-i) \right)
 \end{eqnarray*}
 As a consequence of the third bullet of Fact~\ref{fac:SoS-basic0}, the left hand side of `$\vdash$' can be replaced by $A'_p$.  
 Now, for any given $z_1, z_{-1}  \in \{-1,1\}^{i}$, let $d(z_1, z_{-1})$ be its Hamming distance.  We multiply the inequality by $(\frac{1+\rho'}{4})^{i-d(z_1,z_{-1})} \cdot (\frac{1-\rho'}{4})^{d(z_1,z_{-1})}$. 
 Note that we will consider the case when $i=0$ and hence $z_1$ and $z_{-1}$ is the empty string. In this scenario, $d(z_1 , z_{-1})$ is defined to be zero. We now consider all the above inequalities generated by choosing $(z_1, z_{-1}) \in \{-1,1\}^i \times \{-1,1\}^i$ for $0 \le i <n$.  Next, we add all these inequalities (Fact~\ref{fac:deg-increase}) but the degree of the resulting SoS proof remains $d_\gamma$.  
 Now, it is easy to see that all terms of the form : $\widetilde{J}(\mathbf{E} [g_{z_1}] , \mathbf{E} [g_{z_{-1}}])$ cancel out except when $z_1, z_{-1} \in \{-1,1\}^n$ or $z_1 = z_{-1}= \phi$.   $$
\mathop{\mathbf{E}}_{\substack{x \in \{-1,1\}^n  \\ y \sim_{\rho'} x}} [\widetilde{J}(g(x), g(y))]  \ge \widetilde{J}(\mathbf{E}[g(x)] ,  \mathbf{E}[g(y)]) + \textrm{error terms}
 $$
 
 We now compute the error terms. First, we sum up the error coming from the term $\epsilon \left( \widehat{g}_{z_1} ^2(n-i)+\widehat{g}_{z_{-1}} ^2(n-i)\right)$.  For any given $z_1 \in \{-1,1\}^i$, consider the term $\beta  \widehat{g}_{z_1} ^2(n-i)$. For every $z_{-1} \in \{-1,1\}^i$, it occurs with the factor $\frac{1+\rho'}{4}^{i-d(z_1,z_{-1})} \cdot \frac{1-\rho'}{4}^{d(z_1,z_{-1})}$. Since there are exactly $\binom{i}{k}$  strings $z_{-1} \in \{-1,1\}^i$ such that $d(z_1, z_{-1}) =k$, hence we get that the total weight associated is 
 $$
 \sum_{k=0}^i \binom{i}{k}\left(\frac{1+\rho'}{4}\right)^{i-k}\left(\frac{1-\rho'}{4}\right)^{k} = 2^{-i}
 $$
 Thus, we get that the first kind of error terms contribute $\epsilon \left( \widehat{g}_{z_1} ^2(n-i)+\widehat{g}_{z_{-1}} ^2(n-i)\right)$. The calculation of the ``fourth degree" error terms is exactly identical resulting in the final theorem. 
\end{proofof}
%\subsection{Proof of Claim~\ref{clm:final-1}}
\begin{proofof}{(of Claim~\ref{clm:final-1})}
We have
\begin{align*}
 \vdash_6 \sum_{i=0}^{n-1}  \mathbf{E}_{x \in \{-1,1\}^{i}} [\widehat{g}_x^3(n-i) \widehat{h}_x(n-i)]   \le \frac{\sqrt{\eta}}{2} \left( \sum_{i=0}^{n-1} \mathbf{E}_{x \in \{-1,1\}^i} \widehat{g}_x^6(n-i) \right)  + \frac{ \left( \sum_{i=0}^{n-1} \mathbf{E}_{x \in \{-1,1\}^i} \widehat{h}_x^2(n-i) \right)}{2\sqrt{\eta}}  \end{align*}
Next, recall that using Fact~\ref{fac:temp}, we have $A_p \vdash_3 \sum_{i=0}^{n-1} \mathbf{E}_{x \in \{-1,1\}^i} \widehat{h}_x^2(n-i) \le \eta$. Similarly, from (\ref{eq:prelim-obs2}) and Claim~\ref{clm:subsec}, we have that $A_p \vdash -1\le \hat{g}_x(n-i) \le 1$. 
This in turn implies  $A_p \vdash_7 \widehat{g}_x^6(n-i) \le  \widehat{g}_x^2(n-i)$ (combining Fact~\ref{fac:SoS-10} and Fact~\ref{fac:SoS-2})

Combining all the above, we get 
$$
A_p \vdash_7 \sum_{i=0}^{n-1} \mathbf{E}_{x \in \{-1,1\}^i} [\widehat{g}_x^3(n-i) \widehat{h}_x(n-i)] \le \frac{\sqrt{\eta}}{2} \cdot \left(\sum_{i=0}^{n-1} \mathbf{E}_{x \in \{-1,1\}^i}\widehat{g}_x^2(n-i)\right) + \frac{\sqrt{\eta}}{2} 
$$
However, again we have that $A_p \vdash_3 \sum_{i=0}^{n-1} \mathbf{E}_{x \in \{-1,1\}^i}\hat{g}_x^2(n-i) = \sum_{|S|>0} \widehat{g}^2(S) \le \mathbf{E} [g^2(x)] \le 1$.  This gives us the claim. 
\end{proofof}
%\subsection{Proof of Claim~\ref{clm:final-2}}
\begin{proofof}{(of Claim~\ref{clm:final-2})}
We have 
\begin{align*} 
 &\vdash_4 \sum_{i=0}^{n-1}  \mathbf{E}_{x \in \{-1,1\}^{i}} [\widehat{g}_x^2(n-i) \widehat{\ell}_x^2(n-i)] \le   \frac{\sqrt{\eta}}{2} \left( \sum_{i=0}^{n-1} \mathbf{E}_{x \in \{-1,1\}^i} \widehat{g}_x^4(n-i) \right)  + \frac{ \left( \sum_{i=0}^{n-1} \mathbf{E}_{x \in \{-1,1\}^i} \widehat{\ell}_x^4(n-i) \right)}{2\sqrt{\eta}}  \\
&\vdash_4    \sum_{i=0}^{n-1} \mathbf{E}_{x \in \{-1,1\}^i} \widehat{\ell}_x^4(n-i) \le 9^{d_\eta} \cdot \left(  \sum_{i=0}^{n-1} \left(\mathbf{E}_{x \in \{-1,1\}^i} \widehat{\ell}_x^2(n-i) \right)^2 \right) \  \textrm{ (using Fact~\ref{fac:hyper}) }
\end{align*}
  As in the proof of Claim~\ref{clm:final-1}, we can show $A_p \vdash_5 \sum_{i=0}^{n-1} \mathbf{E}_{x \in \{-1,1\}^i} \widehat{g}_x^4(n-i)  \le 1$.  
 Combining all the above, we get 
 $$
 A_p \vdash_5 \sum_{i=0}^{n-1} \mathbf{E}_{x \in \{-1,1\}^{i}} [\widehat{g}_x^2(n-i) \widehat{\ell}_x^2(n-i)] \le \sqrt{\eta}  + \frac{9^{d_\eta}}{\sqrt{\eta}} \left(  \sum_{i=0}^{n-1} \left(\mathbf{E}_{x \in \{-1,1\}^i} \widehat{\ell}_x^2(n-i) \right)^2 \right)
 $$
  Again, observe that
 \begin{eqnarray*}\vdash_2\mathbf{E}_{x \in \{-1,1\}^i} \widehat{\ell}_x^2(n-i) = \sum_{S \subseteq \{n-i , \ldots, n \} : n-i \in S} \widehat{\ell}^2(S) &\le&   \sum_{S \subseteq[n] : n-i \in S} \widehat{\ell}^2(S) =\sum_{S \subseteq[n] : n-i \in S : |S| \le d_\eta} \widehat{g}^2(S) \\&\le&  \sum_{S \subseteq[n] : n-i \in S : |S| \le d_\eta} \widehat{f}^2(S)  \le  \Inf_{n-i}^{\le d_{\eta}}(f)\end{eqnarray*}
 By using the second bullet of Fact~\ref{fac:SoS-basic0}, we can also get
 $$
 \vdash_4 \sum_{i=0}^{n-1} \left(\mathbf{E}_{x \in \{-1,1\}^i} \widehat{\ell}_x^2(n-i) \right)^2 \le \sum_{i=0}^{n-1}  \left( \Inf_{n-i}^{\le d_{\eta}}(f) \right)^2
 $$
 Combining these, we get the final result. 
\end{proofof}
%\subsection{Proof of Claim~\ref{clm:final-3}}
\begin{proofof}{(of Claim~\ref{clm:final-3})}
 \begin{eqnarray*}
 \vdash_6 \sum_{i=0}^{n-1}  \mathop{\mathbf{E}}_{x \in \{-1,1\}^{i}} [\widehat{g}_x^2(n-i) \widehat{h}_x(n-i) \widehat{\ell}_x(n-i)] &\le&   \frac{1}{2\sqrt{\eta}} \left( \sum_{i=0}^{n-1} \mathop{\mathbf{E}}_{x \in \{-1,1\}^i} \widehat{h}_x^2(n-i) \right)  \\ &+& \frac{ \sqrt{\eta}\left( \sum_{i=0}^{n-1} \mathop{\mathbf{E}}_{x \in \{-1,1\}^i}\widehat{g}_x^4(n-i)  \widehat{\ell}_x^2(n-i) \right)}{2} \end{eqnarray*}
From the proof of Claim~\ref{clm:final-1}, we know that $A_p \vdash_3 \sum_{i=0}^{n-1} \mathop{\mathbf{E}}_{x \in \{-1,1\}^i} \widehat{h}_x^2(n-i)  \le \eta$. Thus, we get
$$
 \vdash_6 \sum_{i=0}^{n-1}  \mathop{\mathbf{E}}_{x \in \{-1,1\}^{i}} [\widehat{g}_x^2(n-i) \widehat{h}_x(n-i) \widehat{\ell}_x(n-i)] \le   \frac{\sqrt{\eta}}{2}   + \frac{ \sqrt{\eta}\left( \sum_{i=0}^{n-1} \mathop{\mathbf{E}}_{x \in \{-1,1\}^i}\widehat{g}_x^4(n-i)  \widehat{\ell}_x^2(n-i) \right)}{2}
$$
However, 
$$
\vdash_8 \sum_{i=0}^{n-1} \mathop{\mathbf{E}}_{x \in \{-1,1\}^i}\widehat{g}_x^4(n-i)  \widehat{\ell}_x^2(n-i)  \le\frac{ \sum_{i=0}^{n-1} \mathop{\mathbf{E}}_{x \in \{-1,1\}^i}\widehat{g}_x^8(n-i)  +  \sum_{i=0}^{n-1} \mathop{\mathbf{E}}_{x \in \{-1,1\}^i}\widehat{\ell}_x^4(n-i) }{2}
$$
Following the same proof as in the proof of Claim~\ref{clm:final-1}, we can show that 
$$
A_p \vdash_9 \sum_{i=0}^{n-1} \mathop{\mathbf{E}}_{x \in \{-1,1\}^i}\widehat{g}_x^8(n-i) \le 1
$$
Similarly, from the argument in the proof of Claim~\ref{clm:final-2}, we can show that 
$$
A_p \vdash_4  \sum_{i=0}^{n-1} \mathop{\mathbf{E}}_{x \in \{-1,1\}^i}\widehat{\ell}_x^4(n-i) \le 9^{d_{\eta}} \cdot \left(\sum_{i=0}^{n-1}  \left( \Inf_{n-i}^{\le d_{\eta}}(f) \right)^2 \right)
$$
Combining these, we have the proof. 
\end{proofof}

\end{document}